\documentclass[runningheads]{llncs}

\input{macro}

\begin{document}
    \title{Fine-grained Causal Reversibility for Asynchronous Channel-based Programming%
    \thanks{An extended version of the paper accepted at ICTAC 2026. This version includes complete proofs and additional technical material.}}

    \author{
        Shunya Oguchi\inst{1} \and
        Shoji Yuen\inst{1}\orcidID{0000-0003-2642-0647} \and\\
        Nobuko Yoshida\inst{2}\orcidID{0000-0002-3925-8557} \and\\
        Claudio Antares Mezzina\inst{3}\orcidID{0000-0003-1556-2623}
    }

    \authorrunning{Shunya Oguchi et al.}
    \titlerunning{Causal Reversibility for Asynchronous Channel-based Programming}

    \institute{Graduate School of Informatics, Nagoya University, Japan\\
    \email{\{oguchi321,yuen\}@sqlab.jp}
    \and
    Department of Computer Science, University of Oxford, England, United Kingdom
    \email{nobuko.yoshida@cs.ox.ac.uk}\\
    \and
    University of Bari Aldo Moro, Italy\\
    \email{claudio.mezzina@uniba.it}
    }

    \maketitle

    \begin{abstract}
        Causal reversibility has emerged as an effective technique for debugging concurrent systems.
        In particular, rolling back and replaying a concurrent program with causal consistency has been found very helpful in debugging concurrency bugs.
        In channel-based communication, queue ordering creates dependencies that prevent
        causally independent actions from being rolled back and replayed.
        To enable efficient rollback and replay without being constrained by queue dependencies, it is necessary to analyse causal dependencies between messages in queues and reorder independent messages.
        This paper presents revGo, a core of the Go programming language
        \sy{assuming unbounded asynchronous channels.}
        Our rollback-and-replay semantics allows us to \sy{reorder} messages in the queue if they are not causally related in the forward execution.
        \sy{It is shown that reordering independent messages generates no configuration with non-reachable processes.
        By reordering independent messages,} rollback and replay are implementable with minimality by assigning unique keys to communications.

        \keywords{Golang \and Concurrency \and Reversible computation \and Causal-consistent reversibility \and Channel communication \and Rollback and replay}
    \end{abstract}


    \section{Introduction}
    \label{sec:introduction}
    
    Programming distributed systems is inherently challenging due to the presence of concurrency, partial failures, and unpredictable communication delays.
    Even when individual components are simple, their interactions can lead to complex global behaviours that are difficult to reason about and even harder to debug.
    Modern programming languages increasingly provide concurrency and communication primitives to help developers structure distributed programs,
    but these abstractions do not eliminate the fundamental difficulty of understanding and diagnosing erroneous behaviours that arise from concurrent interactions.

    Such erroneous behaviours are known as {\em concurrency bugs}~\cite{conc_bugs}.
    Concurrency bugs are particularly problematic because they are typically non-deterministic and hard to reproduce: rerunning the program may not trigger the same faulty behaviour, and small changes in timing or load may hide the bug altogether.
    Undoing an action in a concurrent or distributed setting should not arbitrarily erase unrelated progress; rather, it should affect only those actions that are causally dependent on it.
    This observation is at the basis of \emph{causally consistent reversible debugging}~\cite{GiachinoLM14,LaneseG24},
    which preserves the causal structure of concurrent computations during rollback and replay.
    In the setting of actor-based languages, several reversible debuggers have been developed for message-passing systems, notably for Erlang~\cite{actorverse,cauder}.
    Among these, CauDEr~\cite{cauder} supports causal-consistent rollback and replay of concurrent executions in an actor system.
    CauDEr exploits Erlang's asynchronous message-passing semantics and mailbox-based communication to track causal dependencies between send and receive actions,
    enabling selective undo while preserving causal correctness.
    \sy{While communications in Erlang are by the mailbox, Go~\cite{donovan2015go} processes communicate by channel queues.
    Channel queues have the dependencies due to message ordering in channels, which do not exist in the Erlang semantics.
    To handle this dependency of message queues, rollback and replay require safely reordering independent messages in channel queues.}

    This work introduces revGo, a \og{calculus} with a rollback-and-replay semantics \og{that assumes unbounded asynchronous channels and abstracts away from Go's bounded and synchronized channels},
    aiming at supporting effective reversible debugging for concurrent and distributed Go programs.
    Unlike message-passing communication in actor-based models,
    \emph{channel-based communication}~\cite{MiGo} introduces additional causal structures due to channel queues.

    By enriching Go's operational semantics with histories for processes and channels,
    revGo enables not only reversible executions but also selective rollback-and-replay of executions,
    allowing developers to focus on the actions relevant to a failure.
    In revGo, dependencies among messages in channel queues must be considered when reversing executions.
    In rollback-and-replay semantics in particular, messages in channel queues that are irrelevant to
    a failure have to be abstracted away.

    The following example illustrates the potential uncertainty of channel-based communications
    when a channel queue is shared among multiple threads.
    The behaviour is highly dependent on the runtime, such as goroutine scheduling and channel queue capacities.
    \begin{example}
        \label{ex:multi}
        \quad
        \begin{myfigure}[-2.5em][-2.3em]
            \begin{center}
                
    \begin{minipage}{0.9\textwidth}
        \begin{lstlisting}[basicstyle=\ttfamily\small,columns=fullflexible,escapeinside={(*@}{@*)}]
 func S(c chan int, n int) { c <- n; S(c, n) }
 func R(c chan int) { n := <-c; if (*@\truered{2/n}@*) == 2 { R(c) } }
 func main() { c := make(chan int, *); go S(c, 0); go S(c, 1); R(c) }
        \end{lstlisting}
    \end{minipage}

            \end{center}
            \caption{A program example where a division-by-zero error may arise.}
            \label{fig:running}
        \end{myfigure}
    \end{example}

    In the example, traditional debugging techniques cannot easily isolate the faulty interaction, since replaying the program may lead to a different message ordering,
    and stepping backwards would undo many actions that are causally independent of the failure.
    The challenge is therefore to identify and undo only the concurrent actions that caused \texttt{main} to receive $0$.

    Our main contribution is to introduce a causally-consistent equivalence over channel queues.
    The semantics identifies channel queues via equivalence classes, e.g. all possible message reorderings.
    To ensure that message reorderings are sound with respect to reachability,
    we attach a unique \emph{key}~\cite{CCSK,RevMiGo} to each message to track independence in behaviour.
    As long as communication actions involving the keys are independent, the messages are independent.
    In the above example, all messages with 1 are independent of those with 0 since the two processes \texttt{S} are independent.
    Due to independence, even if many messages with 1 are communicated before or after the message with 0 that caused the error, all the messages with 1 are irrelevant to the error.

    \sy{It is shown that reordering independent messages does not generate unreachable processes, thereby establishing the soundness of the reordering technique.
    This property enables greater flexibility in managing channel queues during reversible debugging to reach a target event.
    Under the reordering of independent messages,
    the rollback-and-replay semantics exhibits both soundness and minimality in reaching a rollback action, as observed in Erlang~\cite{LamiLSCF24}.
    }

    \paragraph{Related Work:}
    The idea of exploiting causal-consistent reversible semantics for debugging purposes was first introduced in~\cite{GiachinoLM14} in the context of
    a subset of the Oz programming language~\cite{RoyH2004}.
    This approach was subsequently developed and refined for Erlang, leading to the implementation of the CauDEr reversible debugger~\cite{cauder,revErl,LamiLSCF24,LanesePV21},
    which supports causal-consistent rollback and replay of concurrent executions in actor-based systems.
    We depart from these works in several respects.
    Although the language considered in~\cite{GiachinoLM14} features channel-based communication, its reversible semantics does not support queue reordering,
    and thus cannot abstract away from causally independent message orderings.
    In contrast, CauDEr targets Erlang, an actor-based language with mailbox semantics,
    where message ordering does not play the same causal role as in channel-based systems.
    Our approach instead focuses on explicit FIFO channels, where message ordering is an essential part of the causal structure,
    and introduces a formal mechanism to safely reorder causally independent messages.
    \og{\cite{DBLP:journals/lmcs/MezzinaP21} studies reversible semantics for asynchronous multiparty session types with higher-order messages over a single global queue.}

    \sy{
        RevMiGo\cite{RevMiGo} proposed by the authors gives a reversible semantics by constructing a graph structure to
        store the dependency among communications. The graph is globally shared by all processes, whereas this paper proposes a semantics different from RevMiGo, where the dependency is stored in each process and each channel queue
        in a distributed manner.
    }

    \paragraph{Paper Organisation}

     Section \ref{sec:revgo} introduces revGo as a core language with channel-based communication and presents
    its reversible operational semantics with causal consistency.
    Section \ref{sec:rr-semantics} presents a rollback-and-replay semantics,
    and Section \ref{sec:rr-properties} discusses key reversible properties of this semantics and its usefulness for reversible debugging,
    based on soundness and minimality.
    Section \ref{sec:conclusion} concludes.

    \section{revGo}
    \label{sec:revgo}
    
    This section presents revGo.
    We use revGo as a minimal core of Go to investigate causal-consistent reversibility for channel-based communication via queues.
    Unlike Go, processes communicate via channels with unbounded queues.

    \subsection{Syntax}
    
    Fig.~\ref{fig:syntax} presents the syntax of \emph{revGo},
    where $v$ ranges over \emph{values}, $x$ over \emph{variables}, $e$ over \emph{expressions}, $X$ over \emph{process variables}, \og{$w$ over \emph{channel variables}, and $c$ over \emph{channel names}}.
    We write $\seq{v}$ for a sequence $v_1\cdots v_n (n \geq 0)$.
    We similarly use $\seq{x}$\og{, $\seq{w}$, $\seq{c}$, $\seq{u}$,} and $\seq{e}$.
    A \emph{process term} $P$ has the following forms:
    (1) \emph{channel creation} $\nwcS{\og{w}}{\sigma}{P_1 \| \cdots \| P_n}$ binds $w$ and
    generates a fresh channel of type $\sigma$ and spawns $n$ subprocesses in parallel.
    (2) \emph{send} $\sndS{\og{u}}{e}[P']$ sends the value of $e$ on \og{$u$}.
    (3) \emph{receive} $\rcvS{\og{u}}{x}[P']$ receives a value from \og{$u$} and binds it to $x$.
    (4) \emph{if} $\ifS{e}{P_1}{P_2}$ branches according to $e$.
    (5) \emph{select} $\slcS{\pi_i:P_i}_{i \in I}$ $(|I| \geq 2)$ non-deterministically selects an available channel operation.
    (6) \emph{invocation} $\defS{X}{\seq{e},\og{\seq{u}}}$ calls a process variable $X$ with parameters.
    (7) $\tauS{P'}$ performs a silent action.
    (8) $\zroS$ is \emph{an inactive process term}.
    \og{Prefixes $\sndPi{c}{e}$ and $\rcvPi{c}{x}$ are runtime forms obtained when $\mathtt{newchan}$ substitutes $c$ for $w$ in $\sndPi{w}{e}$ and $\rcvPi{w}{x}$.
    When an invocation is executed, it takes the form $X\langle\seq{e},\seq{c}\rangle$.}
    Process definitions are collected in a program $\program \bnfdef \{D_i\}_{i\in I}\,\mathtt{in}\,P$,
    where each definition has the form $X(\seq{x}\og{,\seq{w}})=P$.

    \ogm{We use $\cdot$ for concatenation, sometimes omit $\zroS$, and write $\fv{P}$ for the free variables of $P$.}

    \begin{figure}[H]
        \input{fig0_syntax}
        \caption{Syntax of revGo}
        \label{fig:syntax}
    \end{figure}

\end{document}

    \subsection{Operational Semantics}
    
    \sy{
        The operational semantics of revGo is characterized by two components: processes and channel queues.
        Both components maintain a history to reverse actions. A process records the sequence
        of communications with channel queues during forward execution and performs rollbacks based on the
        recorded history. A channel queue maintains the order of messages that have been enqueued
        as well as the messages that have been dequeued during forward execution. During backward execution, the channel queue
        communicates with processes in the reverse order established during forward execution.
    }
    \sy{
        Each channel queue operates as a first-in, first-out (FIFO) structure of messages, each associated with a unique key. The key of a message identifies the asynchronous communication
        among processes. Keys are stored within the histories of process actions.}

    \sy{First, the reversible process semantics and the behavior of channel queues in enqueueing and dequeueing message sequences are presented.
        Finally, the operational semantics are defined by composing processes and channel queues.
    }
    \label{subsec:processes}
    \begin{definition}[Process]\label{def:process}
        A process is a tuple $\pcfg{\varp}{h}{P}$, where
        $\varp$ is a pid, $h$ is a process history, and $P$ is a process term.
        \ogm{A process configuration $\Pi$ is a parallel composition of processes with $|$, which is associative and commutative.}
    \end{definition}

    Given a process $\pcfg{\varp}{h}{P}$, $\varp$ is a unique process identifier.
    We use $\rootp$ for the initial process.
    A \emph{process history} $h$ is a sequence of \emph{history items} where a history item has one of the following:
    $\nwcH{c}{\seq{\varp}}$; $\sndH{c}{k}{e}{\mathit{slc}}$; $\rcvH{c}{k}{P}{\mathit{slc}}$;
    $\tauH{\mathit{slc}}$; $\ifH{e}{P}$; and $\defH{X}{\seq{e},\seq{c}}$.
    \og{
        $\nwcH{c}{\seq{\varp}}$ stores the created $c$ and $\seq{\varp}$ of the spawned processes.
        $\sndH{c}{k}{e}{\mathit{slc}}$ and $\rcvH{c}{k}{P}{\mathit{slc}}$ store the channel $c$ and key $k$ of the communication;
        for a select, $\mathit{slc}$ is the set of unselected branches $\{\pi_i:P_i\}_{i\in I}$, and otherwise it is $\bot$; the same applies to $\tauH{\mathit{slc}}$.
        $\defH{X}{\seq{e},\seq{c}}$ stores the parameters of invocation.
        Some history items record $e$ and $P$ for reversibility.
    }
    Given a revGo program $\program = \{D_i\}_{i\in I} \,\mathtt{in}\, P$, its \emph{initial process configuration} $\Pini^{\program}$ is $\pcfg{\rootp}{\varepsilon}{P}$, where $P$ is the initial process term of $\program$.

    \og{We define \emph{forward and backward actions} as transition labels.
    A forward action $\alpha$ has one of the forms listed in Table~\ref{tab:action}.}
    We write $\extA{\varp}{i}$ for a forward external action.
    For $\alpha$, the backward action is denoted by $\overline{\alpha}$ and represents the undoing of $\alpha$.
    Let $\mathcal{A}$ be the set of forward actions and $\overline{\mathcal{A}}$ be the set of backward actions.

    \begin{table}[H]
        \centering
        \caption{Forms of forward actions}
        \label{tab:action}
        \begin{tabular}{@{}p{0.2\textwidth}@{\hspace{4pt}}p{0.4\textwidth}|p{0.38\textwidth}@{}}
            \hline
            $\itnA{\varp}{i}$                 & Performs an internal action.                & \multirow{4}{0.38\textwidth} {$\varp$: performing process's pid\\ $i$: action index in process $\varp$} \\
            $\nwcA{\varp}{i}{c}{\seq{\varp}}$ & Creates $c$ and subprocesses $\seq{\varp}$. & \\
            $\sndA{\varp}{i}{c}{k}{v}$        & Sends $(k,v)$ to $c$.                       & \\
            $\rcvA{\varp}{i}{c}{k}{v}$        & Receives $(k,v)$ from $c$.                  & \\
            \hline
        \end{tabular}
    \end{table}

    Fig.~\ref{fig:process_fwd} defines forward transitions over process configurations, written $\Pi \proB{\alpha} \Pi'$.
    We write $P\{v/x\}$ for the standard capture-avoiding substitution, where all free occurrences of $x$ in $P$ are replaced with $v$.
    \og{We write $e\downarrow v$ when $e$ is evaluated to be $v$.
    We define $\mathsf{alt}$ by $\mathsf{alt}(\sndH{c}{k}{e}{\bot},\{\pi_i:P_i\}_{i\in I})=\sndH{c}{k}{e}{\{\pi_i:P_i\}_{i\in I}}$,
        and analogously for $\textcolor{myPurple}{\mathsf{rcv}}$ and $\textcolor{myPurple}{\mathsf{tau}}$.}
    \og{\frname{Nwc} substitutes $c$ for $w$ in the spawned subprocesses.
    \frname{Snd} evaluates the expression to get the send value.
    \frname{Rcv} substitutes the received value for $x$.
    For these external actions, the channel, key, and received value are supplied by the queue.}
    \frname{Tau} \og{performs} a silent action.
    \frname{IfT} and \frname{IfF} \og{proceed with the branch determined by the guard and record the untaken branch}.
    \frname{Def} evaluates arguments and unfolds a definition from the program $\program$.
    In \frname{Slc} the unselected alternatives are recorded in the generated history item so that the choice can be reconstructed during backward execution.

    \begin{figure}[H]
        \centering
        
    \begin{tabular}{rll}
        \frname{Nwc} & $\Pi\mid\pcfg{\varp}{h}{\nwcS{\og{w}}{\sigma}{P_1\|\cdots\| P_n}}
        \proB{\nwcA{\varp}{|h|}{c}{\varp[1] \cdots \varp[n]}}$ \\
        & $\Pi\mid\pcfg{\varp}{h\cdot\nwcH{c}{\varp[1] \cdots \varp[n]}}{\zroS}\mid\miniprod_{i=1}^{n} \pcfg{\varp[i]}{\varepsilon}{\og{P_i\{c/w\}}}$ \\[0.4em]

        \frname{Snd} & $\irule{e \downarrow v}{\Pi\mid\pcfg{\varp}{h}{\sndS{c}{e}[P]}
        \proB{\sndA{\varp}{|h|}{c}{k}{v}} \Pi\mid\pcfg{\varp}{h\cdot\sndH{c}{k}{e}{\bot}}{P}}$ \\[1.4em]

        \frname{Rcv} & $\Pi\mid\pcfg{\varp}{h}{\rcvS{c}{x}[P]}
        \proB{\rcvA{\varp}{|h|}{c}{k}{v}} \Pi\mid\pcfg{\varp}{h\cdot\rcvH{c}{k}{P}{\bot}}{P\{v/x\}}$ \\[0.4em]

        \frname{Tau} & $\Pi\mid\pcfg{\varp}{h}{\tauS{P}}
        \proB{\itnA{\varp}{|h|}} \Pi\mid\pcfg{\varp}{h\cdot\tauH{\bot}}{P}$ \\[0.5em]

        \frname{IfT} & $\irule{e \downarrow \mathtt{true}}{\Pi\mid\pcfg{\varp}{h}{\ifS{e}{P_1}{P_2}}
        \proB{\itnA{\varp}{|h|}} \Pi\mid\pcfg{\varp}{h\cdot\ifH{e}{P_2}}{P_1}}$ \\[1.3em]

        \frname{IfF} & $\irule{e \downarrow \mathtt{false}}{\Pi\mid\pcfg{\varp}{h}{\ifS{e}{P_1}{P_2}}
        \proB{\itnA{\varp}{|h|}} \Pi\mid\pcfg{\varp}{h\cdot\ifH{e}{P_1}}{P_2}}$ \\[1.3em]

        \frname{Def} & $\irule{e_i \downarrow v_i \quad \og{X(\seq{x},\seq{w}) = P'} \in \program}{\Pi\mid\pcfg{\varp}{h}{\defS{X}{\seq{e},\seq{c}}}
        \proB{\itnA{\varp}{|h|}} \Pi\mid\pcfg{\varp}{h\cdot\defH{X}{\seq{e},\seq{c}}}{P'\{\seq{v},\seq{c}/\seq{x}\og{,\seq{w}}\}}}$ \\[1.3em]

        \frname{Slc} & $\irule{\Pi\mid\pcfg{\varp}{h}{\pi_j:P_j}\ \proB{\alpha}\ \Pi'\mid\pcfg{\varp}{h\cdot \varr}{P'} \quad
        \varr[\mathsf{alt}] = \mathsf{alt}(\varr,\{\pi_i:P_i\}_{i \in I\setminus\{j\}})}
        {\Pi\mid\pcfg{\varp}{h}{\slcS{\pi_i:P_i}_{i \in I}}\ \proB{\alpha}\ \Pi'\mid\pcfg{\varp}{h\cdot\varr[\mathsf{alt}]}{P'}}$ &
        \quad \\

    \end{tabular}
\end{document}

        \caption{Forward semantics for process configurations}
        \label{fig:process_fwd}
    \end{figure}

    Fig.~\ref{fig:process_bwd} defines backward transitions over process configurations, written $\Pi \proB{\overline{\alpha}} \Pi'$.
    In the backward execution, a process restores the previous configuration
    according to the information \og{in the top history item}.
    We omit \brname{Tau} and \brname{IfF}.

    \begin{figure}[H]
        \centering
        
    \begin{tabular}{rll}
        \brname{Nwc} & $\irule{\og{w \notin \fv{P_i}}}{\Pi\mid\pcfg{\varp}{h\cdot\nwcH{c}{\varp[1] \cdots \varp[n]}}{\zroS}\mid\miniprod_{i=1}^{n} \pcfg{\varp[i]}{\varepsilon}{P_i}\proB{\overline{\nwcA{\varp}{|h|}{c}{\varp[1] \cdots \varp[n]}}}}$ \\
        & $(\Pi\mid\pcfg{\varp}{h}{\nwcS{\og{w}}{\sigma}{\og{P_1\{w/c\}}\|\cdots\| \og{P_n\{w/c\}}}})$ \\[0.6em]

        \brname{Snd} & $\irule{e\downarrow v}{\Pi\mid\pcfg{\varp}{h\cdot\sndH{c}{k}{e}{\bot}}{P}\proB{\overline{\sndA{\varp}{|h|}{c}{k}{v}}}(\Pi\mid\pcfg{\varp}{h}{\sndS{c}{e}[P]})}$ \\[1.4em]

        \brname{Rcv} & $\Pi\mid\pcfg{\varp}{h\cdot\rcvH{c}{k}{P}{\bot}}{P'}\proB{\overline{\rcvA{\varp}{|h|}{c}{k}{v}}}(\Pi\mid\pcfg{\varp}{h}{\rcvS{c}{x}[P]})$ \\[0.9em]


        \brname{IfT} & $\irule{e \downarrow \mathtt{true}}{\Pi\mid\pcfg{\varp}{h\cdot\ifH{e}{P_2}}{P_1}
        \proB{\overline{\itnA{\varp}{|h|}}} \Pi\mid\pcfg{\varp}{h}{\ifS{e}{P_1}{P_2}}}$ \\[1.4em]


        \brname{Def} & $\Pi\mid\pcfg{\varp}{h\cdot\defH{X}{\seq{e},\seq{c}}}{P'} \proB{\overline{\itnA{\varp}{|h|}}} \Pi\mid\pcfg{\varp}{h}{\defS{X}{\seq{e},\seq{c}}}$ \\[0.7em]

        \brname{Slc} & $\irule{\Pi\mid\pcfg{\varp}{h}{\pi_j:P_j}\ \proB{\overline{\alpha}}\ \Pi'\mid\pcfg{\varp}{h}{P} \quad \varr[\mathsf{alt}] = \mathsf{alt}(\varr,\{\pi_i:P_i\}_{i \in I\setminus\{j\}})}
        {\Pi\mid\pcfg{\varp}{h\cdot\varr[\mathsf{alt}]}{P'}\ \proB{\overline{\alpha}}\ \Pi'\mid\pcfg{\varp}{h}{\slcS{\pi_i:P_i}_{i \in I}}}$ & \quad \\

    \end{tabular}
\end{document}

        \caption{Backward semantics for process configurations}
        \label{fig:process_bwd}
    \end{figure}

    \begin{definition}[Channel queue]
        \label{def:queue}
        A channel queue is a tuple $\qcfg{c}{\sigma}{\mu}{\mu'}$, where $c$ is a channel name,
        $\sigma$ is a type, and $\mu$ and $\mu'$ are sequences of messages.
        A channel queue configuration $Q$ \ogm{is a composition of channel queues with $|$.}
    \end{definition}

    A channel queue $\qcfg{c}{\sigma}{\mu}{\mu'}$ is a FIFO where $\sigma$ is the type of messages,
    $\mu$ is the message sequence, and $\mu'$ is the history of messages sent to the queue.
    The message immediately at the left of $\circ$ is the head of the current queue.
    When a process sends a new message, the message is appended at the left end of $\mu$,
    and when a process receives a message, the message is removed from $\mu$ and appended at the left end of $\mu'$.
    The empty queue is $\circ$.
    Since we assume no channel queue before execution, the initial queue configuration, $\Qini$, is $\varnothing$.

    Fig.~\ref{fig:queue_fwd} defines reversible transitions over channel queue configurations, written $Q \queBRev{\extA{\varp}{i}} Q'$.
    Bidirectional arrows represent both forward and backward transitions.
    In the forward direction, \rrname{QNwc} creates a channel queue with a fresh ${c}$,
    \rrname{QSnd} adds a message with a fresh key $k$ to $\mu$,
    and \rrname{QRcv} moves a message from $\mu$ to $\mu'$ as the record of communication.
    In the backward direction, \rrname{QNwc} removes a queue,
    \rrname{QSnd} removes the leftmost message in $\mu$, and \rrname{QRcv} moves the rightmost message in $\mu'$ to $\mu$.

    \begin{figure}[H]
        \centering
        
    \begin{tabular}{rll}
        \rrname{QNwc} & $Q \queBRevLong{\nwcA{\varp}{i}{c}{\seq{\varp}}} Q \mid \qcfg{c}{\sigma}{\varepsilon}{\varepsilon} (c \text{ is fresh.})$ \\[0.2em]

        \rrname{QSnd} & $Q \mid \qcfg{c}{\sigma}{\mu}{\mu'} \queBRevLong{\sndA{\varp}{i}{c}{k}{v}} Q \mid \qcfg{c}{\sigma}{(k,v)\cdot\mu}{\mu'} (k \text{ is fresh.})$ \\[0.2em]

        \rrname{QRcv} & $Q \mid \qcfg{c}{\sigma}{\mu\cdot(k,v)}{\mu'} \queBRevLong{\rcvA{\varp}{i}{c}{k}{v}} Q \mid \qcfg{c}{\sigma}{\mu}{(k,v)\cdot\mu'}$ \\[0.2em]
    \end{tabular}

        \caption{Reversible semantics for channel queue configurations}
        \label{fig:queue_fwd}
    \end{figure}

    \begin{definition}[System configuration]
        A \emph{system configuration} $\Sys$ is a pair $(\Pi,Q)$ of a process configuration and a queue configuration.
    \end{definition}

    Given a revGo program $\program$, let $\Ini^{\program} = (\Pini^{\program},\Qini)$ be the initial system configuration for $\program$.
    Fig.~\ref{fig:config} defines forward and backward transitions over system configurations,
    written $\Sys \genB{\alpha} \Sys'$ and $\Sys \genB{\overline{\alpha}} \Sys'$.
    \frname{Itn} and \brname{Itn} lift an internal transition of the process configuration to a system transition, leaving the channel queues unchanged.
    \frname{Ext} and \brname{Ext} synchronise a process transition with the corresponding channel queue transition by matching the action.

    \begin{figure}[H]
        \centering
        \input{fig5_system}
        \caption{Forward and backward semantics for system configurations}
        \label{fig:config}
    \end{figure}

    \begin{example}
        \label{ex:basic}
        Let $\program = \{S,R\}\,\mathtt{in}\,P_0$ be the revGo program from Example~\ref{ex:multi}, where
        \begin{center}
            $\begin{aligned}
                 \og{S(n,w)} &= \og{\sndS{w}{n}[\defS{S}{n,w}]} \quad
                 \og{R(w) = \rcvS{w}{n}[\ifS{2/n == 2}{\defS{R}{w}}{\zroS}]}\\
                 P_0 &= \og{\nwcS{w}{\mathtt{int}}{\defS{S}{0,w}\|\defS{S}{1,w}\|\defS{R}{w}}}
            \end{aligned}$
        \end{center}

        Starting from $\Ini^{\program} = (\pcfg{\rootp}{\varepsilon}{P_0},\varnothing)$, we show transitions:

        \begin{align*}
            \Ini^{\program} \genB{\nwcA{\rootp}{0}{c}{\seq{\varp}}}&
            (\Pi_1,\qcfg{c}{\texttt{int}}{\varepsilon}{\varepsilon}) \genBStar{\rho} (\Pi_2,\qcfg{c}{\texttt{int}}{\varepsilon}{\varepsilon})\\
            \genB{\sndA{\varp[1]}{1}{c}{k_1}{1}}&
            (\Pi_3,\qcfg{c}{\texttt{int}}{\red{(k_1,1)}}{\varepsilon})
            \!\genB{\rcvA{\varp[2]}{1}{c}{k_1}{1}}\!(\Pi_4,\qcfg{c}{\texttt{int}}{\varepsilon}{\blue{(k_1,1)}})\\
            \genB{\sndA{\varp[0]}{1}{c}{k_0}{0}}&
            (\Pi_5,\qcfg{c}{\texttt{int}}{\red{(k_0,0)}}{(k_1,1)})\\
            \genBStar{\rho'}&
            (\Pi_6,\qcfg{c}{\texttt{int}}{\red{(k_n,1)\cdots(k_2,1)}(k_0,0)}{(k_1,1)})\\
            \genB{\rcvA{\varp[2]}{4}{c}{k_0}{0}}&
            (\Pi_7,\qcfg{c}{\texttt{int}}{(k_n,1)\cdots(k_2,1)}{\blue{(k_0,0)}(k_1,1)}) = \Tar
        \end{align*}
        \og{First, we create the channel $c$ and spawn the three processes $\varp[0],\varp[1],\varp[2]$ corresponding to $\defS{S}{0,c}$, $\defS{S}{1,c}$, and $\defS{R}{c}$, respectively.}
        \frname{Def} is applied for $\rho$.
        Then $\varp[1]$ sends $1$ on $c$ and $\varp[2]$ receives it.
        Next $\varp[0]$ sends $0$ on $c$, and $\rho'$ performs $n-1$ additional sends of $1$ by $\varp[1]$ on $c$ (so there are $2$ to $n$ occurrences of $1$ in total).
        Finally $\varp[2]$ receives $0$ from $c$ immediately before the division-by-zero.

    \end{example}

\end{document}

    \subsection{Causal-consistent Reversibility}
    \label{subsec:reversibility}
    
    We now show that revGo operational semantics satisfies the key properties of a causal-consistent reversible semantics:
    causal consistency, causal safety, and causal liveness~\cite{10.1145/3648474}.
    Following the axiomatic framework of~\cite{10.1145/3648474}, we define an \emph{independence relation} over actions.

    We observe the dependency among actions derived from the control flow of processes defined as the \emph{must-happen-before} relation.
    \begin{definition}[Must-Happen-before]
        \label{def:must-happen-before}
        For $\alpha,\alpha'\in\mathcal{A}$,
        we say $\alpha$ immediately happens before $\alpha'$, written $\alpha \mhb \alpha'$, if one of the following holds:
        \begin{trivlist}
            \item[(Sequential)] $\varp = \varp[']$ and $i + 1= i'$;
            \item[(Creation)] $\alpha=\nwcA{\varp}{i}{c}{\varp[1]\cdots \varp[n]}$, $\varp[']\in \{\varp[1]\cdots \varp[n]\}$, and $i'=1$; and
            \item[(Send-receive)] $\alpha=\sndA{\varp}{i}{c}{k}{\_}$ and $\alpha'=\rcvA{\varp[']}{i'}{c}{k}{\_}$,
        \end{trivlist}
        where $\varp,\varp[']$ are the pids, and $i,i'$ are the indices, of $\alpha,\alpha'$, respectively.
    \end{definition}

    The dependency from must-happen-before does not include causality of channel queues.
    The independence relation over actions is defined as the complement of the dependency of the
    must-happen-before relation and the queue ordering as follows.
    \ogm{We use $\beta \in \mathcal{A}\cup\overline{\mathcal{A}}$ for an action.}
    \og{For $\alpha\in\mathcal{A}$, let $\und{\alpha}=\und{\overline{\alpha}}=\alpha$.}
    \begin{definition}[Independence relation]
        We define the independence relation on actions, written $\idpB$, as follows.
        For $\beta_1,\beta_2\in\mathcal{A}\cup\overline{\mathcal{A}}$, let $\alpha_1=\und{\beta_1}$ and $\alpha_2=\und{\beta_2}$.
        Then $\beta_1 \idpB \beta_2$ iff the following conditions hold:
        \begin{itemize}
            \item $\alpha_1 \centernot{\mhb} \alpha_2$ and $\alpha_2 \centernot{\mhb} \alpha_1$; and
            \item if $\alpha_1=\sndA{\_}{\_}{c}{\_}{\_}$ and $\alpha_2=\sndA{\_}{\_}{c'}{\_}{\_}$, then $c \neq c'$; and
            \item if $\alpha_1=\rcvA{\_}{\_}{c}{\_}{\_}$ and $\alpha_2=\rcvA{\_}{\_}{c'}{\_}{\_}$, then $c \neq c'$.
        \end{itemize}
    \end{definition}

    Following~\cite{10.1145/3648474}, we define \emph{causal equivalence} to identify derivations that differ only by swapping independent transitions
    and cancelling a transition with its reverse.
    \ogm{For \og{an action sequence} $\rho = \beta_1 \cdots \beta_n \in(\mathcal{A}\cup\overline{\mathcal{A}})^*$, we write $\Sys_0 \genBStar{\rho} \Sys_n$ if\\

    \noindent
        $\Sys_{i-1} \genB{\beta_i} \Sys_i$ for $1 \leq i \leq n$.
        We use $t:\Sys\genB{\beta}\Sys'$ for a transition and $\overline{t}:\Sys'\genB{\overline{\beta}}\Sys$ for its reverse.
        A \emph{derivation} is a transition sequence.
    }
    \begin{definition}[Causal equivalence]
        \label{def:causal-equivalence}
        Let $\cequivN$ be the smallest equivalence relation on derivations closed under concatenation and satisfying:
        \begin{trivlist}
            \item[(Swap)] if $\beta_1 \idpB \beta_2$ and there exist co-initial transitions
            $t_1: \Sys \genB{\beta_1} \Sys_1$ and $t_2: \Sys \genB{\beta_2} \Sys_2$,
            and co-final transitions $t'_2: \Sys_1 \genB{\beta_2} \Sys'$ and $t'_1: \Sys_2 \genB{\beta_1} \Sys'$, then
            $t_1 t'_2 \cequivN t_2 t'_1$.
            \item[(Cancellation)] for every transition $t$, $t\,\overline{t} \cequivN \varepsilon$ and $\overline{t}\,t \cequivN \varepsilon$.
        \end{trivlist}
    \end{definition}

    \ogm{We say that $\Sys$ is \emph{reachable} from $\program$ if $\Ini^{\program} \genBStar{\rho} \Sys$ for some $\rho$,
    and that $\Sys$ is \emph{reachable} if  $\Sys$ is reachable from some $\program$.}
    We establish the reversibility properties for reachable system configurations.
    A fundamental property for any reversible calculus is the loop lemma stating that every action can be undone.
    \begin{lemma}[Loop lemma]
        \label{lem:loop-system}
        \og{For reachable $\Sys$ and $\Sys'$,}
        $\Sys \genB{\alpha} \Sys'$ iff $\Sys' \genB{\overline{\alpha}} \Sys$.
    \end{lemma}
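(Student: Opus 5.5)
The plan is to prove both directions by case analysis on the rule that derives the transition, after first lifting a loop property from the two component semantics. For process configurations (Figs.~\ref{fig:process_fwd} and~\ref{fig:process_bwd}) I will show that $\Pi \proB{\alpha} \Pi'$ iff $\Pi' \proB{\overline{\alpha}} \Pi$, by induction on the derivation so that \frname{Slc}/\brname{Slc} can be handled. For queue configurations, the rules of Fig.~\ref{fig:queue_fwd} are bidirectional by construction, so $Q \queBRev{\alpha} Q'$ already relates the same pair of configurations in both directions. The system rules \frname{Itn}/\brname{Itn} and \frname{Ext}/\brname{Ext} then just pair a process step with a matching queue step on the same action, so the system-level statement follows componentwise.

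For the forward-to-backward direction, each forward process rule pushes exactly one history item that records what the matching backward rule needs. \frname{Snd} stores $e$ and the key $k$; since $e$ is closed, $e\downarrow v$ holds again and \brname{Snd} fires. \frname{Rcv} stores the continuation $P$ before substitution, so \brname{Rcv} restores $\rcvS{c}{x}[P]$. \frname{IfT}/\frname{IfF} store the untaken branch, and the guard still evaluates as before. \frname{Def} stores the call, and \brname{Def} simply pops it. \frname{Nwc} stores $c$ and the spawned pids, which start with empty histories and terms $P_i\{c/w\}$. Here I need $w\notin\fv{P_i\{c/w\}}$, which holds by capture avoidance, and $(P_i\{c/w\})\{w/c\} = P_i$. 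The latter requires $c$ not to occur in the original $P_i$, which I will get from freshness of $c$ in \rrname{QNwc} together with reachability: channel names never appear in source programs. For \frname{Slc}, the induction hypothesis applies to the premise, and $\mathsf{alt}$ is injective on the recorded item, so \brname{Slc} recovers the original select.

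For the backward-to-forward direction, I will read each backward rule right to left and check that the corresponding forward rule applies and yields exactly the source configuration. For \brname{Rcv} the value $v$ is not stored in the process history, so reachability is needed. I will prove an invariant by induction on the derivation from $\Ini^{\program}$. It states that if a process has top item $\rcvH{c}{k}{P}{\_}$ and term $P'$, then $P' = P\{v/x\}$, and $(k,v)$ is the rightmost entry of $\mu'$ in queue $c$ at the point where the backward \rrname{QRcv} step applies. The invariant also states the dual facts for sends (the leftmost entry of $\mu$ is $(k,v)$ with $e\downarrow v$) and for spawned children (empty history). Only with this invariant do \brname{Ext} steps correspond to forward ones. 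The key $k$ sent must be fresh again when redoing \frname{Snd}/\rrname{QSnd}; after undoing it, $k$ no longer occurs anywhere, so freshness holds.

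The main obstacle is this reachability invariant linking process histories and queues. It must show that whenever a backward external action is enabled in the full system, the queue's recorded message and the process's recorded key and value coincide and sit at the correct end of the queue. I would prove it first, as a separate lemma by induction on the length of the derivation from the initial configuration, and then discharge the loop lemma from it by routine case analysis.
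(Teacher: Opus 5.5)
Your overall architecture (componentwise lifting, bidirectional queue rules, a reachability invariant for the converse) is reasonable. You are also more careful than the paper about the round trip $(P_i\{c/w\})\{w/c\}=P_i$ and about key freshness. The concrete gap is that \brname{Rcv} is not the only backward rule whose premise is weaker than its forward counterpart. \brname{Def} pops $\defH{X}{\seq{e},\seq{c}}$ from a process with an \emph{arbitrary} current term, and it is an internal step lifted by \brname{Itn}. So your remark that the invariant is only needed to make \brname{Ext} steps correspond to forward ones is wrong. For example, $\pcfg{\varp}{\defH{X}{\seq{e},\seq{c}}}{\zroS}$ steps back to $\pcfg{\varp}{\varepsilon}{\defS{X}{\seq{e},\seq{c}}}$, from which \frname{Def} produces the unfolded body of $X$ rather than $\zroS$. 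The paper's proof singles out exactly \brname{Rcv} and \brname{Def} as the cases that need reachability. Your invariant must also state that a process whose top item is $\defH{X}{\seq{e},\seq{c}}$ has term $P\{\seq{v},\seq{c}/\seq{x},\seq{w}\}$, where $X(\seq{x},\seq{w})=P\in\program$ and $e_i\downarrow v_i$.

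Second, an invariant that constrains only the \emph{top} history item is not preserved by backward steps. From $\pcfg{\varp}{h\cdot\rcvH{c}{k}{P}{\bot}\cdot\tauH{\bot}}{P''}$ one steps back to $\pcfg{\varp}{h\cdot\rcvH{c}{k}{P}{\bot}}{\tauS{P''}}$, and nothing in the invariant for the earlier configuration gives $\tauS{P''}=P\{v/x\}$. You need the invariant for every history item: each item must be consistent with the term obtained by undoing the items above it, with received values recovered from the queues through unique keys. Alternatively, prove a lemma that every reachable configuration is forward-reachable. The positional clauses are unnecessary, because the backward queue rule already fixes where the message sits. What you really use is key uniqueness tying the $v$ in the label to the substituted value. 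Stated unconditionally, your send clause is even false, since another process may have sent on $c$ afterwards. Also, by \rrname{QRcv} the latest received message is the \emph{leftmost} entry of $\mu'$, not the rightmost. For comparison, the paper derives the converse from determinism of backward steps plus the forward-to-backward direction. That argument tacitly rests on the same forward-reachability fact, and for \brname{Rcv} on the queue to pin down $v$. Once repaired, your invariant would make explicit what the paper leaves implicit.
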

    \begin{proof}
        Let $\Sys=(\Pi,Q)$ and $\Sys'=(\Pi',Q')$.
        It is immediate that $\Pi' \proB{\overline{\alpha}} \Pi$ if $\Pi \proB{\alpha} \Pi'$
        since the history item contains the information to reconstruct the previous process configuration.
        If $\alpha$ is internal, then the system rules leave the queue component unchanged.
        If $\alpha$ is external, the channel-queue relation is defined bidirectionally, so
        $Q \queBRev{\alpha} Q'$ holds exactly when the corresponding reverse queue step is available.
        The other direction relies on reachability since
        \brname{Rcv} and \brname{Def} can be executed under weaker conditions than \frname{Rcv} and \frname{Def}, respectively.
        From the backward rules, we can see that the result of a backward transition is uniquely determined:
        if $\Pi' \proB{\overline{\alpha}} \Pi_1$ and $\Pi' \proB{\overline{\alpha}} \Pi_2$, then $\Pi_1 = \Pi_2$.
        Using this with the fact that $\Pi' \proB{\overline{\alpha}} \Pi$ if $\Pi \proB{\alpha} \Pi'$,
        we can conclude that if $\Pi'$ is reachable and $\Pi' \proB{\overline{\alpha}} \Pi$, then $\Pi \proB{\alpha} \Pi'$ holds.
        Combining the process transition with the bidirectional queue transition in the system rules gives
        $\Sys \genB{\alpha} \Sys'$ iff $\Sys' \genB{\overline{\alpha}} \Sys$.
    \end{proof}

    The transition relation is shown to satisfy the following properties:
    Square property (SP), Backward transitions are independent (BTI), and Well-foundedness (WF).

    \begin{lemma}[Square property]
        \label{lem:square-property}
        Given $\program$, suppose that $\Ini^{\program} \genBStar{} \Sys$, $\Sys \genB{\beta_1} \Sys_1$, $\Sys \genB{\beta_2} \Sys_2$, and $\beta_1 \idpB \beta_2$.
        Then, there is $\Sys'$ such that $\Sys_1 \genB{\beta_2} \Sys'$ and $\Sys_2 \genB{\beta_1} \Sys'$.
    \end{lemma}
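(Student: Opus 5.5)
We would prove the square property by case analysis on the directions of $\beta_1$ and $\beta_2$ (forward/forward, forward/backward, backward/backward). Within each case we split on whether the two actions are performed by the same process or by different processes, and on whether they touch the same channel queue.

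The first observation comes from the independence relation. Let $\alpha_1=\und{\beta_1}$ and $\alpha_2=\und{\beta_2}$. Since neither $\alpha_1\mhb\alpha_2$ nor $\alpha_2\mhb\alpha_1$, the two actions must be performed by different pids. Suppose both are performed by $\varp$ from the same configuration $\Sys$. Then the indices are determined by $|h|$: forward actions have index $|h|$ and backward ones index $|h|-1$. Two forward actions from the same $\Sys$ and pid would share an index. I would argue that non-determinism within one process (only via \frname{Slc} or receive keys) produces actions with the same pid and index, and such pairs must be excluded separately. The route is to note that two such actions share a common must-happen-before predecessor, and to strengthen the argument by showing that equal pid and index already contradict independence via the sequential clause. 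If that fails, I would instead observe that same-pid same-index distinct actions never form a commuting square, and argue from the process-local rules that the statement is only meaningful for distinct pids. In the forward/backward mixed case the indices are $|h|$ and $|h|-1$, so (Sequential) gives $\alpha_2\mhb\alpha_1$, contradicting independence. So in all relevant cases $\varp\neq\varp[']$.

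Since the process configuration is a commutative parallel composition and every process rule touches only the acting process, the process components commute directly. The one exception is \frname{Nwc}/\brname{Nwc}, which also creates or removes the children $\varp[i]$. If the other action were performed by a child $\varp[i]$, it would be its action with index $1$, or a backward action that requires nonempty history. The (Creation) clause rules out the first. For the second, \brname{Nwc} requires the children to have empty history, so a backward step by a child cannot coexist with it. Using reachability, we also get that channel names and keys are fresh and unique.

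The main step is the queue component. If the two actions are on different channels, or one of them is internal, the queue transitions act on disjoint channel queues and commute. If both are on the same channel $c$, independence forbids two sends or two receives, so exactly one is a send and one is a receive. They are on different keys, since equal keys would give (Send-receive) dependence. A forward send appends at the left of $\mu$ and a forward receive removes the rightmost message of $\mu$. These commute unless $\mu$ is empty, but then the receive would not be enabled in $\Sys$. The backward/backward and mixed cases need more care, and I expect this to be the main obstacle:
\begin{itemize}
\item A backward send removes the leftmost message of $\mu$.
\item A backward receive moves the leftmost element of $\mu'$ back to the right end of $\mu$.
\item These clash only when $\mu$ has exactly one element that both touch.
\end{itemize}
To exclude the clash I would use reachability and an invariant on reachable queues: the key of each message in $\mu'$ is recorded in some process history as a receive, and every message has exactly one recorded send. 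The clash would force the two keys to coincide, yielding $\alpha_1\mhb\alpha_2$ by (Send-receive), a contradiction. Once both components commute, assembling them with \frname{Ext}, \brname{Ext}, \frname{Itn} and \brname{Itn} produces the required $\Sys'$.
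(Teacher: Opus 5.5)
Your case split (same process; parent and child through \frname{Nwc}/\brname{Nwc}; a send and a receive with distinct keys on one channel) is the same as the paper's, and your queue analysis is more explicit than the paper's one-line argument. The genuine hole is the same-process case, which you leave open. Your first two routes cannot work. Independence only asks whether $\mhb$ holds between the two actions themselves, so a common $\mhb$-predecessor is irrelevant. (Sequential) requires $i+1=i'$, so two forward steps of one process at the same index are never $\mhb$-related.

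Your fallback (``only meaningful for distinct pids'') changes the statement rather than proving it. Under the literal definitions this case is actually a counterexample. Take a \frname{Slc} with an enabled send on $c$ and an enabled receive on $d\neq c$, or two \frname{Nwc} steps of one process choosing different fresh names, or even $\beta_1=\beta_2$ a single internal step. In each case $\beta_1\idpB\beta_2$ holds. Yet after one of the steps the process's next index is $i+1$, so the other label (index $i$) can never fire. The paper's proof disposes of this case by simply asserting that $\idpB$ excludes actions of the same process. What you should do is state explicitly that you read independence as requiring distinct pids (equivalently, strengthen (Sequential) to all same-pid pairs). After that, your index argument for the mixed-direction case is no longer needed.

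There are three smaller points.

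First, a child's first action has index $|\varepsilon|=0$, not $1$. So the (Creation) clause as printed ($i'=1$) does not literally cover the pair (backward \brname{Nwc}, first forward step of a child). Both you and the paper need the intended reading $i'=0$.

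Second, uniqueness of names and keys in reachable configurations does not make steps on different queues commute, because freshness is a global side condition. Two co-initial sends by different processes on different channels may choose the same fresh key, and two \frname{Nwc} steps may choose the same fresh name. Such pairs are independent, yet the square cannot be closed with the same labels. You need to assume that co-initial transitions use distinct fresh keys and names, or make key generation deterministic; the paper's proof is silent on this too.

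Third, the obstacle you anticipate is not one. A backward receive moves the leftmost element of $\mu'$ and never touches the current contents $\mu$, so a backward send and a backward receive always commute. The only real overlap is a forward receive against a backward send with $|\mu|=1$. That forces equal keys, which (Send-receive) already excludes, so no reachability invariant is needed.
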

    \begin{proof}
        The cases where conflicts must be considered are the following:
        actions of the same process;
        actions of a parent process and its child process;
        and send or receive actions on the same channel.

        First, the condition $\beta_1 \idpB \beta_2$ excludes actions of the same process.

        Second, consider the case of actions of a parent process and its child process.
        It suffices to consider the case where the parent process action corresponds to channel creation.
        If the parent process action is a forward transition, then the child process cannot perform either a forward or a backward action, leading to a contradiction.
        If the parent process action is a backward transition, then the child process undoes its actions.
        However, under $\beta_1 \idpB \beta_2$, the child process cannot execute its first action, which is again a contradiction.

        Finally, consider the case where both actions operate on the same channel.
        By the condition $\beta_1 \idpB \beta_2$, one action must be a send action and the other a receive action, and their keys must be distinct.
        A send action either adds or removes a message at the left end of the contents, whereas a receive action moves a message between the two queues.
        Since the two actions operate on different messages, performing them in either order yields the same result.

        Therefore, in all cases, we can derive co-final transitions.
    \end{proof}

    \begin{lemma}[Backward transitions are independent]
        \label{lem:backward-independent}
        Given $\program$, suppose that $\Ini^{\program} \genBStar{} \Sys$, $\Sys \genB{\overline{\alpha_1}} \Sys_1$ and $\Sys \genB{\overline{\alpha_2}} \Sys_2$,
        and $\overline{\alpha_1} \neq \overline{\alpha_2}$.
        Then, $\overline{\alpha_1} \idpB \overline{\alpha_2}$.
    \end{lemma}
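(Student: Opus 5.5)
We argue by contradiction: assume $\overline{\alpha_1} \neq \overline{\alpha_2}$ are both enabled backward actions from a reachable $\Sys=(\Pi,Q)$, and check each of the three clauses of $\idpB$ for $\alpha_1$ and $\alpha_2$. Two facts drive the argument. First, by the backward rules of Fig.~\ref{fig:process_bwd}, a process $\pcfg{\varp}{h\cdot\varr}{P}$ can only undo the action recorded in its top history item $\varr$. That action is uniquely identified by $\varp$ and the index $|h|$, with the channel, key and value determined by $\varr$, and by $Q$ in the receive case. Second, reachability gives a structural invariant that we will use throughout: every key occurs in at most one send item and one receive item, and for every queue $\qcfg{c}{\sigma}{\mu}{\mu'}$ the order of keys in $\mu\cdot\mu'$ matches the order of the corresponding sends, while the order in $\mu'$ matches the order of the receives. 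We first establish this invariant by induction on the derivation from $\Ini^{\program}$. Lemma~\ref{lem:loop-system} makes the backward steps routine, since each one simply restores a previous configuration.

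\textbf{Must-happen-before.} If $\alpha_1$ and $\alpha_2$ share a pid $\varp$, both must be the top item of the history of $\varp$, so they coincide, which contradicts $\overline{\alpha_1}\neq\overline{\alpha_2}$. This covers the (Sequential) clause. For (Creation), suppose $\alpha_1=\nwcA{\varp}{i}{c}{\seq{\varp}}$ and $\alpha_2$ is the first action, with index $1$, of some child $\varp[j]$. Then \brname{Nwc} requires $\varp[j]$ to have empty history. But $\varp[j]$ is about to undo $\alpha_2$, so its history is nonempty, a contradiction. The indexing needs a little care: a child's first action has index $0$ under $|h|$. We read the definition's $i'=1$ accordingly; in either case it refers to the child's first action. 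For (Send-receive), suppose $\alpha_1$ sends key $k$ on $c$ and $\alpha_2$ receives $k$. Undoing the send with \rrname{QSnd} requires $(k,v)$ to be the leftmost element of $\mu$. Since the receive of $k$ has not been undone, $(k,v)$ lies in $\mu'$. By the uniqueness of keys it cannot also be in $\mu$, a contradiction. The symmetric orientations are handled the same way.

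\textbf{Same-channel sends and receives.} Suppose both actions are sends on $c$. Backward \rrname{QSnd} removes the leftmost message of $\mu$, so both keys would have to equal that unique message's key. Keys are unique, so the two sends are the same action, a contradiction. For two receives on $c$, backward \rrname{QRcv} moves the leftmost element of $\mu'$, which is the most recent receive. So both actions carry the same key, and by key uniqueness they are again the same action.

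The main obstacle is stating and proving the reachability invariant precisely. In particular, we must show that a history item $\sndH{c}{k}{e}{\_}$ is present exactly when $k$ occurs in $\mu\cdot\mu'$, and that $\rcvH{c}{k}{P}{\_}$ is present exactly when $k$ occurs in $\mu'$. The whole argument depends on this correspondence between process histories and queue contents. We expect to prove it together with the key-freshness invariant in a single induction over the derivation.
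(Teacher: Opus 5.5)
Your proposal is correct and follows essentially the same route as the paper. Both use a case analysis on actions of the same process, a creator and its child, and send/receive actions on a common channel, relying on uniqueness of keys and on the fact that a message cannot lie in both the current and history parts of a queue. You go further than the paper by stating the reachability invariant explicitly (key uniqueness, and the correspondence between history items and queue contents), and by observing that backward \rrname{QSnd} and \rrname{QRcv} act only on the leftmost message of $\mu$ and $\mu'$ respectively, which is exactly what justifies the paper's terse send--send and receive--receive step.
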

    \begin{proof}
        We consider the same cases as in the proof of the Square property.

        First, if both transitions are actions of the same process, then backward execution is uniquely determined by the history item to be consumed.
        This implies that the two transitions coincide, which contradicts the assumption $\overline{\alpha_1} \neq \overline{\alpha_2}$.

        Second, suppose that one transition corresponds to the creation of the other process.
        Since the child process can still perform a backward transition, its history cannot be empty.
        However, undoing a channel creation requires that the child process undo its actions, which yields a contradiction.

        Finally, consider the case where both actions operate on the same channel.
        If both actions are send actions, then by the uniqueness of keys, they must be actions of the same process, and thus the two transitions coincide.
        The same argument applies if both actions are receive actions.
        If one action is a send and the other is a receive action, then a message cannot simultaneously be located in both the current-queue part and the history-queue part.
        Therefore, the two actions must have different keys.

        Therefore, in all cases, we conclude that $\overline{\alpha_1} \idpB \overline{\alpha_2}$ holds.
    \end{proof}

    \begin{lemma}[Well-foundedness]
        \label{lem:well-foundedness}
        There is no infinite backward computation: $\Sys_0 \genB{\overline{\alpha_1}} \Sys_1 \genB{\overline{\alpha_2}} \cdots$.
    \end{lemma}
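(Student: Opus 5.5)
We prove Lemma~\ref{lem:well-foundedness} with a natural-number measure on system configurations that every backward transition strictly decreases. For $\Sys=(\Pi,Q)$, let $\|\Sys\|$ be the sum of the lengths $|h|$ of the histories of all processes in $\Pi$, plus the number of processes in $\Pi$. Since $\Pi$ is a finite parallel composition and each history is a finite sequence, $\|\Sys\|\in\mathbb{N}$. An infinite backward computation $\Sys_0 \genB{\overline{\alpha_1}} \Sys_1 \genB{\overline{\alpha_2}} \cdots$ would then produce an infinite strictly decreasing sequence $\|\Sys_0\|>\|\Sys_1\|>\cdots$ in $\mathbb{N}$, which is impossible.

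The core step is to show that $\Sys\genB{\overline{\alpha}}\Sys'$ implies $\|\Sys'\|<\|\Sys\|$. We first reduce to process configurations. By the system rules \brname{Itn} and \brname{Ext} (Fig.~\ref{fig:config}), every backward system transition contains a backward process transition $\Pi\proB{\overline{\alpha}}\Pi'$. In the \brname{Ext} case the queue part also moves, but the measure ignores $Q$, so it suffices to analyse the process rules of Fig.~\ref{fig:process_bwd}.

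We then check each backward process rule. \brname{Snd}, \brname{Rcv}, \brname{IfT}, \brname{Def}, and the omitted \brname{Tau} and \brname{IfF} each remove exactly one history item from the acting process $\varp$ and leave the other processes untouched, so the measure drops by one. \brname{Nwc} removes the item $\nwcH{c}{\varp[1]\cdots\varp[n]}$ from $\varp$ and also deletes the $n$ child processes, whose histories are $\varepsilon$, so the measure drops by $1+n$. For \brname{Slc} we argue by induction on the derivation. The premise is a backward step of $\pcfg{\varp}{h}{\pi_j:P_j}$ that consumes one history item. The conclusion consumes the single item $\varr[\mathsf{alt}]$, which has the same length as $\varr$, and leaves the environment $\Pi$ to $\Pi'$ exactly as in the premise. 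The decrease therefore carries over from the premise.

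The only delicate point is \brname{Nwc}, because it changes the number of processes rather than a single history. That is why the measure counts processes as well as history lengths: counting only history lengths would still work, since the parent loses an item, but counting processes makes the strict decrease immediate in every case. We do not need the reachability assumption here, since the argument never uses where $\Sys_0$ came from.
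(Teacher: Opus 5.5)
Your proof is correct and follows the same idea as the paper, which simply observes that histories are finite and that every backward transition consumes exactly one history item. Your explicit measure and rule-by-rule check, including \brname{Nwc} and \brname{Slc}, fill in details the paper leaves implicit; as you note yourself, the process-count summand is harmless but unnecessary, since total history length already drops by exactly one at every backward step.
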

    \begin{proof}
        This is immediate from the fact that histories are finite and that each backward transition necessarily consumes exactly one history item.
    \end{proof}

    We show that the semantics satisfies causal-consistent reversibility.
    \ogm{We use $d:\Sys_0\genBStar{\rho}\Sys_n$ for a derivation.}

    Applying the axiomatic approach~\cite{10.1145/3648474},
    the key reversibility properties of Causal consistency, Causal safety, and Causal liveness hold for the transitions.

    \begin{theorem}[Causal consistency]
        \label{thm:cc-basic}
        Suppose that $\Sys$ is reachable, $d_1 : \Sys \genBStar{\rho_1} \Sys'$, and $d_2 : \Sys \genBStar{\rho_2} \Sys'$.
        Then $d_1 \cequivN d_2$.
    \end{theorem}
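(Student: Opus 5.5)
We will obtain causal consistency from the axiomatic framework of~\cite{10.1145/3648474}. That work shows that any reversible LTS satisfying the Square property (SP), Backward transitions are independent (BTI), and Well-foundedness (WF) also satisfies causal consistency (CC). It additionally assumes the Loop lemma, so that the LTS is a genuine pre-reversible LTS with reverses. The plan is therefore to check that the revGo system semantics, restricted to configurations reachable from $\Ini^{\program}$, fits this setting, and then to cite the general result.

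\textbf{Step 1: fix the LTS.} Take the LTS whose states are the system configurations reachable from some $\Ini^{\program}$ and whose labels are $\mathcal{A}\cup\overline{\mathcal{A}}$, with transitions $\genB{\beta}$. This set is closed under both forward and backward transitions, so every derivation starting at a reachable $\Sys$ stays inside it. By Lemma~\ref{lem:loop-system}, every transition $t$ has a reverse $\overline{t}$ in the LTS, and the map $\beta\mapsto\overline{\beta}$ is an involution compatible with $\und{\cdot}$. The independence relation $\idpB$ is symmetric by definition. It is irreflexive on the relevant labels because of the (Sequential) clause: an action never commutes with itself. It is also invariant under reversal, since it is defined through $\und{\beta}$. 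These are exactly the structural requirements of~\cite{10.1145/3648474}.

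\textbf{Step 2: verify the axioms.} SP is Lemma~\ref{lem:square-property}, BTI is Lemma~\ref{lem:backward-independent}, and WF is Lemma~\ref{lem:well-foundedness}. All three are stated for reachable $\Sys$, which matches the LTS of Step~1.

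\textbf{Step 3: apply the theorem.} In~\cite{10.1145/3648474}, SP, BTI and WF imply the Parabolic lemma, namely that every derivation is causally equivalent to a backward derivation followed by a forward one. They also imply CC. If one wanted a self-contained argument instead, it would run as follows. Use the Parabolic lemma to put both $d_1$ and $d_2$ into backward--forward form, which gives a derivation $d_1\overline{d_2}$ from $\Sys$ to $\Sys$. Then proceed by induction on length, using cancellation together with the fact that a co-initial, co-final pair of forward derivations determines the same set of events up to swaps.

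\textbf{Main obstacle.} The main obstacle is that the abstract framework identifies a transition by its label and states $\cequivN$ on transitions rather than labels. We must therefore make sure labels carry enough information, that is, that each label determines its transition from a given source. The pid, the index, and for communications the channel and the unique key make this so, and Lemma~\ref{lem:loop-system} already uses it. We would also check that reachability is preserved along the swaps used in the inductions. If there is any mismatch with the framework's notion of independence on transitions, we would lift $\idpB$ on labels to transitions in the obvious way, so that the Swap clause of Definition~\ref{def:causal-equivalence} matches the framework's definition.
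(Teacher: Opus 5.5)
\paragraph{Review.} Your route is the paper's: its proof is a one-line appeal to \cite{10.1145/3648474}, with Lemmas~\ref{lem:loop-system}, \ref{lem:square-property}, \ref{lem:backward-independent} and~\ref{lem:well-foundedness} supplying the axioms. The problem is your Step~1, where you assert that the framework's structural hypotheses hold. The (Sequential) clause relates index $i$ to index $i+1$ of the same pid and never relates an action to itself, so it cannot give irreflexivity. Under the paper's definition, $\itnA{\varp}{i} \idpB \itnA{\varp}{i}$ holds, and the same goes for creation actions. Only sends and receives are self-dependent, through the same-channel clauses.

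More seriously, two different forward transitions of one process at the same index are $\idpB$-independent, yet no square closes them. An example is two select branches labelled $\rcvA{\varp}{i}{c}{k}{v}$ and $\rcvA{\varp}{i}{d}{k'}{v'}$ with $c\neq d$: once either fires, the select is resolved and $\varp$'s next index is $i+1$. So the LTS with independence you get by lifting $\idpB$ to transitions is not irreflexive and does not satisfy SP in full. The framework therefore cannot be invoked for it as a black box. The paper's SP proof tacitly relies on the missing condition (``$\beta_1 \idpB \beta_2$ excludes actions of the same process'').

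The repair is cheap. First, add ``different pids'' to the independence used on transitions. This leaves $\cequivN$ unchanged, because a Swap instance with both labels from the same $\varp$ cannot exist: after one step of $\varp$ its index has moved, so the other label (index $i$ or $i-1$) can no longer fire.

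Second, note that the standard derivation of CC from the axioms uses SP only on pairs of distinct co-initial backward transitions supplied by BTI. That derivation is the Parabolic lemma, followed by induction on forward-only paths out of a backward-irreducible state, which exists by WF. Those pairs come from different processes, since a process's backward step is fixed by its last history item. For such pairs the paper's case analysis does build the square. This also sidesteps forward squares, which can fail for another reason: two co-initial steps may pick the same fresh channel name.

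Finally, your ``main obstacle'' is misplaced, and the claim you make there is false. A select with two receive branches on the same channel $c$ and different continuations yields two distinct transitions, both labelled $\rcvA{\varp}{i}{c}{k}{v}$. Only backward steps are label-deterministic, and that is all Lemma~\ref{lem:loop-system} uses. You never need labels to determine transitions, because both \cite{10.1145/3648474} and Definition~\ref{def:causal-equivalence} are stated on transitions.
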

    \begin{proof}
        We adapt the axiomatic approach from~\cite{10.1145/3648474}.
    \end{proof}

    \ogm{\og{Causal consistency ensures that if} two derivations are coinitial and cofinal, then they \og{are} causally equivalent.}

    \ogm{Let $\sharp(\rho,\beta) = |\{i \mid \beta_i = \beta\}| - |\{i \mid \beta_i = \overline{\beta}\}|$,
        which counts uncancelled occurrences of $\beta$ in $\rho$.}
    \begin{theorem}[\og{Causal safety and liveness}]
        \label{thm:csl-basic}
        Suppose that $\Sys$ is reachable, $\Sys \genB{\alpha} \Sys' \genBStar{\rho} \Sys''$, and $\overline{\alpha}$ does not occur in $\rho$.
        \begin{trivlist}
            \item[\og{(Safety)}]
            \ogm{If $\og{\Sys'''} \genB{\alpha} \og{\Sys''}$ for some $\Sys'''$, then $\alpha \idpB \beta'$ for every $\beta'\in\rho$ such that $\sharp(\rho,\beta') > 0$.}
            \item[\og{(Liveness)}]
            \ogm{If $\alpha \idpB \beta'$ for every $\beta'\in\rho$ such that $\sharp(\rho,\beta') > 0$, then $\Sys''' \genB{\alpha} \Sys''$ for some $\Sys'''$.}
        \end{trivlist}
    \end{theorem}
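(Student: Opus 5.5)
The plan is to instantiate the axiomatic framework of~\cite{10.1145/3648474}, as was done for Theorem~\ref{thm:cc-basic}. That framework derives causal safety and liveness from a small set of axioms on a labelled transition system with an independence relation. The axioms we have are the Loop lemma (Lemma~\ref{lem:loop-system}), the Square property (Lemma~\ref{lem:square-property}), BTI (Lemma~\ref{lem:backward-independent}) and WF (Lemma~\ref{lem:well-foundedness}). The framework additionally needs a notion of \emph{event}, i.e.\ an equivalence on transitions identifying those that ``are the same action'', together with a coinitial/cofinal propagation property (CIRE/IRE style) so that independence is preserved along squares. In our setting the event is simply the action label itself: each forward action $\alpha$ carries a pid and an index $i$, plus a fresh key or a fresh channel name. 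Hence in any derivation from a reachable state a given label can occur uncancelled at most once. So the first step is to show that labels determine events, and that $\idpB$ is defined purely on labels, which makes it trivially invariant under swapping.

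With this in place, I would prove Safety by contradiction. Suppose $\Sys''' \genB{\alpha} \Sys''$ and some $\beta'$ with $\sharp(\rho,\beta')>0$ has $\neg(\alpha\idpB\beta')$. By the Loop lemma, $\Sys''\genB{\overline\alpha}\Sys'''$. Using causal consistency (Theorem~\ref{thm:cc-basic}) and the parabolic lemma that follows from SP+BTI+WF, I would rewrite $\alpha\rho$ into a causally equivalent derivation. In that derivation the uncancelled $\beta'$ appears after $\alpha$, and all cancelled pairs are removed. I would then push $\overline\alpha$ backwards past the remaining actions using SP and BTI. At the step adjacent to $\beta'$ this must fail, because $\mhb$ or queue ordering forces $\beta'$ to depend on $\alpha$. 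Concretely the cases are: same pid; creation; send--receive with a matching key; or two sends or two receives on the same channel, where the FIFO positions in $\mu$ or $\mu'$ fix their order. In each case one checks directly from the rules that $\alpha$ cannot be undone while $\beta'$ is still in the history. Either the history top of the process is not $\alpha$, or the message key is not at the left end of $\mu$ (respectively the right end of $\mu'$), or the spawned process has a non-empty history. This contradicts $\Sys''\genB{\overline\alpha}\Sys'''$.

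For Liveness, I would use the same normal form: $\alpha\rho\cequivN\alpha\rho'$, where $\rho'$ is forward-only after cancellations and contains exactly the actions $\beta'$ with $\sharp(\rho,\beta')>0$. Since $\alpha$ is independent of each of these, I would repeatedly apply the Square property, starting from reachable states so that Lemma~\ref{lem:square-property} applies, to commute $\alpha$ to the end. This yields $\Sys\genBStar{\rho'}\Sys'''\genB{\alpha}\Sys''$. Causal consistency ensures that the endpoint is still $\Sys''$.

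The main obstacle is the Safety case analysis for queue dependencies. Two sends on the same channel are not related by $\mhb$, so the dependence must be read off the queue structure. I must show that an uncancelled later send on $c$ blocks the backward \rrname{QSnd} of $\alpha$, because $(k,v)$ is no longer leftmost, and analogously for receives and $\mu'$. This in turn needs an invariant on reachable configurations: the order of keys in $\mu\cdot\mu'$ coincides with the order of the corresponding uncancelled sends. I would establish this invariant first, by induction on the derivation from $\Ini^{\program}$.
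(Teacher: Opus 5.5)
Your route differs from the paper's. The paper only checks that labels behave as events: indices of a pid strictly increase, so $\alpha$ cannot recur in $\rho$ without $\overline{\alpha}$, and transitions with non-zero event count carry labels with $\sharp(\rho,\beta')>0$. It then invokes the general theorems of~\cite{10.1145/3648474}. You instead re-derive Safety and Liveness by hand from the Loop lemma, SP, BTI and the parabolic normal form. That is viable, but as written it has a genuine gap.

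You assume that after cancellations $\rho$ is forward-only, i.e.\ that every $\beta'$ with $\sharp(\rho,\beta')>0$ is a forward action performed after $\alpha$. This fails as soon as $\Sys$ has a past. For $\Ini^{\program}\genB{\gamma}\Sys\genB{\alpha}\Sys'\genB{\overline{\gamma}}\Sys''$ we have $\rho=\overline{\gamma}$ and $\sharp(\rho,\overline{\gamma})=1$, so $\overline{\gamma}$ falls under both parts of the theorem.

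Your Safety mechanism cannot reach such labels. Pushing $\overline{\alpha}$ back from $\Sys''$ only crosses forward steps, where the two co-initial transitions are both backward and BTI applies. Your case analysis (``$\alpha$ cannot be undone while $\beta'$ is still in the history'') says nothing about an action that is no longer in any history. Likewise in Liveness, a normal form of $\rho$ may keep a pre-$\Sys$ action that is undone and redone with $\sharp=0$. Your hypothesis gives no independence for it, so commuting $\alpha$ to the end is not yet justified.

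The missing ingredient is BTI applied at $\Sys'$, not only at $\Sys''$.
\begin{itemize}
\item Normalise $\rho$ alone, not $\alpha\rho$ (which would move the backward steps in front of $\alpha$), as $\rho\cequivN\rho_b\rho_f$ from $\Sys'$. As in the parabolic lemma, this normal form comes from swapping and cancelling peaks. It therefore preserves $\sharp$, introduces no new labels, and contains neither $\alpha$ nor $\overline{\alpha}$.
\item By Lemma~\ref{lem:loop-system}, $\overline{\alpha}$ is enabled at $\Sys'$ and is co-initial with the first step of $\rho_b$. Lemma~\ref{lem:backward-independent} gives independence, and Lemma~\ref{lem:square-property} keeps $\overline{\alpha}$ enabled one step further. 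Inductively, $\alpha\idpB\overline{\delta}$ for every $\overline{\delta}$ in $\rho_b$, without using the Safety hypothesis.
\item Together with your push of $\overline{\alpha}$ from $\Sys''$ through $\rho_f$, this covers every label with $\sharp(\rho,\beta')>0$.
\item For Liveness, every label of $\rho_b\rho_f$ is now independent of $\alpha$: a label of $\rho_f$ with $\sharp\le 0$ has its reverse in $\rho_b$. Your SP-plus-Loop-lemma commutation then yields $\Sys\genBStar{\rho_b\rho_f}\Sys'''\genB{\alpha}\Sys''$.
\end{itemize}

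Finally, your planned case analysis and queue-order invariant only re-prove Lemma~\ref{lem:backward-independent}. At each step of the push, BTI directly gives $\overline{\alpha}\idpB\overline{\gamma}$, so no argument by contradiction is needed.
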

    \begin{proof}
        \begin{trivlist}
            \item[\og{(Safety)}]
            \ogm{In the forward direction, the indices of actions associated with the same process identifier strictly increase.
            Hence, if the same action $\beta'$ were to appear twice in a derivation, there would necessarily be an occurrence of its reverse $\overline{\beta'}$ in between.
            Therefore, from the assumption $\overline{\alpha}$ does not occur in $\rho$, we can also conclude that $\alpha$ does not occur in $\rho$.
            As a result, the number of occurrences of the event corresponding to the transition $\Sys \genB{\alpha} \Sys'$ in the derivation $\Sys' \genBStar{\rho} \Sys''$ is zero,
            in the sense of event occurrence defined in~\cite{10.1145/3648474}.
            Moreover, for any label $\beta'$ of a transition that occurs with non-zero occurrence in $\rho$, we have $\sharp(\rho,\beta') > 0$.
            Therefore, by applying the axiomatic approach from~\cite{10.1145/3648474}, the theorem follows.}
            \item[\og{(Liveness)}]
            \og{The proof follows the same argument as in the Safety part, by adapting the axiomatic approach from~\cite{10.1145/3648474}.}
        \end{trivlist}
    \end{proof}

    \ogm{\og{Causal safety guarantees that a} transition can be undone only when its reverse is independent of all subsequent transitions that causally depend on it.
    \og{Thus,} actions are undone only in an order consistent with causality.
    \og{Causal liveness states that whenever} the reverse of a transition is independent of all subsequent transitions, it can eventually be undone.
    We can always return to the corresponding $\Ini^{\program}$ in \og{an} order that does not violate causal safety.}

    \section{Rollback-and-Replay Semantics}
    \label{sec:rr-semantics}
    
    This section defines \og{\emph{rollback-and-replay} semantics}, following~\cite{LamiLSCF24}, \og{for reversible debugging}.
    We regard a reachable system configuration obtained by the basic semantics as a \emph{breakpoint}, and define how debugging proceeds from that configuration.
    \sy{Starting from a breakpoint, \emph{rollback} moves the execution back to the point of a specified action, while \emph{replay} executes the specified action forward to the breakpoint,
    preserving the same send-receive correspondence and selections.}

    \og{First, we} introduce \og{replayable processes and replayable channel queues by extending those in Section~\ref{sec:revgo} with} sequences for replay.
    \sy{During backward execution, a replayable process retains the sequence of history items corresponding to undone actions,
    while a replayable channel queue retains the sequence of unsent messages.}
    From the must-happen-before relation on actions performed up to the breakpoint, we derive \og{communication dependency, which identifies independent messages that may be reordered}.
    \og{We then compose replayable processes and channel queues to define a replayable system that allows independent messages to be reordered.}
    \sy{Finally, rollback and replay are defined as controlled executions of the replayable system, guided by action dependency, which combines must-happen-before and communication dependency.
    Table~\ref{tab:notation} summarises the notation used in the semantics.}

    \begin{table}[H]
        \centering
        \caption{Notation used in the semantics}
        \label{tab:notation}
        \setlength{\tabcolsep}{2pt}
        \renewcommand{\arraystretch}{1.0}
        \begin{tabular}{@{}p{0.20\textwidth}|c|p{0.6\textwidth}@{}}
            \hline
            $\pcfg{\varp}{h}{P}$ & Def.~\ref{def:process} & Process ($h$: history). \\
            $\qcfg{c}{\sigma}{\mu}{\mu'}$ & Def.~\ref{def:queue} & Channel queue ($\mu$: current; $\mu'$: history). \\
            $\pcfg{\varp}{h}[h']{P}$ & Def.~\ref{def:replayable-process} & Replayable process ($h'$: replay). \\
            $\qcfg{c}{\sigma}[\mu]{\mu'}{\mu''}$ & Def.~\ref{def:replayable-queue} & Replayable channel queue ($\mu$: replay). \\
            $\proB{\beta}$, $\queBRev{\alpha}$, $\genB{\beta}$ & Figs.~\ref{fig:process_fwd},\ref{fig:process_bwd},\ref{fig:queue_fwd},\ref{fig:config} & Process/queue/system transitions. \\
            $\proL{\beta}$, $\queLRev{\alpha}$, $\genL{\beta}$ & Figs.~\ref{fig:r_process},\ref{fig:r_queue},\ref{fig:replayable_system} & Replayable process/queue/system transitions. \\
            $\rollback{\hat{\Sys}}{\overline{\rho}}$, $\replay{\hat{\Sys}}{\rho}$, $\genR$ & Fig.~\ref{fig:replayrollback} & Rollback/replay (request $\rho$) and transition. \\
            $\alpha\mhb\alpha'$ & Def.~\ref{def:must-happen-before} & Must-happen-before on actions. \\
            $k\cdep{\hat{\Pi}}k'$, $k\bowtie_{\hat{\Pi}}k'$ & Def.~\ref{def:communication-dependency} & Immediate/general communication dependency. \\
            $\mu\qeq{\hat{\Pi}}\mu'$, $\hat{\Sys}\syseq\hat{\Sys}'$ & Def.~\ref{def:replayable-system-equivalence} & Message-sequence/system equivalence. \\
            $\alpha\dep{\hat{\Sys}}\alpha'$ & Def.~\ref{def:action-dependency} & Action dependency for rollback/replay. \\[0.1em]
            \hline
        \end{tabular}
    \end{table}

    \subsection{Replayable Process}
    \label{subsec:replayable-config}
    
    \begin{definition}
        \label{def:replayable-process}
        A replayable process is either active or inactive.
        An active replayable process is $\pcfg{\varp}{h}[h']{P}$, where $\pcfg{\varp}{h}{P}$ is a process and $h'$ is a sequence of history items for replay.
        An inactive replayable process is $\ipcfg{\varp}{h'}$, where $h'$ is a sequence of history items for replay.
        A replayable process configuration, $\hat{\Pi}$, is a parallel composition of replayable processes.
    \end{definition}

    When process creation is executed backward, the created subprocesses are turned into inactive processes, storing a sequence of history items for replay.
    They are reactivated with the sequence of history items when the corresponding process creation is executed again.
    $\mathsf{ad_p}$ adds empty sequences to a process configuration:
    $\adp{\miniprod_{i=1}^n \pcfg{\varp[i]}{h_i}{P_i}} = \miniprod_{i=1}^n \pcfg{\varp[i]}{h_i}[\varepsilon]{P_i}$.

    Fig.~\ref{fig:r_process} defines transitions over replayable process configurations,
    written $\hat{\Pi} \proL{\alpha} \hat{\Pi}'$ and $\hat{\Pi} \proL{\overline{\alpha}} \hat{\Pi}'$.
    For non-creation actions, \brname{LAct} moves the last history item to the sequence for replay,
        and \frname{LAct} redoes the action guided by the sequence for replay.
        \brname{LNwc} turns the subprocesses into inactive processes and \frname{LNwc} reactivates the subprocesses.

    \begin{figure}[H]
        \input{fig6_replayable_process}
        \caption{Forward and backward semantics for replayable process configurations}
        \label{fig:r_process}
    \end{figure}

    \subsection{Replayable Queues}
    
    \begin{definition}
        \label{def:replayable-queue}
        A replayable channel queue is either active or inactive.
        An active replayable channel queue is $\qcfg{c}{\sigma}[\mu]{\mu'}{\mu''}$, where $\qcfg{c}{\sigma}{\mu'}{\mu''}$ is a channel queue and $\mu$ is
        a sequence of messages for replay.
        An inactive replayable channel queue is $\iqcfg{c}{\sigma}{\mu}$, where $\mu$ is a sequence of messages for replay.
        A replayable queue configuration, $\hat{Q}$, is a parallel composition of replayable channel queues.
    \end{definition}

    In the backward direction, when a send action is undone, the unsent message is moved to the sequence for replay.
    When channel creation is undone, the replayable channel queue becomes inactive and stores the message sequence for replay.
    In the forward direction, communications proceed according to the sequence for replay.
    $\mathsf{ad_q}$ adds empty sequences to the queue configuration of a breakpoint:
    $\adq{\miniprod_{i=1}^n \qcfg{c_i}{\sigma_i}{\mu_i'}{\mu_i''}} = \miniprod_{i=1}^n \qcfg{c_i}{\sigma_i}[\varepsilon]{\mu_i'}{\mu_i''}$.

    Fig.~\ref{fig:r_queue} defines transitions over replayable queue configurations, written $\hat{Q} \queLRev{\extA{\varp}{i}} \hat{Q}'$.
    \rrname{LQNwc} makes a replayable channel queue inactive in the backward direction and reactivates it in the forward direction.
    \rrname{LQSnd} moves an unsent message into the sequence for replay in the backward direction, and moves it back in the forward direction.
    \rrname{LQRcv} is similar to \rrname{QRcv} in Section~\ref{sec:revgo}.

    \begin{figure}[H]
        \centering
        
    \begin{tabular}{rll}
        \rrname{LQNwc} & $\hat{Q} \mid \iqcfg{c}{\sigma}{\mu}\ \queLRevLong{\nwcA{\varp}{i}{c}{\seq{\varp}}}\ \hat{Q} \mid \qcfg{c}{\sigma}[\mu]{\varepsilon}{\varepsilon}$ \\[0.2em]

        \rrname{LQSnd} & $\hat{Q} \mid \qcfg{c}{\sigma}[\mu\cdot(k,v)]{\mu'}{\mu''}\ \queLRevLong{\sndA{\varp}{i}{c}{k}{v}}\ \hat{Q} \mid \qcfg{c}{\sigma}[\mu]{(k,v)\cdot\mu'}{\mu''}$ \\[0.2em]

        \rrname{LQRcv} & $\hat{Q} \mid \qcfg{c}{\sigma}[\mu]{\mu'\cdot(k,v)}{\mu''}\ \queLRevLong{\rcvA{\varp}{i}{c}{k}{v}}\ \hat{Q} \mid \qcfg{c}{\sigma}[\mu]{\mu'}{(k,v) \cdot \mu''}$ \\[0.2em]
    \end{tabular}

        \caption{Reversible semantics for replayable queue configurations}
        \label{fig:r_queue}
    \end{figure}

    \subsection{Replayable System with Message Reordering}

    \label{subsec:swap}
    Composing replayable processes and replayable channel queues, we redo actions according to sequences for replay.
    \begin{definition}[Replayable system configuration]
        A replayable system configuration $\hat{\Sys}$ is a pair $(\hat{\Pi},\hat{Q})$ where $\hat{\Pi}$ is a replayable process configuration and $\hat{Q}$ is a replayable channel queue configuration.
    \end{definition}

    Given a system configuration $(\Pi,Q)$, we define the replayable system configuration obtained by adding empty sequences:
    $\ad{(\Pi,Q)} = (\adp{\Pi},\adq{Q})$.
    Given a breakpoint $\Tar$, we use $\ad{\Tar}$ as the starting point of debugging.

    To derive dependencies between communications, we collect actions from history items of replayable processes:
    \begin{center}
        $\act{\pcfg{\varp}{h}[h']{P}} = \idx{\varp,h \cdot h'}$ and $\act{\ipcfg{\varp}{h}} = \idx{\varp,h}$,
    \end{center}
    where $\idx{\varp,\varr[1],\cdots,\varr[n]} = \{\alpha_1 \cdots \alpha_n\}$ such that $\alpha_i$ is $\nwcA{\varp}{i}{c}{\varp[1],\cdots,\varp[n]}$ if $\varr[i] = \nwcH{c}{\varp[1],\cdots,\varp[n]}$,
    $\sndA{\varp}{i}{c}{k}{\_}$ if $\varr[i] = \sndH{c}{k}{e}{slc}$, $\rcvA{\varp}{i}{c}{k}{\_}$ if $\varr[i] = \rcvH{c}{k}{P}{slc}$, and $\itnA{\varp}{i}$ otherwise.
    We write $\act{\hat{\Pi}}$ for the actions associated with $\hat{\Pi}$.
    Note that $\act{\hat{\Pi}} = \act{\hat{\Pi}'}$ for
    $\hat{\Pi} \proL{\alpha} \hat{\Pi}'$ and $\hat{\Pi} \proL{\overline{\alpha}} \hat{\Pi}'$.

    From these actions, we derive dependencies between communications as relations on message keys.
    For an action $\alpha$, let $\key{\alpha}=\{k\}$ if $\alpha$ is of the form $\sndA{\_}{\_}{\_}{k}{\_}$ or $\rcvA{\_}{\_}{\_}{k}{\_}$, and $\key{\alpha}=\varnothing$ otherwise.
    We say that $\alpha$ is \emph{associated} with $k$ if $\key{\alpha}=\{k\}$.
    \begin{definition}[Communication dependency]
        \label{def:communication-dependency}
        Given $\hat{\Pi}$, $k$ immediately depends on $k'$ in $\hat{\Pi}$, written $k \cdep{\hat{\Pi}} k'$,
        if there exist actions $\alpha_1,\cdots,\alpha_n \in\act{\hat{\Pi}}$ $(n \geq 1)$ such that
        \begin{center}
            $\key{\alpha_1}=\{k\}, \quad \key{\alpha_n}=\{k'\}, \quad \text{and} \quad \alpha_1 \mhb \cdots \mhb \alpha_n.$
        \end{center}
        $k$ depends on $k'$ in $\hat{\Pi}$ if $k \cdep{\hat{\Pi}}^* k'$.
        We write $k \bowtie_{\hat{\Pi}} k'$ if $k \cdep{\hat{\Pi}}^* k'$ or $k' \cdep{\hat{\Pi}}^* k$.\\
        $k$ and $k'$ are independent in $\hat{\Pi}$ if $k \centernot\bowtie_{\hat{\Pi}} k'$.
    \end{definition}

    A chain of $\mhb$ steps between actions associated with $k$ and $k'$ induces a one-step dependency between the corresponding communications.
    Its reflexive transitive closure $\cdep{\hat{\Pi}}^*$ accounts for indirect dependencies.

    We use this independence to define when message sequences can be reordered.
    Let $\qeq{\hat{\Pi}}$ be the smallest equivalence relation on message sequences such that
    $\mu_1(k,v)(k',v')\mu_2 \qeq{\hat{\Pi}} \mu_1(k',v')(k,v)\mu_2$
    whenever $k \centernot\bowtie_{\hat{\Pi}} k'$.
    \begin{definition}[Replayable system equivalence]
        \label{def:replayable-system-equivalence}
        We define $\syseq$ as the smallest equivalence relation on replayable system configurations satisfying:
        \begin{center}
            $(\hat{\Pi},\hat{Q}\mid\qcfg{c}{\sigma}[\mu_1'']{\mu_1}{\mu_1'})\ \syseq\ (\hat{\Pi},\hat{Q}\mid\qcfg{c}{\sigma}[\mu_2'']{\mu_2}{\mu_2'})$
        \end{center}
        if $\mu_1\qeq{\hat{\Pi}}\mu_2$, $\mu_1'\qeq{\hat{\Pi}}\mu_2'$, and $\mu_1''\qeq{\hat{\Pi}}\mu_2''$.
    \end{definition}

    Fig.~\ref{fig:replayable_system} defines forward and backward transitions over replayable system configurations,
    written $\hat{\Sys} \genL{\alpha} \hat{\Sys}'$ and $\hat{\Sys} \genL{\overline{\alpha}} \hat{\Sys}'$.
    For a reachable breakpoint $\Tar$, we apply this semantics to $\hat{\Tar} = \ad{\Tar}$ as the initial state of debugging.
    Using $\syseq$, the semantics can reorder messages of independent communications.
    \frname{LItn}, \brname{LItn}, \frname{LExt}, and \brname{LExt} are the replayable counterparts of the system rules.
    \frname{LEq} and \brname{LEq} use $\syseq$ to reorder independent communications.

    \begin{figure}[H]
        \centering
        \input{fig8_replayable_system}
        \caption{Forward and backward semantics for replayable system configurations}
        \label{fig:replayable_system}
    \end{figure}

    \subsection{Rollback-and-Replay System}
    
    \label{subsec:rr}
    Starting from a breakpoint with empty sequences,
    we perform rollback for a specified past action and replay for a specified undone action
    by guiding the replayable system to execute only the actions dependent on the specified action.

    \begin{definition}
        Given $\hat{\Pi}$ and a message sequence $\mu=(k_1,v_1)\cdots (k_n,v_n)$\ogm{,}\\
        $k_i \dashrightarrow_{\hat{\Pi},\mu} k_j$ iff
        $i<j$; $k_i \bowtie_{\hat{\Pi}} k_j$; and for $i<l<j$, $k_{l} \centernot\bowtie_{\hat{\Pi}} k_i$ and $k_{l} \centernot\bowtie_{\hat{\Pi}} k_j$.
    \end{definition}

    To execute actions in a causal order, we define action dependency from the must-happen-before relation and local key dependency.
    \begin{definition}[Action dependency]
        \label{def:action-dependency}
        For $\hat{\Sys} = (\hat{\Pi},\hat{Q})$, let $\dep{\hat{\Sys}}$ be the relation on $\act{\hat{\Pi}}$ defined by $\alpha_1 \dep{\hat{\Sys}} \alpha_2$ iff one of the following conditions holds:
        \begin{itemize}[leftmargin=1.1em]
        \item $\alpha_1 \mhb \alpha_2$;
        \item $\alpha_1 = \sndA{\_}{\_}{c}{k_1}{\_}$, $\alpha_2 = \sndA{\_}{\_}{c}{k_2}{\_}$, and $k_1 \dashrightarrow_{\hat{\Pi},\mu} k_2$; and
        \item $\alpha_1 = \rcvA{\_}{\_}{c}{k_1}{\_}$, $\alpha_2 = \rcvA{\_}{\_}{c}{k_2}{\_}$, and $k_1 \dashrightarrow_{\hat{\Pi},\mu} k_2$,
        \end{itemize}
        where $\qcfg{c}{\sigma}[\mu_1]{\mu_2}{\mu_3}\in\hat{Q}$ and $\mu=\mu_1\cdot\mu_2\cdot\mu_3$.
    \end{definition}

    Action dependency captures the execution order between dependent actions.
    If $\alpha \dep{\hat{\Sys}} \alpha'$, then $\alpha$ is redone before $\alpha'$ in the forward direction,
    and $\alpha'$ is undone before $\alpha$ in the backward direction.

    Finally, we define the rollback-and-replay system, which performs rollback or replay for a specified action by executing only the actions that depend on it.
    We define done actions for replayable processes:
    \begin{center}
        $\done{\pcfg{\varp}{h}[h']{P}} = \idx{\varp,h}$ and $\done{\ipcfg{\varp}{h}} = \varnothing$.
    \end{center}
    We write $\done{\hat{\Pi}}$ for the done actions associated with $\hat{\Pi}$.
    We write $\undone{\hat{\Pi}}$ to denote $\act{\hat{\Pi}} \setminus \done{\hat{\Pi}}$.

    \begin{definition}[Rollback-and-replay system]
        Given a replayable system configuration $\hat{\Sys}$ and a sequence of actions $\rho$ called the {\em request}, $\rollback{\hat{\Sys}}{\overline{\rho}}$ is the {\em rollback system} for $\rho$ and $\replay{\hat{\Sys}}{\rho}$ is the {\em replay system} for $\rho$, where the transitions are defined in Fig.~\ref{fig:replayrollback}.
    \end{definition}

    \begin{figure}[H]
        \input{fig9_replay_rollback}
        \caption{Rollback-and-replay semantics}
        \label{fig:replayrollback}
    \end{figure}

    A rollback of a breakpoint $\Tar$ starts from $\rollback{\ad{\Tar}}{\overline{\alpha}}$, where $\alpha$ is the specified forward action.
    \brname{Req} adds dependent actions to the request, and \brname{Sat} undoes the requested action.
    When the request becomes empty, the rollback is complete.
    Replay can be performed in the same way by \frname{Sat} and \frname{Req}.
    By specifying an arbitrary action, the system can move between the initial state and the breakpoint.

    \begin{example}
        We show a rollback for the division-by-zero error from $\ad{\Tar}$,
        where $\Tar$ is from Example~\ref{ex:multi}.
        Let $\alpha = \sndA{\varp[0]}{1}{c}{k_0}{0}$ and $\alpha' = \rcvA{\varp[2]}{4}{c}{k_0}{0}$.

        \begin{align*}
            \rollback{\ad{\Tar}}{\overline{\alpha}} =& \rollback{(\hat{\Pi}_0,\qcfg{c}{\mathtt{int}}[\varepsilon]{(k_n,1)\cdots(k_2,1)}{(k_0,0)(k_1,1)})}{\overline{\alpha}}\\
            \genR& \rollback{(\hat{\Pi}_0,\qcfg{c}{\mathtt{int}}[\varepsilon]{(k_n,1)\cdots(k_2,1)}{(k_0,0)(k_1,1)})}{\overline{\alpha} \cdot \red{\overline{\alpha'}}}\\
            \genR& \rollback{(\hat{\Pi}_1,\qcfg{c}{\mathtt{int}}[\varepsilon]{(k_n,1)\cdots(k_2,1)\red{(k_0,0)}}{(k_1,1)})}{\overline{\alpha}}\\
            \genR& \rollback{(\hat{\Pi}_2,\qcfg{c}{\mathtt{int}}[\red{(k_0,0)}]{(k_n,1)\cdots(k_2,1)}{(k_1,1)})}{\varepsilon}\\
        \end{align*}

        \noindent
        The rollback first adds $\overline{\alpha'}$ to the request and executes it.
        By reordering independent messages, $\overline{\alpha}$ is executed without undoing the send action of the messages produced by the second sender, which are independent of the error.
    \end{example}

    \section{Properties of Rollback-and-Replay Semantics}
    \label{sec:rr-properties}
    
    We prove that the replayable system is causal-consistent and message swapping is sound with respect to reachability.
    In reversible debugging, always remaining within reachable configurations is important for correctly identifying bugs.
    We show that the rollback-and-replay system satisfies minimality since our channel communication with message swapping is fully asynchronous.

    \subsection{Causal-Consistent Reversibility}
    \label{subsec:4A-rev}
    
    For $\rho = \beta_1 \cdots \beta_n$, we write $\hat{\Sys}_0 \genLStar{\rho} \hat{\Sys}_n$ if $\hat{\Sys}_{i-1} \genL{\beta_i} \hat{\Sys}_i$ for $1 \leq i \leq n$.
    The independence relation is defined from \og{the action dependency} at the given breakpoint $\hat{\Tar}$.
    \begin{definition}[Independence relation over actions]
        Given $\hat{\Tar}$, we define the independence relation on actions, written $\idpL_{\hat{\Tar}}$, as follows:
        For $\beta_1,\beta_2\in\mathcal{A}\cup\overline{\mathcal{A}}$, let $\alpha_1=\und{\beta_1}$ and $\alpha_2=\und{\beta_2}$.
        Then $\beta_1 \idpL_{\hat{\Tar}} \beta_2$ iff $\alpha_1 \ndep{\hat{\Tar}} \alpha_2$ and $\alpha_2 \ndep{\hat{\Tar}} \alpha_1$.
    \end{definition}

    \og{
        We use $t: \hat{\Sys} \genL{\beta} \hat{\Sys}'$ for a transition and $d: \hat{\Sys}_0 \genLStar{\rho} \hat{\Sys}_n$ for a derivation.
        For $t: \hat{\Sys} \genL{\beta} \hat{\Sys}'$, we use $\overline{t}$ to denote $\hat{\Sys}' \genL{\overline{\beta}} \hat{\Sys}$.
        The independence determined at $\hat{\Tar}$ induces the following causal equivalence on derivations.}

    \begin{definition}[Causal equivalence]
        Given $\hat{\Tar}$, let $\cequivL_{\hat{\Tar}}$ be the smallest equivalence relation on derivations closed under concatenation and satisfying:
        \begin{trivlist}
            \item[(Swap)] if $\hat{\Tar} \genLStar{\rho} \hat{\Sys}$, $t_1: \hat{\Sys} \genL{\beta_1} \hat{\Sys}_1$, $t_2: \hat{\Sys} \genL{\beta_2} \hat{\Sys}_2$,
            $t'_2: \hat{\Sys}_1 \genL{\beta_2} \hat{\Sys}'$, $t'_1: \hat{\Sys}_2 \genL{\beta_1} \hat{\Sys}'$, and $\beta_1 \idpL_{\hat{\Tar}} \beta_2$,
            then $t_1 t'_2 \cequivL_{\hat{\Tar}} t_2 t'_1$.
            \item[(Cancellation)] for every transition $t$, $t\,\overline{t} \cequivL_{\hat{\Tar}} \varepsilon$ and $\overline{t}\,t \cequivL_{\hat{\Tar}} \varepsilon$.
        \end{trivlist}
    \end{definition}

    \noindent
    \ogm{To show the reversibility properties for the replayable semantics, we follow the axiomatic approach~\cite{10.1145/3648474} as done for the basic semantics in Section~2.}

    \begin{lemma}[Loop lemma]
        \label{lem:loop-rp}
        Suppose that $\Tar$ is reachable, $\ad{\Tar} \genLStar{\rho'} \hat{\Sys}$, and $\ad{\Tar} \genLStar{\rho''} \hat{\Sys}'$.
        Then $\hat{\Sys} \genL{\beta} \hat{\Sys}'$ iff $\hat{\Sys}' \genL{\overline{\beta}} \hat{\Sys}$.
    \end{lemma}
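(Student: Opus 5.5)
We prove the loop lemma for the replayable system by decomposing a replayable system transition into its constituents, as in the proof of Lemma~\ref{lem:loop-system}. We treat three kinds of steps separately: the process component ($\proL{}$), the queue component ($\queLRev{}$), and the reordering steps \frname{LEq}/\brname{LEq}, which use $\syseq$.

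\textbf{Left to right.} For the queue part nothing needs to be checked. The rules \rrname{LQNwc}, \rrname{LQSnd} and \rrname{LQRcv} are stated bidirectionally, so a forward queue step exists exactly when the corresponding backward step exists. For the process part, the rules \brname{LAct} and \brname{LNwc} move the top history item of $h$ into the replay sequence $h'$, or deactivate the spawned children. The rules \frname{LAct} and \frname{LNwc} do the converse, driven by the head of $h'$. So a forward step followed by a backward step with the same label returns the original replayable process, and symmetrically. Combining the process and queue parts through \frname{LItn}/\frname{LExt} and \brname{LItn}/\brname{LExt} gives the claim for steps without reordering.

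\textbf{Reordering steps.} We first establish an invariant: $\act{\hat{\Pi}}$ is preserved by every $\genL{}$ step. This holds because of the remark that $\act{}$ is invariant under $\proL{}$, and because \frname{LEq}/\brname{LEq} leave $\hat{\Pi}$ unchanged. Consequently the relations $\bowtie_{\hat{\Pi}}$ and $\qeq{\hat{\Pi}}$ are the same along any derivation from $\ad{\Tar}$, and $\syseq$ is a fixed equivalence, in particular symmetric. Hence if a step uses $\hat{\Sys}\syseq\hat{\Sys}_1$ and then a core step $\hat{\Sys}_1\to\hat{\Sys}_2\syseq\hat{\Sys}'$, the reverse step can use $\hat{\Sys}'\syseq\hat{\Sys}_2$, then the reversed core step, then $\hat{\Sys}_1\syseq\hat{\Sys}$. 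For this we need the exact shape of \frname{LEq}/\brname{LEq} in Fig.~\ref{fig:replayable_system}. If those rules allow $\syseq$ on both sides of a core step, the argument above is direct. If they allow it only on one side, we use that the reversed core step is applicable to every $\syseq$-variant in which the relevant message sits at the required end. The reordering can be chosen to bring it there, because the swapped keys are independent.

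\textbf{Right to left, and the main obstacle.} The harder direction is showing that a backward step $\hat{\Sys}'\genL{\overline{\beta}}\hat{\Sys}$ can be followed forward. As in Lemma~\ref{lem:loop-system}, the backward rules \brname{Rcv} and \brname{Def} are weaker than their forward counterparts: they do not check the received value or re-evaluate the definition. We need reachability to recover these conditions. Here we use the hypotheses that both $\hat{\Sys}$ and $\hat{\Sys}'$ are reachable from $\ad{\Tar}$, with $\Tar$ reachable.

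We prove by induction on derivation length that every replayable configuration reachable from $\ad{\Tar}$ is consistent. Consistency means that stripping replay sequences and undoing the recorded histories corresponds to a reachable basic configuration, up to reordering of independent messages. We will also need the fact that backward transitions are deterministic up to $\syseq$. With these, a backward step from a consistent configuration is the inverse of a genuine forward step, exactly as in the basic loop lemma.

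The delicate point is showing that reordering via $\syseq$ preserves consistency. Concretely, a message that has been reordered to the head must still be the one whose history item (value, key, select branch) the receiving process recorded. This is where the independence of keys (Definition~\ref{def:communication-dependency}) is essential. An independent swap never changes which key is paired with which $\mathsf{rcv}$ history item. It only changes queue positions, and position is not recorded in process histories.
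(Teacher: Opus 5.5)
Your decomposition is what the paper intends: its proof just defers to Lemma~\ref{lem:loop-system}. Your observation that $\act{\hat{\Pi}}$, and hence $\qeq{\hat{\Pi}}$ and $\syseq$, stays fixed along derivations from $\ad{\Tar}$ is exactly what the paper leaves implicit for \frname{LEq}/\brname{LEq}. Two steps, however, do not go through as written.

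First, mirroring a reordering on the other side of a step requires $\syseq$ to be allowed on the \emph{target} of a step, not only on its source. Your fallback for the one-sided case fails: making the reversed core step applicable does not make it land exactly on $\hat{\Sys}$. Indeed, with source-only reordering the statement itself would fail. Take $\nwcS{w}{\mathtt{int}}{\sndS{w}{1}\|\sndS{w}{2}}$ with both sends done, so $c$ holds $(k_2,2)(k_1,1)$ and $k_1\centernot\bowtie k_2$. Undoing the send of $k_1$ (after reordering) and then that of $k_2$ reaches $\hat{\Sys}$ with replay sequence $(k_1,1)(k_2,2)$. Redoing the send of $k_1$ reorders this and reaches $\hat{\Sys}'$ with replay sequence $(k_2,2)$ and current sequence $(k_1,1)$. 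Every step from $\hat{\Sys}'$ undoing that send yields replay sequence $(k_2,2)(k_1,1)$, since all sequences of $c$ in $\hat{\Sys}'$ are singletons or empty. So you must rely on the reordering rules closing under $\syseq$ on both sides, and drop the fallback.

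Second, the invariant you propose for the converse is far stronger than needed, and your argument does not establish it. As phrased ($\rmlog{}$ of every configuration reachable from $\ad{\Tar}$ is, up to reordering, a reachable basic configuration), it is Theorem~\ref{thm:soundness}. Its induction step for a backward step enabled only by reordering (e.g.\ undoing a send that is not the last one on its channel) needs a new FIFO-consistent, dependency-compatible permutation of the forward trace. That is what Lemma~\ref{lem:reorder} and the Partition/Reorder construction supply; your remark that swaps do not change key--history pairings does not. Nor can you borrow the paper's result, because its proof goes through Lemma~\ref{lem:normalization} and the axiomatic framework, which presuppose the loop lemma.

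What the weak backward rules actually need is an invariant on the process component alone. In every active process, each history item is coherent with the term obtained by replaying the items before it. For example, after a $\mathsf{rcv}$ item with key $k$ the current term is the stored body instantiated with the value carried by $k$, and after a $\mathsf{def}$ item it is the unfolded body. Likewise, a creation item is coherent with the terms held by its children. This invariant holds at $\ad{\Tar}$ because $\Tar$ is reachable, and each core rule preserves it by inspection. \frname{LEq}/\brname{LEq} preserve it trivially, since they leave $\hat{\Pi}$ unchanged and keys are unique, so no key--value pair changes; this is what your last remark really shows. With it, a core backward step $A\to B$ with $A\syseq\hat{\Sys}'$ and $B\syseq\hat{\Sys}$ is reversed exactly by a core forward step $B\to A$. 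Two-sided closure then gives $\hat{\Sys}\genL{\beta}\hat{\Sys}'$, without even needing determinism of backward steps.
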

    \begin{proof}
        The proof is obtained in the same way as the loop lemma for system configurations.
    \end{proof}

    We apply the axiomatic approach~\cite{10.1145/3648474}.
    To derive Causal consistency, Causal safety, and Causal liveness for both forward and backward actions,
    we establish the Square property, FTI and BTI, and Well-foundedness for both forward and backward computations.

    \begin{lemma}[Square property]
        \label{lem:square-property-rp}
        Suppose that $\Tar$ is reachable, $\hat{\Tar} = \ad{\Tar} \genLStar{\rho} \hat{\Sys}$, $\hat{\Sys} \genL{\beta_1} \hat{\Sys}_1$, $\hat{\Sys} \genL{\beta_2} \hat{\Sys}_2$, and $\beta_1 \idpL_{\hat{\Tar}} \beta_2$.
        Then, there is $\hat{\Sys}'$ such that $\hat{\Sys}_1 \genL{\beta_2} \hat{\Sys}'$ and $\hat{\Sys}_2 \genL{\beta_1} \hat{\Sys}'$.
    \end{lemma}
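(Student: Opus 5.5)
The plan is to follow the structure of the proof of Lemma~\ref{lem:square-property}. We do a case analysis on which rules derive the two transitions $\hat{\Sys} \genL{\beta_1} \hat{\Sys}_1$ and $\hat{\Sys} \genL{\beta_2} \hat{\Sys}_2$, and on which components they touch. The first step is an invariant that is used throughout. The relations $\act{\cdot}$, $\cdep{\hat{\Pi}}$ and $\bowtie_{\hat{\Pi}}$ are unchanged along any derivation from $\hat{\Tar}$, since every $\proL{}$ step preserves $\act{\hat{\Pi}}$. Moreover, the concatenation $\mu_1\cdot\mu_2\cdot\mu_3$ of each active (or inactive) queue stays $\qeq{\hat{\Pi}}$-equivalent to the one in $\hat{\Tar}$. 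Consequently, $\dep{\hat{\Sys}}$ coincides with $\dep{\hat{\Tar}}$, and $\idpL_{\hat{\Tar}}$ can be read off at $\hat{\Sys}$. Before the case analysis, we absorb \frname{LEq}/\brname{LEq}: since $\syseq$ only permutes independent messages, we first normalise each transition up to $\syseq$. Concretely, we show that if $\hat{\Sys} \syseq \hat{\Sys}^\dagger$ and $\hat{\Sys}^\dagger$ performs $\beta$, then every $\syseq$-representative reaching the same configuration class does so as well. This reduces the problem to the structural rules \frname{LItn}, \brname{LItn}, \frname{LExt}, \brname{LExt} applied to chosen representatives.

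The process-level cases come next. If $\beta_1,\beta_2$ belong to distinct processes that are not related by creation, their process steps act on disjoint components. If they are internal, or external on different channels, the queue steps act on disjoint queues, so they commute trivially. Actions of the same process are excluded: consecutive actions are related by $\mhb$ (Sequential), hence by $\dep{\hat{\Tar}}$. For a parent/child pair, it suffices to consider $\nwcA{\varp}{i}{c}{\seq{\varp}}$ together with an action of a child. Undoing (\brname{LNwc}) requires the children to have empty done-histories. Hence the child's possible action is its first one, which is $\mhb$-related by (Creation). Symmetrically, before \frname{LNwc} the child is inactive and cannot move. Both situations therefore contradict independence, as in Lemma~\ref{lem:square-property}.

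The main obstacle is two communications on the same channel $c$. Independence under $\idpL_{\hat{\Tar}}$ no longer forces one send and one receive. Two sends (or two receives) with keys $k_1,k_2$ may be independent when $k_1 \dashrightarrow k_2$ fails. Since both keys must lie on the same queue sequence, this failure means either $k_1 \centernot\bowtie k_2$ or an intervening key separates them. I would first prove an auxiliary claim: if two sends on $c$ are simultaneously enabled and $\dep$-independent, then $k_1 \centernot\bowtie_{\hat{\Pi}} k_2$. The argument is that any intermediate dependence chain forces some key between them that must already be present, and this contradicts both being at the end of the replay sequence (or both at the left end of the current part, for backward sends). We then use $\qeq{\hat{\Pi}}$ to swap $(k_1,v_1)(k_2,v_2)$ and apply \frname{LEq} before the second transition. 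The two orders then reach $\syseq$-related configurations, and a further \frname{LEq} step identifies them as a common $\hat{\Sys}'$. The receive-receive case is symmetric on the right end of the current part and the history part. For a send and a receive on the same channel, the keys differ, because a message cannot be both unsent/replay and current/history. The two operations then act at opposite ends, as in Lemma~\ref{lem:square-property}, except when the queue's current part is a single message that is both sent and received. In that case the keys coincide, so Send-receive $\mhb$ excludes it. Finally, the cases with a backward $\beta$ follow from the forward ones via the Loop lemma (Lemma~\ref{lem:loop-rp}), because all configurations involved are reachable from $\hat{\Tar}$.
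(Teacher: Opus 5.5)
\paragraph{Verdict.} Your route is the paper's. You reuse the case analysis of the basic square property and handle same-channel send--send and receive--receive pairs by reordering with LEq. There is, however, a real gap in the mixed-direction same-channel case, and I believe that case actually fails under the stated definitions.

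\paragraph{What holds.} For two actions of the same direction your auxiliary claim is correct, and it fills a step the paper skips: the paper goes straight from distinct keys to independent messages. The clean argument is that $\qeq{\hat{\Pi}}$ never changes the relative order of two $\bowtie$-related keys, so both cannot be brought to the same end of one segment.

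\paragraph{The gap.} The problem is a forward and a backward send on the same channel, or a forward and a backward receive. The two messages must then reach the right end of the replay part and the left end of the current part (resp.\ the right end of the current part and the left end of the history part). These positions are adjacent in $\mu_1\cdot\mu_2\cdot\mu_3$, so co-enabledness yields no independence. Neither of your tools covers this case.
\begin{itemize}
\item The claim that $\dep{\hat{\Sys}}$ coincides with $\dep{\hat{\Tar}}$ is false. $\dashrightarrow_{\hat{\Pi},\mu}$ depends on the chosen representative, and $\bowtie$ is not transitive. With $k_1\bowtie k_2$, $x\bowtie k_2$ and $x\centernot\bowtie k_1$, the $\qeq{\hat{\Pi}}$-equivalent sequences $k_1\,x\,k_2$ and $x\,k_1\,k_2$ disagree on $k_1\dashrightarrow k_2$.
\item The Loop lemma does not reduce backward or mixed pairs to forward ones. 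It turns a co-initial backward pair into a co-final forward pair, and a mixed pair into two consecutive forward steps. The forward square property addresses neither.
\end{itemize}

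\paragraph{Counterexample.} With the definitions as stated, the mixed case appears to fail outright, so the paper's short proof has the same hole. Let one $\mathtt{newchan}$ spawn four processes:
\begin{itemize}
\item $A$, which sends $0$ then $1$ on $c$;
\item $C$, which sends $2$;
\item $D$, which receives twice;
\item $E$, which receives once.
\end{itemize}
Take the breakpoint where $A$ sends $k_2$, $D$ receives it, $A$ sends $x$, $E$ receives it, $C$ sends $k_1$, and $D$ receives it. Then:
\begin{itemize}
\item $k_2\bowtie x$ through $A$, $k_2\bowtie k_1$ through $D$, and $x\centernot\bowtie k_1$;
\item the history part of $c$ in $\hat{\Tar}$ is $(k_1,2)(x,1)(k_2,0)$, so $x$ blocks $k_1\dashrightarrow k_2$;
\item the sends of $k_1$ and $k_2$ are in different processes, so they are not related by must-happen-before, and hence they are $\idpL_{\hat{\Tar}}$-independent.
\end{itemize}
Now undo $D$'s second receive, $C$'s send, $E$'s receive, $A$'s second send and $D$'s first receive. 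This gives replay part $(k_1,2)(x,1)$, current part $(k_2,0)$ and empty history. Redoing the send of $k_1$ is enabled (via LEq, since $k_1\centernot\bowtie x$), and so is undoing the send of $k_2$. After the former, however, the current part is $(k_1,2)(k_2,0)$, which cannot be reordered, so the latter is disabled. Symmetrically, after undoing the send of $k_2$, redoing the send of $k_1$ is disabled.

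\paragraph{Repair.} The case cannot be closed as the lemma is stated; the definitions must change. One option is to let $\dep{\hat{\Sys}}$ relate every $\bowtie$-related pair of sends (receives) on a channel. Independence then itself gives $k_1\centernot\bowtie k_2$ for mixed pairs, and your swap argument goes through.
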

    \begin{proof}
        The proof follows the same argument as that for the square property of the basic semantics.
        In the case where both actions are send operations on the same channel, their keys are distinct.
        Therefore, by reordering the independent messages in the replayable queues, we can derive co-final transitions.
        The same argument applies to the case where both actions are receive operations.
    \end{proof}

    \begin{lemma}[FTI and BTI]
        \label{lem:fti-bti-rp}
        Suppose that $\Tar$ is reachable, $\hat{\Tar} = \ad{\Tar} \genLStar{\rho} \hat{\Sys}$, $\hat{\Sys} \genL{\beta_1} \hat{\Sys}_1$ and $\hat{\Sys} \genL{\beta_2} \hat{\Sys}_2$, $\beta_1 \neq \beta_2$,
        and $\beta_1,\beta_2$ are both forward transitions or both backward transitions.
        Then, $\beta_1 \idpL_{\hat{\Tar}} \beta_2$.
    \end{lemma}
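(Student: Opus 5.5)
We argue by contradiction. Assume $\beta_1 \neq \beta_2$ are both forward or both backward, and that $\alpha_1 \dep{\hat{\Tar}} \alpha_2$, where $\alpha_j=\und{\beta_j}$ (the symmetric case is identical). We then show that the two transitions cannot both be enabled at $\hat{\Sys}$. The case split follows the three clauses of Definition~\ref{def:action-dependency}.

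\emph{Preliminary invariants.} First, $\act{\hat{\Pi}}$ is invariant along $\genL{}$. This holds for $\proL{}$, as noted after the definition of $\act{\cdot}$, and it also holds across \frname{LEq} and \brname{LEq}, since $\syseq$ leaves $\hat{\Pi}$ unchanged. Consequently the relations $\bowtie_{\hat{\Pi}}$ and $\mhb$ are the same at $\hat{\Sys}$ as at $\hat{\Tar}$.

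Second, for every channel $c$ the concatenation $\mu_1\cdot\mu_2\cdot\mu_3$ of replay, current and history parts is invariant up to $\qeq{\hat{\Pi}}$. This is because \rrname{LQSnd} and \rrname{LQRcv} only move the boundary messages between adjacent parts. Hence, for dependent keys $k \bowtie k'$, their relative order in the combined sequence never changes. Together with the fact that no independent key lies between them, this means $k_1 \dashrightarrow k_2$ at $\hat{\Tar}$ still forces $k_1$ to lie strictly closer to the respective boundary than $k_2$ in every $\syseq$-representative reachable from $\hat{\Tar}$.

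I expect this second invariant to be the main obstacle. It must be proved by induction on $\rho$, checking that swaps of independent keys never move a key across one it depends on. The argument uses that $\dashrightarrow$ is computed over $\mu_1\cdot\mu_2\cdot\mu_3$ and that $\qeq{\hat{\Pi}}$ only swaps $\centernot\bowtie$-pairs.

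\emph{Case $\alpha_1 \mhb \alpha_2$.} This is handled as in Lemma~\ref{lem:backward-independent}, extended to forward moves.

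In the Sequential subcase both actions belong to the same process. A backward move is determined by the last item of $h$, and a forward move is determined by the first item of the replay sequence $h'$ (\frname{LAct}). So at most one action per direction is enabled, which contradicts $\beta_1\neq\beta_2$.

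In the Creation subcase, going forward the child is inactive until \frname{LNwc} fires, so $\alpha_2$ is not enabled. Going backward, \brname{LNwc} requires the children to have empty $h$, so the child cannot simultaneously undo its first action.

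In the Send-receive subcase with the same key $k$, the message $(k,v)$ must be at the right end of the replay part for a forward send and at the right end of the current part for a forward receive. These are disjoint positions. Backward, it must be the leftmost current message for the send and the leftmost history message for the receive. Since a message cannot be in two parts, this gives a contradiction as before.

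\emph{Case of two sends or two receives on $c$ with $k_1\dashrightarrow k_2$.} For two forward sends, \rrname{LQSnd} needs the message at the right end of the replay part $\mu_1$, modulo \frname{LEq}. By the second invariant, $k_2$ can be brought to that end only if $k_1$, which is dependent and ordered before it, has already left $\mu_1$. So both cannot be enabled. The three remaining combinations (backward sends, forward receives, backward receives) are symmetric, using the left end of $\mu_2$, the right end of $\mu_2$, and the left end of $\mu_3$ respectively.

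In all cases this contradicts the assumption that both transitions are enabled, so $\alpha_1 \ndep{\hat{\Tar}} \alpha_2$ and $\alpha_2 \ndep{\hat{\Tar}} \alpha_1$, and hence $\beta_1 \idpL_{\hat{\Tar}} \beta_2$.
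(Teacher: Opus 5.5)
Your proposal is correct and follows essentially the same route as the paper: the must-happen-before clause is handled by the BTI-style case analysis (same process, parent/child creation, same-key send/receive), extended to forward replay steps. The same-channel send--send and receive--receive clauses are handled by observing that two dependent keys can never both occupy the relevant end of a queue part, which you make more explicit than the paper via the invariance of the action set and of the relative order of dependent keys under queue equivalence.

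One harmless slip: under the paper's convention ($k_1 \dashrightarrow k_2$ means the message with key $k_1$ lies to the left of that with key $k_2$), it is $k_2$, not $k_1$, that reaches the right end of the replay part first. Your contradiction only needs that at most one of the two can sit there, so nothing breaks.
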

    \begin{proof}
        When both $\beta_1$ and $\beta_2$ are backward,
        for all cases other than send--send and receive--receive, the same argument as in the backward-transition independence (BTI)
        of the basic semantics applies.
        In the case where both actions are send operations on the same channel, both keys $k_1$ and $k_2$ can be moved to the contents by reordering the queue.
        Hence, the two keys are not ordered and we obtain $\beta_1 \idpL_{\hat{\Tar}} \beta_2$.
        The same argument applies to the case where both actions are receive operations on the same channel,
        since both $k_1$ and $k_2$ can be moved to the right of $\circ$.

        When both $\beta_1$ and $\beta_2$ are forward,
        for all cases other than send--send and receive--receive, the same argument as in the backward case applies.
        For the send--send case, the sequences for replay in replayable queues ensure that dependent send actions cannot be co-initial.
        Similarly, for the receive--receive case, the sequences for replay in replayable queues ensure that dependent receive actions cannot be co-initial.
        Therefore, we can derive $\beta_1 \idpL_{\hat{\Tar}} \beta_2$.
    \end{proof}

    \begin{lemma}[Well-foundedness]
        \label{lem:wf-rp}
        There is neither an infinite forward computation $\hat{\Sys}_0 \genL{\alpha_1} \hat{\Sys}_1 \genL{\alpha_2} \cdots$
        nor an infinite backward computation $\hat{\Sys}_0 \genL{\overline{\alpha_1}} \hat{\Sys}_1 \genL{\overline{\alpha_2}} \cdots$
        on the replayable system.
    \end{lemma}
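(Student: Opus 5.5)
The plan is to give a single measure argument that handles both directions. I will fix a replayable configuration $\hat{\Sys}=(\hat{\Pi},\hat{Q})$ and look at the total length of recorded history. For the backward direction, I use the sum over all active replayable processes $\pcfg{\varp}{h}[h']{P}$ of $|h|$. For the forward direction, I use the sum over all replayable processes of the replay lengths $|h'|$, where $h'$ also counts for inactive processes $\ipcfg{\varp}{h'}$. The key invariant, already noted in the paper, is that $\act{\hat{\Pi}}$ is preserved by every $\proL{\beta}$ step. Consequently the multiset of history items, done plus undone, is finite and fixed along any computation from $\hat{\Sys}_0$. \frname{LEq} and \brname{LEq} only reorder messages within queues under $\syseq$, so they change neither process component.

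For backward computations, I will inspect the rules in Fig.~\ref{fig:replayable_system} and show that every backward step moves exactly one item from some done history $h$ into the replay sequence $h'$. For \brname{LAct} this is direct. For \brname{LNwc}, the parent's top item moves to its replay sequence, and the children become inactive. An inactive process contributes nothing to the done measure, and since the side condition requires empty done histories (as in \brname{Nwc}), the children lose nothing. The done measure therefore strictly decreases by one at each step, and it is a natural number, so no infinite backward computation exists.

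For forward computations, the symmetric argument applies. \frname{LAct} consumes the first item of $h'$ and appends it to $h$. \frname{LNwc} consumes the parent's replay item and reactivates children, whose own replay items are already counted in the measure. Each forward step therefore strictly decreases the total replay length, which gives termination. This is the substantive point, because unlike the basic semantics, the replayable semantics has no free forward execution: every forward step is guided by $h'$, and on the queue side \rrname{LQSnd} can only send a message from the replay sequence $\mu$.

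The main obstacle is checking that the forward rules really are replay-only. In particular, I must confirm that no rule lets a process with $h'=\varepsilon$ perform a fresh forward action, for instance through an embedded basic \frname{Ext}. I also need to confirm that the counting handles inactive processes correctly. If such a fresh-execution rule did exist, I would restrict the claim to the replayable steps, or else appeal to the finiteness of $\act{\hat{\Pi}}$ together with the observation that each forward action can occur only once without an intervening undo.
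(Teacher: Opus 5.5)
Your proposal is correct and takes the same approach as the paper, which only says the result is immediate because histories and replay sequences are finite; you make the measure explicit, using done-history length for backward steps and total replay length, including inactive processes, for forward steps. Two small corrections: \frname{LEq}/\brname{LEq} are not pure reorderings but wrap a genuine labelled step after a $\syseq$-rearrangement (as the soundness proof uses them), and since $\syseq$ fixes $\hat{\Pi}$ they decrease your measure just as the embedded step does; and your ``main obstacle'' is already settled by the invariant $\act{\hat{\Pi}}=\act{\hat{\Pi}'}$ you cite, since a fresh forward action would add a new element to $\act{\hat{\Pi}}$.
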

    \begin{proof}
        This is immediate from the fact that histories and sequences for replay are finite.
    \end{proof}

    We derive the causal-consistent reversibility properties based on the axiomatic framework~\cite{10.1145/3648474}.

    \begin{theorem}[Causal consistency]
        \label{thm:cc-rp}
        Suppose that $\Tar$ is reachable, $\hat{\Tar} = \ad{\Tar} \genLStar{\rho'_1} \hat{\Sys}$,
        $d_1:\hat{\Sys} \genLStar{\rho_1} \hat{\Sys}'$, and $d_2:\hat{\Sys} \genLStar{\rho_2} \hat{\Sys}'$.
        Then $d_1 \cequivL_{\hat{\Tar}} d_2$.
    \end{theorem}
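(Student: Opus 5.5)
We obtain the statement as an instance of the axiomatic framework of~\cite{10.1145/3648474}, exactly as for Theorem~\ref{thm:cc-basic}. The abstract result there says the following. Take a labelled transition system with a fixed independence relation on labels. Suppose it satisfies the Loop lemma, the Square property (SP), Backward Transitions are Independent (BTI) and Well-foundedness (WF), all restricted to the states reachable from a fixed initial state. Then any two coinitial and cofinal derivations from such a state are causally equivalent.

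We instantiate it as follows. The LTS is the replayable system $\genL{}$, with states restricted to those reachable from $\hat{\Tar}=\ad{\Tar}$. The independence relation is $\idpL_{\hat{\Tar}}$. It is fixed once and for all at the breakpoint, so it is a static relation on labels, as the framework requires. Its symmetry and irreflexivity are immediate from the definition, since $\mhb$ is irreflexive and $\dashrightarrow$ is strict.

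The hypotheses are then discharged by earlier results:
\begin{itemize}
\item the Loop lemma by Lemma~\ref{lem:loop-rp};
\item SP by Lemma~\ref{lem:square-property-rp};
\item BTI by the backward half of Lemma~\ref{lem:fti-bti-rp};
\item WF by Lemma~\ref{lem:wf-rp}.
\end{itemize}
All of these are stated for configurations reachable from $\hat{\Tar}$, which covers $\hat{\Sys}$ and every intermediate state of $d_1$ and $d_2$, since each is reached from $\hat{\Tar}$ via $\rho'_1$ followed by a prefix of $\rho_i$. The framework first proves a parabolic lemma: every derivation is causally equivalent to a backward derivation followed by a forward one. It does so by pushing backward steps leftward with SP and cancellation, and well-foundedness is what makes this terminate. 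It then shows that two coinitial, cofinal derivations that are both in this backward-then-forward form are equivalent. That argument is an induction on length, using BTI and SP to permute steps. Together these give $d_1\cequivL_{\hat{\Tar}} d_2$.

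The main obstacle is that $\genL{}$ is not a plain LTS. The rules \frname{LEq} and \brname{LEq} allow the queue contents to be rearranged up to $\syseq$. So a single label may lead to several states that differ only by message order, and the Swap clause of $\cequivL_{\hat{\Tar}}$ quantifies over the concrete intermediate states. I would handle this by checking that $\qeq{\hat{\Pi}}$ is invariant along $\genL{}$, using the observation already made that $\act{\hat{\Pi}}$ and hence $\bowtie_{\hat{\Pi}}$ are preserved by every transition. It follows that $\syseq$ is a congruence for the rules.

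There are then two ways to proceed. The first is to work on $\syseq$-classes, so that transitions become label-deterministic, and verify the axioms there. The second is to observe that the co-final target $\hat{\Sys}'$ in Lemma~\ref{lem:square-property-rp} is chosen concretely, and that the framework only ever needs the existence of the square. I would take the first route, since it makes the Loop lemma's uniqueness of backward steps hold literally.

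A second check is that the framework's use of coinitial transitions with the same label agrees with our transitions. Pids and indices make every label unique, so two transitions with the same label from one state have $\syseq$-equivalent targets; this is exactly what the quotient requires. With these points settled, the abstract theorem applies directly.
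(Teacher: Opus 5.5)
Your strategy is the paper's: its proof is only an appeal to the axiomatic framework of~\cite{10.1145/3648474} through Lemmas~\ref{lem:loop-rp}--\ref{lem:wf-rp}. The points you add to make that appeal rigorous do not hold as stated, and the third gap below cannot be repaired for the concrete statement.

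\textbf{Irreflexivity.} $\idpL_{\hat{\Tar}}$ is not irreflexive. It holds exactly when $\dep{\hat{\Tar}}$ fails in both directions. Irreflexivity of $\mhb$ and strictness of $\dashrightarrow$ therefore give $\beta\idpL_{\hat{\Tar}}\beta$, and even $\beta\idpL_{\hat{\Tar}}\overline{\beta}$, for every $\beta$: your argument proves the opposite of your claim. With $\beta_1=\beta_2$, Lemma~\ref{lem:square-property-rp} would require the same action to fire twice in a row. So the independence you feed to the framework must be restricted to distinct labels. On the concrete replayable system, that restricted relation violates BTI, because \brname{LEq} yields distinct coinitial backward transitions with the same label (see the example below). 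The quotient is therefore forced, and your second route is not available.

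\textbf{Label-determinism on classes.} This does not follow from uniqueness of labels. It fails for $\syseq$ as given in Definition~\ref{def:replayable-system-equivalence}, which only permutes sequences of \emph{active} queues.
\begin{itemize}
\item Let the main process create $c$ and spawn two children that each send once on $c$ (messages $m_0,m_1$, whose keys are independent). Take the breakpoint after both sends.
\item Undo the two sends in the two possible orders; one order needs \brname{LEq}. This gives $a_1\syseq a_2$, differing only in the replay sequence of $c$: $m_1m_0$ versus $m_0m_1$.
\item Undo $\alpha=\nwcA{\rootp}{0}{c}{\seq{\varp}}$ by \brname{LNwc} and \rrname{LQNwc}. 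This gives $b_1,b_2$ containing $\iqcfg{c}{\mathtt{int}}{m_1m_0}$ and $\iqcfg{c}{\mathtt{int}}{m_0m_1}$, which are not $\syseq$-related.
\item Yet \brname{LEq}, used exactly as in the proof of Theorem~\ref{thm:soundness-app}, gives both $a_1\genL{\overline{\alpha}}b_1$ and $a_1\genL{\overline{\alpha}}b_2$.
\end{itemize}
So $\syseq$ must first be extended to inactive queues. Determinism on classes then still needs an argument: cancellativity of $\qeq{\hat{\Pi}}$ for the send and receive rules.

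\textbf{Returning to concrete derivations.} Even after that extension, you have only proved causal consistency for derivations between classes. The statement is about concrete derivations and the concrete Swap clause of $\cequivL_{\hat{\Tar}}$. No lifting argument can bridge this, because the concrete reading fails on the same example.
\begin{itemize}
\item At $b_1$ and $b_2$ the only enabled action is $\alpha$, so no Swap instance has a corner there.
\item Hence the number of occurrences of the transition $a_1\genL{\overline{\alpha}}b_1$, minus those of its reverse, is invariant under Swap and Cancellation.
\item Consider $a_1\genL{\overline{\alpha}}b_1\genL{\alpha}a_2$ and $a_1\genL{\overline{\alpha}}b_2\genL{\alpha}a_2$. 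The steps into $a_2$ exist by \brname{LEq} and Lemma~\ref{lem:loop-rp}.
\item These derivations are coinitial and cofinal, but the invariant is $1$ for the first and $0$ for the second, so they are not $\cequivL_{\hat{\Tar}}$-equivalent.
\end{itemize}
What your route can establish is the theorem with configurations, and hence Swap, taken modulo the extended $\syseq$. You need to state that restriction explicitly rather than conclude that the abstract theorem applies directly. The paper's one-line proof passes over the same point.
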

    \begin{proof}
        We adapt the axiomatic approach from~\cite{10.1145/3648474}.
    \end{proof}

    \og{The following properties ensure that an action can be undone or redone exactly when the corresponding transition respects action dependency.}

    \begin{theorem}[\og{Causal safety and liveness}]
        \label{thm:csl-rp}
        Suppose that $\Tar$ is reachable, $\hat{\Tar} = \ad{\Tar} \genLStar{\rho} \hat{\Sys} \genL{\beta} \hat{\Sys}' \genLStar{\rho'} \hat{\Sys}''$,
        and $\overline{\beta}$ does not occur in $\rho'$.
        \begin{trivlist}
            \item[\og{(Safety)}]
            \ogm{If $\hat{\Sys}''' \genL{\beta} \hat{\Sys}''$ for some $\hat{\Sys}'''$, then
            $\beta \idpL_{\hat{\Tar}} \beta'$ for every $\beta' \in \rho'$ such that $\sharp(\rho',\beta') > 0$.}
            \item[\og{(Liveness)}]
            \ogm{If $\beta \idpL_{\hat{\Tar}} \beta'$ for every $\beta' \in \rho'$ such that $\sharp(\rho',\beta') > 0$,
            then $\hat{\Sys}''' \genL{\beta} \hat{\Sys}''$ for some $\hat{\Sys}'''$.}
        \end{trivlist}
    \end{theorem}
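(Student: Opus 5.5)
We follow the same pattern as the proof of Theorem~\ref{thm:csl-basic}, now with the replayable semantics and the independence relation $\idpL_{\hat{\Tar}}$. We instantiate the axiomatic framework of~\cite{10.1145/3648474} with the transition system of replayable configurations reachable from $\hat{\Tar}$. The required axioms are already available:
\begin{itemize}
\item the Loop lemma (Lemma~\ref{lem:loop-rp});
\item the Square property (Lemma~\ref{lem:square-property-rp});
\item FTI and BTI (Lemma~\ref{lem:fti-bti-rp});
\item Well-foundedness in both directions (Lemma~\ref{lem:wf-rp}).
\end{itemize}
From these the framework yields causal consistency (Theorem~\ref{thm:cc-rp}). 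It also yields causal safety and liveness, stated in terms of \emph{events}, i.e.\ equivalence classes of transitions under the co-initial/co-final square relation. What remains is to translate the event-based statement into the label-based statement given here, using the counting function $\sharp$.

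\textbf{Step 1: labels identify events.} We show that two transitions in derivations from $\hat{\Tar}$ carrying the same label $\und{\beta}$ belong to the same event. The label fixes the pid $\varp$ and the index $i$. Since $\act{\hat{\Pi}}$ is invariant under $\proL{}$ (as noted before Definition~\ref{def:communication-dependency}), each $(\varp,i)$ corresponds to exactly one history item in $h\cdot h'$. Because keys are fixed in the breakpoint histories, replay reuses the same key and the same value. Hence every occurrence of $\und{\beta}$ redoes or undoes the same action, and the square relation connects any two such transitions via causal consistency.

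\textbf{Step 2: occurrence counting.} Consider a derivation in which $\overline{\beta}$ does not occur in $\rho'$. Along any path, forward and backward occurrences of a given action $\alpha$ must alternate: after $\alpha$ is executed, its history item sits on top of $h$ (respectively, at the right end of a replay sequence) and must be undone before $\alpha$ can fire again. The same holds for the queue side, because the message with key $k$ moves between the replay sequence, the current queue and the queue history. Hence $\beta$ itself does not occur in $\rho'$, and the event of $\beta$ has zero net occurrences in $\rho'$. Likewise, an event has positive net occurrence in $\rho'$ exactly when $\sharp(\rho',\beta')>0$ for its label. This matches the hypotheses of the framework's causal safety and liveness theorems, which then give the two clauses directly.

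\textbf{Main obstacle.} The delicate point is Step~1 in the presence of \frname{LEq}/\brname{LEq}. Reordering by $\syseq$ could make a transition with the same label start from a queue arrangement that differs only up to $\qeq{\hat{\Pi}}$, which seems to threaten both the uniqueness and the alternation argument. We resolve this in two moves:
\begin{itemize}
\item We argue that $\syseq$ changes neither the process component nor the set of keys in each of the three queue segments. Therefore the position of a key (replay / current / history) is an invariant, and the alternation argument of Step~2 goes through unchanged.
\item We check that $\dep{\hat{\Tar}}$ is computed from $\act{\hat{\Pi}}$ and from the multiset of keys in each queue, both of which are stable along derivations. Hence $\idpL_{\hat{\Tar}}$ is the appropriate fixed independence relation for applying the framework.
\end{itemize}
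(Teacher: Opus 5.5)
Your route is the paper's: instantiate the axiomatic framework with the loop lemma, SP, FTI/BTI and two-sided well-foundedness. Then use the index/alternation argument to show that $\beta$ has no net occurrence in $\rho'$, match event occurrences with $\sharp$, and invoke the framework's causal safety and liveness. For backward $\beta$ the paper says explicitly that the framework is applied with forward and backward swapped; that is what FTI and forward well-foundedness are for, and you should say so too.

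The gap is Step~1. ``The square relation connects any two such transitions via causal consistency'' is not an argument, and the claim is false for this LTS, precisely because of \frname{LEq}. Your `main obstacle' paragraph only shows that key positions are invariant under $\syseq$. That rescues the alternation argument, but not event identity.

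Here is a counterexample. Take $\nwcS{w,y}{\mathtt{int}}{\sndS{y}{1}[\sndS{w}{5}] \| \sndS{y}{2}[\rcvS{w}{x}]}$, put the breakpoint after all four communications, and let $\alpha$ be the receive.
\begin{itemize}
\item Undoing $\alpha$ gives a configuration $X$ whose $y$-queue holds two independent messages, since no $\mhb$-chain links the two sends on $y$.
\item So \frname{LEq} yields two distinct $\alpha$-transitions out of $X$: one to $\hat{\Tar}$, and one to the variant of $\hat{\Tar}$ with those two messages swapped.
\item The only other transitions at $X$ undo the send on $w$ or the receiver's send on $y$. Both depend on $\alpha$ (send--receive and sequential, respectively).
\item At either target only $\overline{\alpha}$ is enabled.
\end{itemize}
Hence neither $\alpha$-transition lies in any square, and they are two distinct events with the same label.

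Consequently, the Safety hypothesis $\hat{\Sys}'''\genL{\beta}\hat{\Sys}''$ does not place that transition in the event of $\hat{\Sys}\genL{\beta}\hat{\Sys}'$. Likewise, your hypothesis on labels with $\sharp(\rho',\beta')>0$ no longer yields the framework's hypothesis on events with positive count. The paper leaves this translation implicit; it only asserts that labels of events with non-zero count have positive $\sharp$. So you have correctly located a step it glosses over, but you filled it with a false lemma.

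The same pair of transitions also undermines the imported axioms as literally stated. The two $\overline{\alpha}$-steps out of $\hat{\Tar}$ are co-initial, and their labels are independent under $\idpL_{\hat{\Tar}}$ (no clause of $\dep{\hat{\Tar}}$ relates $\alpha$ to itself), yet they admit no square.

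A natural repair is to run the argument on the LTS quotiented by $\syseq$, where these duplicates collapse. There you still have to prove label--event identification, for instance by moving between the sources of two $\beta$-transitions through actions that are neither causes nor effects of $\und{\beta}$, and transporting with SP. For Safety you can instead bypass events and argue directly from your alternation invariant, by case analysis on the clauses of $\dep{\hat{\Tar}}$.

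A minor point: $\dep{\hat{\Tar}}$ is not determined by key multisets, because $\dashrightarrow$ depends on positions in $\mu_1\cdot\mu_2\cdot\mu_3$. This is harmless only because $\idpL_{\hat{\Tar}}$ is fixed at $\hat{\Tar}$ by definition.
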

    \begin{proof}
        \begin{trivlist}
            \item[\og{(Safety)}]
            \og{When $\beta$ is a forward action, the proof follows the same argument as in the Safety part of Theorem~\ref{thm:csl-basic}, by adapting the axiomatic approach from~\cite{10.1145/3648474}.}
            \ogm{When $\beta$ is a backward action, we can apply the axiomatic approach by symmetrically reversing the roles of the forward and backward directions.}
            \item[\og{(Liveness)}]
            \og{The proof follows the same argument as in the Safety part.}
        \end{trivlist}
    \end{proof}

    \subsection{Soundness}
    \label{subsec:4B-sound}
    
    \sy{
    Given $\hat{\Sys}=(\hat{\Pi},\hat{Q})$, define $\rmlog{\hat{\Sys}}=(\Pi,Q)$, where $\Pi$ and $Q$ result from the removal of replay history information:
    $\pcfg{\varp[i]}{h_i}[h'_i]{P_i}$ is projected to
    $\pcfg{\varp[i]}{h_i}{P_i}$ for process $p_i$, and
    $\qcfg{c_i}{\sigma_i}[\mu_i]{\mu'_i}{\mu''_i}$
    to $\qcfg{c_i}{\sigma_i}{\mu'_i}{\mu''_i}$ for channel $c_i$.}

    Reordering messages can result in a greater number of configurations due to the equivalence of queues.
    The following theorem demonstrates that reordering independent messages does not expand the set of reachable process configurations.

    \begin{theorem}[Soundness]
        \label{thm:soundness}
        \og{Given $\program$}, suppose that $\Tar$ is reachable \og{from $\program$}, $\rho \in (\mathcal{A}\cup\overline{\mathcal{A}})^*$, and $\ad{\Tar} \genLStar{\rho} \hat{\Sys}$.
        Then there exists $\hat{\Sys}'$ such that $\hat{\Sys}' \syseq \hat{\Sys}$ and $\rmlog{\hat{\Sys}'}$ is reachable \og{from $\program$}.
    \end{theorem}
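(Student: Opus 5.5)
We argue by induction on the length of $\rho$, with an invariant stronger than the statement. Fix the derivation $d_0:\Ini^{\program}\genBStar{\rho_0}\Tar$ witnessing reachability of $\Tar$. The invariant for a configuration $\hat{\Sys}$ with $\ad{\Tar}\genLStar{\rho}\hat{\Sys}$ is: there exist $\hat{\Sys}'\syseq\hat{\Sys}$ and a forward derivation $d':\Ini^{\program}\genBStar{\rho'}\rmlog{\hat{\Sys}'}$ such that every action in $\rho'$ is taken from $\act{\hat{\Pi}}$. Note that $\act{\hat{\Pi}}$ is the same set for every configuration reachable from $\ad{\Tar}$. In addition, the replay components of $\hat{\Sys}'$ must be consistent with $d'$: appending the undone actions (the replay sequences $h'$ of processes and $\mu$ of queues) gives a forward derivation to $\Tar$ up to $\syseq$. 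In other words, $\rmlog{\hat{\Sys}'}$ lies on a forward derivation causally equivalent (Theorem~\ref{thm:cc-basic}) to a $\syseq$-reordering of $d_0$. The base case is $\hat{\Sys}=\ad{\Tar}$, taking $\hat{\Sys}'=\hat{\Sys}$ and $d'=d_0$.

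For the inductive step we split on the last rule used.

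\emph{\frname{LEq}/\brname{LEq}.} These change $\hat{\Sys}$ only up to $\syseq$, so the same witness works, since $\syseq$ is an equivalence.

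\emph{Backward rules \brname{LItn}, \brname{LExt}.} These remove the last history item $\alpha$ of some process and the matching queue effect. We must show that $\alpha$ can be moved to the end of $d'$. Every action depending on $\alpha$ via $\mhb$ is already undone, because the process history is popped in order, children must be empty before \brname{LNwc}, and a receive of key $k$ must be undone before its send can leave the queue. Every action on the same channel in the wrong queue position has been placed behind $\alpha$ by the $\syseq$ reordering that enabled the rule, and that reordering only swapped keys that are independent ($\centernot\bowtie_{\hat{\Pi}}$). Hence the corresponding actions of $d'$ are independent in the sense of $\idpB$, and by repeated use of the Square property (Lemma~\ref{lem:square-property}) we permute $d'$ so that it ends with $\alpha$. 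Cancelling this last step with the Loop lemma (Lemma~\ref{lem:loop-system}) gives a forward derivation to $\rmlog{\hat{\Sys}''}$ for the appropriate $\hat{\Sys}''\syseq\hat{\Sys}$.

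\emph{Forward rules.} Here the action $\alpha$ is the head of some replay sequence. By the consistency part of the invariant, the suffix of the reordered $d_0$ starting at $\rmlog{\hat{\Sys}'}$ contains $\alpha$, and every action preceding $\alpha$ in that suffix is independent of it. Otherwise that action would be a pending $\mhb$-predecessor, which contradicts the fact that the process or queue redoes exactly this item, or it would be a same-channel action ordered before $k$ by a $\bowtie$-chain, which the replay sequence order respects. So $\alpha$ can be permuted to the front of the suffix, and $d'\cdot\alpha$ is a forward derivation reaching $\rmlog{\hat{\Sys}''}$, possibly after reordering the queue by $\syseq$.

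The main obstacle is the key lemma used in both cases. If two messages are swapped by $\qeq{\hat{\Pi}}$, the send (respectively receive) actions of the two keys, and all actions between them in the derivation, can be commuted using only $\idpB$ swaps in the basic semantics. Concretely, we must show that $k\centernot\bowtie_{\hat{\Pi}}k'$ implies there is no $\mhb^+$ chain between the send of $k$ and the send of $k'$, nor between the corresponding receives, within actions of the derivation. This holds because any such chain passes through key-associated actions, which would yield $k\cdep{\hat{\Pi}}^*k'$ or the converse. Given this, adjacent same-channel sends with independent keys can be exchanged: they are not $\mhb$-related, and the queue orders they produce differ exactly by the $\qeq{\hat{\Pi}}$ swap. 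Receives are handled similarly, after first moving intervening independent actions aside. I would prove this lemma first, by induction on the distance between the two actions in $d'$, and then run the main induction.
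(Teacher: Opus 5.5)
Your overall architecture matches the paper's. You run an induction that maintains some $\hat{\Sys}'\syseq\hat{\Sys}$ together with a forward basic derivation to $\rmlog{\hat{\Sys}'}$, and each undone action is moved to the end of that derivation and dropped. (The paper first normalises $\rho$ to purely backward steps by Lemma~\ref{lem:normalization}, so it never needs your forward case.) The gap is in the key lemma that both of your cases rely on. By the definition of $\idpB$, two sends on the same channel, or two receives on the same channel, are never independent, whatever their keys. So the Square property (Lemma~\ref{lem:square-property}) never commutes them. Every reordering of $d'$ by $\idpB$-swaps therefore preserves the per-channel order of sends and of receives, and hence the exact queue contents. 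Your sentence ``hence the corresponding actions of $d'$ are independent in the sense of $\idpB$'' is false. The essential case is undoing $\alpha=s_{c,k}$ (or $r_{c,k}$) when $d'$ contains a later send (receive) on $c$ with an independent key $k'$; this is exactly the situation \brname{LEq} creates, and $\idpB$-swaps can never handle it. The same problem arises in your forward case, since \frname{LEq} may also reorder replay sequences.

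Your fallback, exchanging adjacent same-channel sends directly and then the receives, is a new, non-$\idpB$ permutation step. Proving that it yields a valid basic derivation is the whole difficulty, and the absence of an $\mhb^{+}$ chain between the $k$- and $k'$-actions does not suffice. In the basic semantics the order in $d'$ is also constrained by FIFO on other channels. Suppose $s_{c,k}\mhb^{+} s_{c_1,l}$ and $s_{c_1,l'}\mhb^{+} s_{c,k'}$, with $l$ sent before $l'$ on $c_1$. Then placing $k'$ before $k$ forces $l'$ before $l$ on $c_1$, for both sends and receives, and this can propagate further. Every key lying between such a pair on its channel must also be placed consistently on one side. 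You would have to show that this cascade never demands both $(l,l')$ and $(l',l)$, and that it ends in a FIFO-consistent sequence. Your induction on the distance, ``moving intervening independent actions aside'', controls none of this.

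The paper's proof is essentially devoted to this point:
\begin{itemize}
\item it works with FIFO-consistency and compatibility with $\dep{\hat{\Tar}}^+$, which also includes the same-channel order between dependent keys, not only $\mhb$;
\item it introduces the propagation relation $\prpgStar{{\prec}}$, with Lemma~\ref{lem:propagation} excluding conflicting requirements;
\item \textsc{Partition} assigns the keys in $\mathsf{itm}$ to sides while preserving $\mathsf{valid}_{\prec}(k,k')$;
\item \textsc{Reorder} permutes with $\mathsf{swp}_{\prec}$ (Lemma~\ref{lem:reorder});
\item only then does it return to the basic semantics via Lemma~\ref{lem:fifo-reachability}.
\end{itemize}
You need an argument of this kind in place of your key lemma.
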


    \begin{proofsketch}
        The proof constructs a configuration reachable from $\program$ corresponding to the result of rollback by removing the undone actions from a forward action sequence reaching the breakpoint.
        Since independent messages may be reordered in the replayable system, the resulting replayable configuration need only be equivalent to $\hat{\Sys}$.

        By applying the axiomatic approach with the roles of forward and backward actions reversed, the derivation from $\ad{\Tar}$ to $\hat{\Sys}$ can be normalised to
        \[
            \ad{\Tar}=\hat{\Sys}_0
            \genL{\overline{\alpha_1}}\hat{\Sys}_1
            \cdots
            \genL{\overline{\alpha_n}}\hat{\Sys}_n=\hat{\Sys}.
        \]
        Since $\Tar$ is reachable from $\program$, there is a forward action sequence $\lambda$ such that $\Ini^{\program}\genBStar{\lambda}\Tar$.
        This derivation can be lifted to the replayable system as $\hatIni\genLStar{\lambda}\ad{\Tar}$, where $\rmlog{\hatIni}=\Ini^{\program}$.

        We proceed by induction on the backward action sequence.
        For each $i$, we construct an action sequence $\lambda_i$, a system configuration $\Sys'_i$, and a replayable system configuration $\hat{\Sys}'_i$ such that
        $\lambda_i\alpha_i\cdots\alpha_1$ is a permutation of $\lambda$,
        $\Ini^{\program}\genBStar{\lambda_i}\Sys'_i$,
        $\hatIni\genLStar{\lambda_i}\hat{\Sys}'_i$,
        $\hat{\Sys}'_i\syseq\hat{\Sys}_i$, and
        $\rmlog{\hat{\Sys}'_i}=\Sys'_i$.
        Thus, $\lambda_i$ is the part of the original forward execution that remains after the first $i$ undone actions have been moved to the end and removed.
        The base case follows by taking $\lambda_0=\lambda$, $\Sys'_0=\Tar$, and $\hat{\Sys}'_0=\ad{\Tar}$.

        For the induction step, we permute $\lambda_i$ into $\lambda_{i+1}\alpha_{i+1}$ so that the action corresponding to the next backward step occurs last.
        Message reordering is unnecessary except when $\alpha_{i+1}$ is a send or receive action constrained by the order of another message on the same channel.
        In these cases, the relevant send and receive actions are reordered together.
        If further changes are required, the reordering is propagated to the necessary pairs of send and receive actions; all actions to which it is propagated are communication-dependent.
        The construction preserves action dependency and FIFO consistency, so $\lambda_{i+1}$ remains executable from $\Ini^{\program}$ in the basic semantics and satisfies the induction conditions for $i+1$.

        Taking $i=n$ yields $\Ini^{\program}\genBStar{\lambda_n}\Sys'_n$, $\hat{\Sys}'_n\syseq\hat{\Sys}$, and $\rmlog{\hat{\Sys}'_n}=\Sys'_n$.
        Hence, $\rmlog{\hat{\Sys}'_n}$ is reachable from $\program$.
        The details of the action-sequence construction and the preservation of action dependency and FIFO consistency are given in Appendix~\ref{sec:appendix-soundness}.
    \end{proofsketch}

    \sy{
    When $\Sys\syseq\Sys'$, the processes in $\Sys$ and $\Sys'$ are identical. The system configurations generated
    during rollback may not always be reachable when messages are reordered. However, the processes within these configurations are guaranteed to
    appear in some reachable configuration. Since $\genLStar{\rho}$
    covers all configurations generated by $\genBStar{\rho}$, the set
    of processes remains unchanged by the reordering. To reach
    an erroneous process configuration, a debugger may reorder
    messages, provided that they are independent.}

    \subsection{Minimality}
    \label{subsec:4C-minimality}
    
    \sy{By reordering messages, it is shown that
    rollback and replay can be implemented using the minimal number of steps with respect to the dependency relation.
    When $\rollback{\hat{\Sys}}{\rho}\genR^\ast\rollback{\hat{\Sys}'}{\varepsilon}$, let $\Bsat{\rollback{\hat{\Sys}}{\rho}\genR^\ast\rollback{\hat{\Sys}'}{\varepsilon}}$ be the number of applications of \brname{Sat}, and
    $\Fsat{\replay{\hat{\Sys}}{\rho}\genR^\ast\replay{\hat{\Sys}'}{\varepsilon}}$ be the number of applications of \frname{Sat}.
    }

    To roll back an action $\alpha$ from $\hat{\Sys}'$, $\rollback{\hat{\Sys}'}{\overline{\alpha}}$ gives the shortest length of transitions to reverse the action.
    It is also the case that $\replay{\hat{\Sys}}{\alpha}$ gives the shortest length of transitions for replaying $\alpha$ from $\hat{\Sys}$.

    \begin{theorem}[Minimality]
        \label{thm:minimality-replay-and-replay}
        Suppose that $\Tar$ is reachable, $\rho\in\mathcal{A}^\ast$, and $\ad{\Tar} \genLStar{\rho} \hat{\Sys}$.
        \begin{adjustwidth}{0.5em}{0pt}
        \begin{trivlist}
            \item[\ogm{(Rollback)}]
            \ogm{If $n_b = \Bsat{\rollback{\hat{\Sys}}{\overline{\alpha}} \genR^* \rollback{\hat{\Sys}'}{\varepsilon}}$, then
            there is no $d: \hat{\Sys} \genLStar{\rho'} \hat{\Sys}''$ such that $\alpha \in \undone{\hat{\Sys}''}$ and $|\rho'| < n_b$.}
            \item[\ogm{(Replay)}]
            \ogm{If $n_f = \Fsat{\replay{\hat{\Sys}}{\alpha} \genR^* \replay{\hat{\Sys}'}{\varepsilon}}$, then
            there is no $d: \hat{\Sys} \genLStar{\rho'} \hat{\Sys}''$ such that $\alpha \in \done{\hat{\Sys}''}$ and $|\rho'| < n_f$.}
        \end{trivlist}
        \end{adjustwidth}
    \end{theorem}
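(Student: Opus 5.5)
The plan is to characterise the set of actions undone by the rollback system exactly, and then show that every derivation making $\alpha$ undone must undo at least those actions. Write $D^{-}(\alpha)$ for the set of actions $\alpha'\in\done{\hat{\Sys}}$ with $\alpha \dep{\hat{\Sys}}^{*} \alpha'$, the done actions that transitively depend on $\alpha$. Symmetrically, write $D^{+}(\alpha)$ for the set of $\alpha'\in\undone{\hat{\Sys}}$ with $\alpha' \dep{\hat{\Sys}}^{*} \alpha$, the undone actions on which $\alpha$ transitively depends. We treat (Rollback) in detail; (Replay) is the mirror argument, with \frname{Sat}/\frname{Req} in place of \brname{Sat}/\brname{Req}.

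\emph{Step 1 (upper bound).} We show that $n_b = |D^{-}(\alpha)|$. By inspecting \brname{Req} and \brname{Sat} in Fig.~\ref{fig:replayrollback}, we prove the following invariant by induction on $\genR$ steps. The request always consists of actions of $D^{-}(\alpha)$ that are still done. Each \brname{Sat} undoes an action only when no done action depends on it. Hence every action undone by \brname{Sat} lies in $D^{-}(\alpha)$, and each such action is undone exactly once, because undone actions leave $\done{\cdot}$ and indices are unique. Conversely, termination with an empty request forces all of $D^{-}(\alpha)$ to have been undone: \brname{Req} adds any done dependant before its antecedent can be satisfied. A subtle point is that $\dep{\hat{\Sys}}$ is computed from the queue contents $\mu_1\mu_2\mu_3$, which change under \frname{LEq}/\brname{LEq}. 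Here we use that $\dashrightarrow_{\hat{\Pi},\mu}$ is invariant under $\qeq{\hat{\Pi}}$ and that $\act{\hat{\Pi}}$ is invariant along transitions, as noted after its definition. So the dependency, and with it $D^{-}(\alpha)$, is the same at every configuration reached.

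\emph{Step 2 (lower bound).} Take any $d:\hat{\Sys}\genLStar{\rho'}\hat{\Sys}''$ with $\alpha\in\undone{\hat{\Sys}''}$. We claim that every $\alpha'\in D^{-}(\alpha)$ occurs in $\rho'$ as a backward action $\overline{\alpha'}$. Proceed by induction on the length of a $\dep{\hat{\Sys}}$-chain from $\alpha$ to $\alpha'$, using Causal safety (Theorem~\ref{thm:csl-rp}). Consider the last occurrence of $\overline{\alpha}$ in $\rho'$, with no later redo. Safety says that $\overline{\alpha}$ is independent of every net-positive action preceding it since $\alpha$ was last done. If some direct dependant $\alpha'$ with $\alpha\dep{\hat{\Sys}}\alpha'$ were still done at that moment, the transition would violate this. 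For must-happen-before successors this is immediate from the process histories. For same-channel send/send or receive/receive pairs related by $\dashrightarrow$, we use that the replayable queue rules, even modulo $\syseq$, only allow moving the message of $\alpha$ past messages with independent keys, and $\dashrightarrow$ guarantees that $\alpha'$'s message cannot be permuted away. So $\overline{\alpha'}$ must occur earlier, and induction extends this to the whole transitive closure. Distinct actions give distinct backward steps, so $|\rho'|\ge|D^{-}(\alpha)| = n_b$.

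The main obstacle is the same-channel case in Step 2. We must show that reordering via \frname{LEq}/\brname{LEq} can never circumvent a $\dashrightarrow$ edge, i.e. that $\qeq{\hat{\Pi}}$ preserves the relative order of any two keys related by $\bowtie_{\hat{\Pi}}$. We would prove this first as a separate lemma. Induction on the generating swaps of $\qeq{\hat{\Pi}}$ handles it: a swap only exchanges adjacent keys that are not related by $\bowtie_{\hat{\Pi}}$, so the relative order of any $\bowtie_{\hat{\Pi}}$-related pair is unchanged. The lemma then supplies both the invariance needed in Step 1 and the blocking argument in Step 2. The replay case uses Liveness and Safety for forward actions symmetrically, together with FTI (Lemma~\ref{lem:fti-bti-rp}), giving $n_f=|D^{+}(\alpha)|$ as the lower bound.
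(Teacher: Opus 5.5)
Your route is the paper's: \brname{Req} only adds transitive dependants of $\alpha$, and causal safety forces each of them to be undone first. The step you add to make this rigorous fails, however. $\dashrightarrow_{\hat{\Pi},\mu}$ is \emph{not} invariant under $\qeq{\hat{\Pi}}$, so neither $\dep{\hat{\Sys}}$ nor your $D^{-}(\alpha)$ is the same at every configuration reached. Your lemma that swaps preserve the relative order of $\bowtie_{\hat{\Pi}}$-related keys is true. But $k_i\dashrightarrow k_j$ also requires every key strictly between them to be independent of both, and a swap can move a key into or out of that interval.

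Take keys $a,b,y$ on one channel with $a\bowtie_{\hat{\Pi}}y$, $b\bowtie_{\hat{\Pi}}y$ and $a\centernot\bowtie_{\hat{\Pi}}b$. This is realisable, e.g.\ with $y\cdep{\hat{\Pi}}b$ via a $\mhb$-chain, $y\cdep{\hat{\Pi}}x\cdep{\hat{\Pi}}a$ through a key $x$ on another channel, and the sends of $a$ and $b$ being final actions. For $\mu=(a,v_a)(b,v_b)(y,v_y)$ you get $b\dashrightarrow y$ but not $a\dashrightarrow y$. After the legal swap to $(b,v_b)(a,v_a)(y,v_y)$ it is the other way round.

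This is load-bearing. Suppose these messages sit in the current queue with $y$ at the head and $\alpha$ is the send of $y$. The send of $a$ can never be permuted past $y$, so every derivation undoing $\alpha$, in particular any completed rollback, must undo it. Yet in the first representative it is not related to $\alpha$ by $\dep{\hat{\Sys}}^{*}$ in either direction, so it lies outside $D^{-}(\alpha)$. Your Step~1 invariant (everything undone by \brname{Sat} lies in $D^{-}(\alpha)$) is then false, and the Step~2 bound $|\rho'|\ge|D^{-}(\alpha)|$ falls short of $n_b$. The replay half inherits the same problem.

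To repair it, use a dependency set your lemma does make invariant. Let $C$ be the set of actions reachable from $\alpha$ by $\mhb$-successors and by passing, on the same channel, to the send (resp.\ receive) of a later-sent (resp.\ later-received) $\bowtie_{\hat{\Pi}}$-related key. $C$ is the same along every derivation for three reasons:
\begin{itemize}
\item the queue rules preserve the concatenation $\mu_1\cdot\mu_2\cdot\mu_3$;
\item $\qeq{\hat{\Pi}}$ preserves the order of $\bowtie_{\hat{\Pi}}$-related keys;
\item $\bowtie_{\hat{\Pi}}$ depends only on the invariant set $\act{\hat{\Pi}}$.
\end{itemize}
Every $\dashrightarrow_{\hat{\Pi},\mu}$-step, in any representative, is one of these steps, so everything \brname{Req} adds lies in $C$. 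Your blocking argument shows every member of $C$ must be undone. You then need only $n_b\le|C|\le|\rho'|$, not an equality. The paper's sketch never compares dependencies across configurations; it works throughout with the single relation $\dep{\hat{\Tar}}$ fixed at the breakpoint, which is also the one defining $\idpL_{\hat{\Tar}}$.

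A smaller gap is in Step~2. Theorem~\ref{thm:csl-rp} needs a forward $\alpha$-transition inside a derivation from $\hat{\Tar}$. When you roll back from the breakpoint, $\alpha$ was performed before $\hat{\Tar}$, so ``since $\alpha$ was last done'' points to no transition the theorem covers. You can either prefix the round trip $\hat{\Tar}\genLStar{}\hatIni\genLStar{\lambda}\hat{\Tar}$ given by Lemma~\ref{lem:lifting}, or prove directly that the set of done actions stays downward closed under this invariant dependency, which is all Step~2 actually uses.
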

    \begin{proof}
        During replay, by \frname{Req}, any action $\alpha'$ added to the request satisfies $\alpha' \dep{\hat{\Tar}}^{+} \alpha$, where $\alpha$ is the initially specified action.
        By applying \og{the Safety part of Theorem~\ref{thm:csl-rp}} for forward actions, it is impossible to execute $\alpha$ without executing all such actions.
        Similarly, during rollback, by \brname{Req}, any action $\alpha'$ added to the request satisfies $\alpha \dep{\hat{\Tar}}^{+} \alpha'$, where $\alpha$ is the initially specified action.
        By applying \og{the Safety part of Theorem~\ref{thm:csl-rp}} for backward actions, it is impossible to execute $\alpha$ without executing all such actions.
    \end{proof}

    \subsection{Discussion}
    \label{subsec:discussion}
    \ogm{\cite{DBLP:journals/lmcs/MezzinaP21} studies reversible semantics for asynchronous multiparty session types with higher-order messages
    over a single global queue,
    while this paper presents that for a concrete programming language with multiple channel queues.}
    \og{Unlike~\cite{DBLP:journals/lmcs/MezzinaP21}, revGo assigns unique keys to messages, enabling messages in each channel queue to be asynchronously reordered during rollback according to key independence.}
    \ogm{However, proving the soundness of reordering requires extra effort.}
    For example, consider the following revGo program:
    \begin{center}
        $\nwcS{\og{w},y,z}{\mathtt{int}}{\red{\sndS{\og{w}}{1}[\sndS{y}{2}]} \| \blue{\sndS{z}{3};\rcvS{\og{w}}{n_1}} \| \green{\rcvS{y}{n_2};\sndS{z}{4}}},$
    \end{center}
    where $\red{\sndS{\og{w}}{1}[\sndS{y}{2}]}$ sends 1 to $\blue{\sndS{z}{3};\rcvS{\og{w}}{n_1}}$ via $\og{w}$ and 2 to $\green{\rcvS{y}{n_2};\sndS{z}{4}}$ via $y$.
    \og{Here, $\mathtt{newchan}(w,y,z:\mathtt{int})$ abbreviates three \textsf{int}-typed channel creations.} 
    Messages on $z$ are sent into the queue only as $\green{\langle 4 \rangle}\blue{\langle 3 \rangle}$.
    According to the approach of~\cite{DBLP:journals/lmcs/MezzinaP21}, when a rollback is requested for $\blue{\sndS{z}{3}}$ after all processes terminate, both $\blue{\sndS{z}{3};\rcvS{\og{w}}{n_1}}$ should be undone.
    This is because the communication between the first and second processes via channel $\og{w}$ is independent.
    The key for the message in $\og{w}$ is independent of all other keys of the messages in $y$ and $z$.
    The action $\blue{\rcvS{\og{w}}{n_1}}$ can be undone only after $\red{\sndS{\og{w}}{1}}$ is itself ready to be \og{undone. This} is because the first process must return to the point where $\og{w}$ communication is undone.
    Similarly, $\red{\sndS{y}{2}}$, $\green{\rcvS{y}{n_2}}$, and $\green{\sndS{z}{4}}$ must be undone before reaching $\blue{\sndS{z}{3}}$, since they depend on $\red{\sndS{\og{w}}{1}}$.
    The revGo approach does not require undoing $\red{\sndS{y}{2}}$; $\blue{\sndS{z}{3};\rcvS{\og{w}}{n_1}}$ is undone without retracing process locations, because message dependencies are tracked as keys. 
    To make this rollback causally consistent, messages in $z$ must be reordered from $\green{\langle 4 \rangle}\blue{\langle 3 \rangle}$ to $\blue{\langle 3 \rangle}\green{\langle 4 \rangle}$.
    The revGo semantics abstracts this reordering using equivalence, whereas \cite{DBLP:journals/lmcs/MezzinaP21} addresses it explicitly.

    \section{Conclusion}
    \label{sec:conclusion}
    
    We introduced revGo, a reversible semantics for a core fragment of Go with asynchronous channel-based communication
    assuming unbounded queue lengths.
    By enriching processes and channel queues with histories and uniquely keyed messages,
    we formalised causal independence that allows safe reordering of independent keyed messages without affecting reachability.
    Based on an axiomatic framework for reversible computation~\cite{10.1145/3648474}, we proved causal consistency, safety, and liveness,
    ensuring that rollback respects causal dependencies.
    We further defined a rollback-and-replay semantics that covers all behaviour and enables a minimal
    number of undoing and redoing actions only causally related to a target event.
    This provides a rigorous foundation of reversible debugging for channel-based concurrent systems.

    There are several directions for future work.
    First, we would like to develop a practical reversible debugger prototype for programs that use channel queues,
    following the design of CauDEr.
    We would also like to evaluate its overhead and scalability~\cite{rr,rept}.
    Second, we would like to extend our semantics with bounded queues to adapt Go semantics.
    The assumption of unbounded queues eases the theoretical treatment
    while bounded queues will introduce a further element of synchronisation, as a full queue acts like
    a semaphore for processes that want to send to it.
    This will require a finer causal-tracking semantics.
    Another direction could be to extend a subset of the language to cope with other characteristics of Go
    such as \og{structs} or synchronisation primitives~\cite{GoMutex} on shared variables.

    \begin{credits}
        \subsubsection{\ackname}
        We thank the anonymous reviewers of ICTAC for their helpful comments and suggestions.
        This work was supported by JSPS Invitation Fellowship for Research in Japan (S25016), EPSRC EP/T006544/2, EP/T014709/2, EP/Z533749/1, ARIA, and Horizon EU TaRDIS 101093006 (UKRI number 10066667).
    \end{credits}

    \bibliographystyle{splncs04}
    \bibliography{mybibliography}

    \appendix

    \section{Proofs for Soundness (\ref{subsec:4B-sound})}
    \label{sec:appendix-soundness}
    \subsection{Backward Normalization}
    First, we show that any replayable system configuration reachable from a breakpoint by a sequence of backward and forward transitions is reachable by backward transitions alone.
    \begin{lemma}\label{lem:normalization}
        Suppose that $\rho \in (\fAct \cup \bAct)^*$, $\Tar$ is reachable, and
        $\ad{\Tar} \genLStar{\rho} \hat{\Sys}$.
        Then there exists $\rho' \in \bAct^*$ such that $\ad{\Tar} \genLStar{\rho'} \hat{\Sys}$.
    \end{lemma}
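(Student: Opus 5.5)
We argue by induction on the number of forward transitions in $\rho$, and push each forward step against the backward steps before it until it cancels. The key observation is that $\ad{\Tar}$ carries empty replay sequences. So from $\ad{\Tar}$ no forward transition is enabled: \frname{LAct} needs a nonempty $h'$, \frname{LNwc} and \rrname{LQNwc} need an inactive process or queue, and \rrname{LQSnd} needs a nonempty replay sequence $\mu$. Every forward transition therefore needs replay material that only earlier backward transitions create. This bounds the redo steps by the undo steps already performed.

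We may assume $\rho=\rho_0\,\overline{\alpha_1}\cdots\overline{\alpha_m}\,\alpha$, where $\rho_0\overline{\alpha_1}\cdots\overline{\alpha_m}$ is purely backward (by the induction hypothesis applied to the prefix) and $\alpha$ is the first remaining forward step. We then show, by induction on $m$, that $\overline{\alpha_1}\cdots\overline{\alpha_m}\alpha$ can be replaced by a purely backward sequence with one backward step fewer that reaches the same configuration.

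\textbf{Case $\alpha=\alpha_m$.} By the Loop lemma (Lemma~\ref{lem:loop-rp}) the last two transitions cancel. Reachability from $\ad{\Tar}$ holds by hypothesis, and the intermediate configurations are reachable as prefixes.

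\textbf{Case $\alpha\neq\alpha_m$.} Let $\hat{\Sys}$ be the configuration just before $\overline{\alpha_m}$. We first show that $\alpha$ and $\overline{\alpha_m}$ are independent in $\idpL_{\hat{\Tar}}$. If they were dependent, either $\alpha_m \dep{\hat{\Tar}} \alpha$ or $\alpha \dep{\hat{\Tar}} \alpha_m$.
\begin{itemize}
\item If $\alpha \dep{\hat{\Tar}} \alpha_m$, then $\alpha_m$ could not have been undone while $\alpha$ was still undone. This follows from the Safety part of Theorem~\ref{thm:csl-rp}, or directly from the shape of the history and replay sequences.
\item If $\alpha_m \dep{\hat{\Tar}} \alpha$, then $\alpha$ could be redone only after $\alpha_m$ was redone. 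This follows from the forward reading of \frname{LAct}, \rrname{LQSnd} and \rrname{LQRcv}: the replay sequences and queue orders are stacks consistent with $\dep{\hat{\Tar}}$ up to $\syseq$.
\end{itemize}
Granted independence, we apply the Square property (Lemma~\ref{lem:square-property-rp}) backwards, that is, to the reverse transitions. Using the Loop lemma, the configuration after $\overline{\alpha_m}\alpha$ also admits $\alpha_m$ forward, reaching the common predecessor. The square property applied at $\hat{\Sys}$ to $\alpha$ and $\overline{\alpha_m}$ (both enabled there, once we show $\alpha$ is enabled at $\hat{\Sys}$) then gives $\hat{\Sys}\genL{\alpha}\cdot\genL{\overline{\alpha_m}}$ reaching the same final configuration. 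Causal consistency (Theorem~\ref{thm:cc-rp}) or determinism of transitions then lets us conclude that the endpoints agree. The forward step now sits after $\overline{\alpha_{m-1}}$, and the inner induction on $m$ applies.

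\textbf{Termination of the inner induction.} Before reaching $m=0$ the forward step must meet its own inverse. This is because $\alpha$ is not enabled at $\ad{\Tar}$ at all: every replay item it consumes was produced by a specific earlier backward step.

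\textbf{Main obstacle.} The hardest step is enabledness of $\alpha$ at $\hat{\Sys}$, in the presence of \frname{LEq} and \brname{LEq} reordering. One must check that the queue equivalence $\syseq$, which only swaps keys independent in $\hat{\Pi}$ (and $\act{\hat{\Pi}}$ is invariant along transitions), never moves a message that $\alpha$ needs past one that $\overline{\alpha_m}$ affects. Equivalently, a send or receive $\alpha$ on channel $c$ whose key is $\dashrightarrow$-related to that of $\alpha_m$ is exactly the dependent case. The plan is to do this by case analysis on the pair of kinds (send/send, receive/receive, send/receive on the same channel, creation/child), reusing the arguments of Lemma~\ref{lem:fti-bti-rp}. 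The \frname{LEq}/\brname{LEq} steps themselves are treated as labelled by the same action and absorbed by $\syseq$.
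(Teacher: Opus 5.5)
Your overall route is the paper's. Its one-line proof applies the Parabolic Lemma of the axiomatic framework with forward and backward exchanged, and the proof of that lemma is exactly your push-the-forward-step-left-until-it-cancels argument. The phrase `treating $\ad{\Tar}$ as irreversible' is your observation that no forward transition is enabled at $\ad{\Tar}$, so the forward-first part of the reversed parabolic normal form is empty. For your termination argument to apply, the inner induction must range over the whole purely backward prefix, so that its base case really is $\ad{\Tar}$. Your separate $\rho_0$ should therefore be empty.

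The swap step, however, is done at the wrong configuration, and this is the only source of your `main obstacle'. Write the valley as $\hat{\Sys}\genL{\overline{\alpha_m}}\hat{\Sys}_B\genL{\alpha}\hat{\Sys}_C$ and work at $\hat{\Sys}_B$.

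\begin{itemize}
\item By Lemma~\ref{lem:loop-rp}, $\hat{\Sys}_B\genL{\alpha_m}\hat{\Sys}$. So $\alpha_m$ and $\alpha$ are distinct coinitial forward transitions at a configuration reachable from $\ad{\Tar}$.
\item Lemma~\ref{lem:fti-bti-rp} (FTI) then gives $\alpha_m\idpL_{\hat{\Tar}}\alpha$ directly. That is what this lemma is for, so your dependency-case argument via safety and the stack shape is not needed.
\item Lemma~\ref{lem:square-property-rp} at $\hat{\Sys}_B$ yields $\hat{\Sys}_D$ with $\hat{\Sys}\genL{\alpha}\hat{\Sys}_D$ and $\hat{\Sys}_C\genL{\alpha_m}\hat{\Sys}_D$.
\item The Loop lemma turns the latter into $\hat{\Sys}_D\genL{\overline{\alpha_m}}\hat{\Sys}_C$.
\end{itemize}

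Enabledness of $\alpha$ at $\hat{\Sys}$ is part of the conclusion of this square. The planned case analysis over $\syseq$-reorderings is therefore unnecessary, since it is already inside the proofs of FTI and the square property.

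Drop your second square at $\hat{\Sys}$. It presupposes exactly the enabledness you had not yet shown. Its endpoint also has to be identified with $\hat{\Sys}_C$, and Theorem~\ref{thm:cc-rp} cannot do that, because causal consistency assumes cofinality rather than establishing it. With this repair your argument is the paper's proof written out in full.
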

    \begin{proof}
        We apply the axiomatic approach~\cite{10.1145/3648474} with the roles of the forward and backward directions reversed, treating $\ad{\Tar}$ as irreversible.
    \end{proof}

    We show that there is a replayable system configuration corresponding to $\Ini^{\program}$,
    from which the breakpoint is reachable on the replayable system.
    \begin{lemma}\label{lem:lifting}
        Given $\program$, suppose that \og{$\rho \in (\fAct \cup \bAct)^*$,} $\lambda \in \fAct^*$,
        \og{$\Tar$ is reachable from $\program$, $\ad{\Tar}=\hat{\Tar} \genLStar{\rho} \hat{\Sys}$,}
        \og{$\rmlog{\hat{\Sys}} = \Sys$,} and $\Ini^{\program} \genBStar{\lambda} \og{\Sys}$.
        Then there exists $\hatIni$ such that $\rmlog{\hatIni} = \Ini^{\program}$ and $\hatIni \genLStar{\lambda} \hat{\Sys}$.
    \end{lemma}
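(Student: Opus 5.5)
The plan is to build $\hatIni$ from $\hat{\Sys}$ by running $\lambda$ backwards, and then to check that the forward replayable rules reproduce $\lambda$. Write $\lambda=\alpha_1\cdots\alpha_n$ and $\Ini^{\program}=\Sys_0\genB{\alpha_1}\Sys_1\cdots\genB{\alpha_n}\Sys_n=\Sys$. By Lemma~\ref{lem:loop-system}, applied to the reachable configurations $\Sys_i$, this gives a backward derivation $\Sys_n\genB{\overline{\alpha_n}}\cdots\genB{\overline{\alpha_1}}\Sys_0$.

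The core is a simulation lemma between the replayable rules and the basic rules.

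\textbf{Step 1 (one backward step).} If $\hat{\Sys}_i$ is a replayable configuration with $\rmlog{\hat{\Sys}_i}=\Sys_i$ and $\Sys_i\genB{\overline{\alpha}}\Sys_{i-1}$, then there is $\hat{\Sys}_{i-1}$ with $\hat{\Sys}_i\genL{\overline{\alpha}}\hat{\Sys}_{i-1}$ and $\rmlog{\hat{\Sys}_{i-1}}=\Sys_{i-1}$. This is checked by cases on the backward rule.
\begin{itemize}
\item For internal actions, \brname{LAct} consumes the same top history item that \brname{Itn} consumes, and only moves it into the replay sequence.
\item For $\mathsf{snd}$, \rrname{LQSnd} backward removes the leftmost message exactly as \rrname{QSnd} does, and pushes it onto the replay component, which $\rmlog{\cdot}$ erases.
\item For $\mathsf{rcv}$, \rrname{LQRcv} coincides with \rrname{QRcv}.
\item For $\mathsf{nwc}$, \brname{LNwc} and \rrname{LQNwc} turn the children and the queue inactive. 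Since $\Sys_{i-1}$ is obtained by a \frname{Nwc} step, the children have empty histories and the queue $c$ is empty, so nothing visible to $\rmlog{\cdot}$ is lost.
\end{itemize}
The point needing care is that $\rmlog{\cdot}$ must also erase inactive processes and queues. I would read its definition that way, as removal of all replay information. Then the case for $\mathsf{nwc}$ is consistent, because in the basic system those components do not yet exist.

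\textbf{Step 2 (iterate and reverse).} Iterating Step 1 from $\hat{\Sys}_n=\hat{\Sys}$ gives $\hatIni:=\hat{\Sys}_0$ with $\rmlog{\hatIni}=\Ini^{\program}$ and $\hat{\Sys}\genLStar{\overline{\lambda}}\hatIni$. Applying Lemma~\ref{lem:loop-rp} step by step turns this into $\hatIni\genLStar{\lambda}\hat{\Sys}$.

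The main obstacle is that Lemma~\ref{lem:loop-rp} only covers configurations reachable from $\ad{\Tar}$. The intermediate $\hat{\Sys}_i$ are reachable from $\hat{\Tar}$ through $\rho$ followed by backward steps, so the hypothesis does hold. I would nonetheless state explicitly why each backward replayable step is invertible. The replay component records exactly the consumed history item or message, and \frname{LAct}, \rrname{LQSnd} and \frname{LNwc} restore precisely that. The conditions \frname{Snd}, \frname{Def} and \frname{IfT} that must be re-evaluated in the forward direction hold because the corresponding basic forward steps $\Sys_{i-1}\genB{\alpha_i}\Sys_i$ exist.

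A subtle point is whether $\hatIni$ must be free of replay information, that is, whether it must equal $\ad{\Ini^{\program}}$. The statement only asks for $\rmlog{\hatIni}=\Ini^{\program}$. The construction therefore leaves replay sequences populated, and that is exactly what drives the forward replay of $\lambda$.
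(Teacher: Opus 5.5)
Your proposal is correct and takes the same route as the paper. Both arguments undo $\lambda$ in reverse order from $\hat{\Sys}$ using only the direct backward rules (no \brname{LEq}) to reach $\hatIni$ with $\rmlog{\hatIni}=\Ini^{\program}$, then read that derivation forwards. The paper gives this in a single sentence; your per-step simulation, your observation that $\rmlog{\cdot}$ must erase inactive processes and queues, and your reachability check before invoking Lemma~\ref{lem:loop-rp} fill in details the paper leaves implicit.
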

    \begin{proof}
        By the semantics of the replayable system, starting from $\hat{\Sys}$,
        we can execute the sequence $\lambda$ backward in reverse order and reach some $\hatIni$
        such that $\rmlog{\hatIni} = \Ini^{\program}$, without using \brname{LEq}.
    \end{proof}

    \subsection{FIFO-consistency and Compatibility}

    Consider the case where there is a $\genBStar{}$-derivation labelled by $\lambda_1\,\red{\alpha}\,\lambda_2$ and a $\genLStar{}\,$-derivation labelled by $\lambda_1\lambda_2\,\red{\alpha}$.
    If $\red{\alpha}$ is not a channel action, causal safety and liveness immediately ensure that there is also a $\genBStar{}$-derivation labelled by $\lambda_1\lambda_2\,\red{\alpha}$.
    However, if $\red{\alpha}$ is a channel action, this need not hold, since the reordered sequence \og{may violate the first-in, first-out order}.
    We write $s_{c,k}$ and $r_{c,k}$ for send and receive actions,
    respectively, on channel $c$ with key $k$.
    \begin{definition}\label{def:fifo}
        Given $\lambda=\alpha_1,\cdots,\alpha_n \in \fAct^*$, we say that $\lambda$ is FIFO-consistent if
        $i<j \Leftrightarrow i'<j'$ whenever $\alpha_i = s_{c,k}$, $\alpha_j = s_{c,k'}$, $\alpha_{i'} = r_{c,k}$, and $\alpha_{j'} = r_{c,k'}$.
    \end{definition}

    FIFO-consistency means that, on each channel, send actions and their corresponding receive actions occur in the same order.
    As we show next, every $\genBStar{}$-derivation is labelled by a FIFO-consistent sequence.
    \begin{lemma}\label{lem:fifo}
        Given $\program$, suppose that \og{$\lambda \in \fAct^*$ and} $\Ini^{\program} \genBStar{\lambda} \Sys$.
        Then $\lambda$ is FIFO-consistent.
    \end{lemma}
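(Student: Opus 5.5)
Since $\lambda\in\fAct^*$ is purely forward, I will prove Lemma~\ref{lem:fifo} by induction on the length of $\lambda$, using a stronger invariant that ties the queue contents to the order of actions in $\lambda$. For a reachable $\Sys=(\Pi,Q)$ obtained by $\Ini^{\program}\genBStar{\lambda}\Sys$ and a channel $c$ with queue $\qcfg{c}{\sigma}{\mu}{\mu'}$, the invariant has three parts.
\begin{enumerate}
\item The keys occurring in $\mu\cdot\mu'$ are exactly the keys $k$ with $s_{c,k}\in\lambda$, each occurring once (by freshness in \rrname{QSnd}).
\item Reading $\mu\cdot\mu'$ from right to left lists these keys in the order in which their sends occur in $\lambda$.
\item The keys in $\mu'$ are exactly those $k$ with $r_{c,k}\in\lambda$. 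Read right to left, $\mu'$ lists them in the order of their receives in $\lambda$.
\end{enumerate}
If the queue of $c$ does not exist yet, no action on $c$ occurs in $\lambda$, since every external action requires the corresponding queue through \frname{Ext}.

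The base case $\lambda=\varepsilon$ is trivial, because $\Qini=\varnothing$. For the inductive step $\lambda=\lambda'\alpha$, I inspect the last system step, which is an \frname{Itn} or \frname{Ext} step, and case on the queue rule used.
\begin{itemize}
\item Internal actions and \rrname{QNwc} leave the existing queues untouched. \rrname{QNwc} creates a fresh queue $\qcfg{c}{\sigma}{\varepsilon}{\varepsilon}$; no earlier action mentions $c$, since no queue for $c$ existed before.
\item \rrname{QSnd} prepends $(k,v)$ at the left end of $\mu$. So $k$ becomes the leftmost element of $\mu\cdot\mu'$, matching the fact that $s_{c,k}$ is now the last send on $c$.
\item \rrname{QRcv} takes the rightmost element $(k,v)$ of $\mu$ and prepends it to $\mu'$. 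The second invariant still holds, because $\mu\cdot\mu'$ is unchanged as a sequence. The third invariant holds because $k$ is now the leftmost key of $\mu'$ and $r_{c,k}$ is the latest receive.
\end{itemize}

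The main point, and the only mildly delicate step, is the \rrname{QRcv} case: I must show that the received key is the earliest-sent key among the keys on $c$ not yet received. Since $\mu\cdot\mu'$ lists sends right to left in send order, and $\mu'$ is a suffix of it, the rightmost element of $\mu$ is the next key after $\mu'$ in send order. Combined with the third invariant, this shows that receives on $c$ occur in $\lambda$ in exactly the order of the corresponding sends, and that each receive follows its send.

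FIFO-consistency then follows directly. Take $\alpha_i=s_{c,k}$, $\alpha_j=s_{c,k'}$, $\alpha_{i'}=r_{c,k}$ and $\alpha_{j'}=r_{c,k'}$ in $\lambda$. By the invariant applied to the final configuration (or to the prefix ending at $\max(i',j')$), $i<j$ holds iff $k$ lies to the right of $k'$ in $\mu\cdot\mu'$, and this holds iff $i'<j'$. Keys are unique per send, by the fresh-$k$ side condition, so the correspondence between $s_{c,k}$ and $r_{c,k}$ is well defined. I will also use that a key appears in at most one receive in $\lambda$, since \rrname{QRcv} removes the message from $\mu$.
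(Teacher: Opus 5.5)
Your proof is correct and takes the same approach as the paper. The paper only asserts that FIFO-consistency follows from the semantics of system configurations, and your induction, with the invariant that $\mu\cdot\mu'$ read right to left records the send order on $c$ while $\mu'$ (a suffix of it) records the receive order, is exactly the unfolding of that claim from the QSnd and QRcv rules.
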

    \begin{proof}
        The derivation $\Ini^{\program} \genBStar{\lambda} \Sys$ implies that $\lambda$ is FIFO-consistent by the semantics of system configurations.
    \end{proof}

    Conversely, if there is a $\genLStar{}\,$-derivation labelled by a FIFO-consistent sequence, then there is a $\genBStar{}$-derivation labelled by the same sequence.
    \begin{lemma}\label{lem:fifo-reachability}
        Given $\program$, suppose that \og{$\rho \in \bAct^*$, $\lambda \in \fAct^*$,}
        $\Tar$ is reachable from $\program$, $\hat{\Tar} = \ad{\Tar} \genLStar{\rho} \hat{\Sys}$,
        $\rmlog{\hatIni} = \Ini^{\program}$, $\hatIni \genLStar{\lambda} \hat{\Sys}$,
        and $\lambda$ is FIFO-consistent.
        Then there are $\hat{\Sys}_1$ and $\Sys_1$ such that $\hat{\Sys} \syseq \hat{\Sys}_1$, $\rmlog{\hat{\Sys}_1} = \Sys_1$,
        and $\Ini^{\program} \genBStar{\lambda} \Sys_1$.
    \end{lemma}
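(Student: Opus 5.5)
We argue by induction on the length of $\lambda = \alpha_1\cdots\alpha_n$, simulating the given replayable derivation $\hatIni = \hat{\Sys}^0 \genL{\alpha_1} \hat{\Sys}^1 \cdots \genL{\alpha_n} \hat{\Sys}^n = \hat{\Sys}$ step by step in the basic semantics. The invariant we maintain is the following. For each $m \le n$ there are $\hat{\Sys}^m_1 \syseq \hat{\Sys}^m$ and $\Sys^m_1$ such that $\rmlog{\hat{\Sys}^m_1} = \Sys^m_1$ and $\Ini^{\program} \genBStar{\alpha_1\cdots\alpha_m} \Sys^m_1$. In addition, in every channel of $\hat{\Sys}^m_1$ the current part is exactly the sequence of messages sent and not yet received in $\alpha_1\cdots\alpha_m$, listed in reverse send order, and the history part is the sequence of received messages, listed in reverse receive order. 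The base case $m=0$ is immediate, taking $\hat{\Sys}^0_1 = \hatIni$ and $\Sys^0_1 = \Ini^{\program}$ with all queues absent. Note that the process component is never affected by $\syseq$, so the process part of $\hat{\Sys}^m_1$ coincides with that of $\hat{\Sys}^m$. Also, $\act{\cdot}$ is invariant along $\genL{}$, so a single dependency relation $\bowtie_{\hat{\Pi}}$ (the one fixed at $\hat{\Tar}$) governs every use of $\qeq{\hat{\Pi}}$ in the argument.

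For the inductive step we case on $\alpha_{m+1}$. If it is internal or a channel creation, the process transition of $\hat{\Sys}^m$ (which by \frname{LAct} or \frname{LNwc} replays a history item) yields the corresponding basic transition \frname{Nwc}, \frname{Tau}, \frname{IfT}, \frname{Def}, and so on, on $\rmlog{\cdot}$. Reachability of $\Sys^m_1$ lets us invoke the Loop lemma (Lemma~\ref{lem:loop-system}) wherever the replay item only records weaker information. For \frname{Nwc}, the fresh channel is added on both sides, and the invariant holds trivially.

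If $\alpha_{m+1} = s_{c,k}$, then \rrname{QSnd} in the basic semantics puts $(k,v)$ on the left of the current queue, which is exactly the position prescribed by the invariant. On the replayable side, \rrname{LQSnd} together with any uses of \frname{LEq} places $(k,v)$ somewhere in a $\qeq{}$-equivalent sequence, so the two current parts are still $\qeq{}$-related. The replay parts differ only by removal of $(k,v)$ from $\qeq{}$-equivalent sequences; we argue that this preserves $\qeq{}$, either by commuting the message to the end or because \rrname{LQSnd} only consumes from the end after \frname{LEq}.

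The case $\alpha_{m+1} = r_{c,k}$ is the main obstacle. The basic rule \rrname{QRcv} requires $(k,v)$ to be the rightmost message of the current part of $\Sys^m_1$. By the invariant, that rightmost message is the earliest-sent pending message on $c$. FIFO-consistency of $\lambda$ says that $s_{c,k}$ precedes every other pending send $s_{c,k'}$ on $c$: if $s_{c,k'}$ occurred earlier, then $r_{c,k'}$ would have to occur before $r_{c,k}$, so $k'$ would already have been received and could not be pending. Hence $(k,v)$ is the rightmost pending message, the basic receive is enabled, and $\Ini^{\program} \genBStar{\alpha_1\cdots\alpha_{m+1}} \Sys^{m+1}_1$. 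On the replayable side, $(k,v)$ was moved to the history part after possible reorderings. We therefore must show that the new history part (with $(k,v)$ prepended in true receive order) is $\qeq{}$-equivalent to the replayable one. We plan to do this by a permutation argument: two $\qeq{}$-equivalent sequences that agree after deleting a single element $k$ remain equivalent after adding $k$ at the front, since $\qeq{}$ is a trace (Mazurkiewicz) equivalence and $(k,v)$ can be commuted past any message it is independent of. The delicate point is messages $k'$ with $k \bowtie k'$. For these we expect to use that all dependent pairs appear in the same relative order in every $\qeq{}$-representative. FIFO-consistency, combined with the fact that the basic derivation respects $\mhb$, then forces the relative order of $k$ and $k'$ on both sides to agree.

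Setting $m=n$ gives the claim with $\hat{\Sys}_1 = \hat{\Sys}^n_1$ and $\Sys_1 = \Sys^n_1$.
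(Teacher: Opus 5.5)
Your simulation is the paper's argument written out in full: the paper's proof is the one-line remark that, by FIFO-consistency, $\lambda$ can be replayed without \frname{LEq}. The gap is at exactly the step where FIFO-consistency is used, and the paper's sentence relies on the same inference. In the receive case you infer that if a pending $s_{c,k'}$ precedes $s_{c,k}$, then $r_{c,k'}$ must precede $r_{c,k}$. Definition~\ref{def:fifo} only constrains pairs of keys whose two sends and two receives all occur in $\lambda$. A message sent before $k$ and never received in $\lambda$ is therefore unconstrained. The basic receive is then blocked, because that message, not $(k,v)$, sits at the right end of the current part.

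This is not just a missing detail; the statement itself fails. Let $\mathtt{newchan}$ spawn three children on $c$: one sending $1$, one sending $0$, one receiving once. The basic run $\mathsf{nwc}\,s_{c,k}\,s_{c,k'}\,r_{c,k}$ (value $0$ under $k$, $1$ under $k'$) reaches a breakpoint $\Tar$ whose queue has current part $(k',1)$ and history $(k,0)$. The only $\mhb$ edges are the creation edges and $s_{c,k}\mhb r_{c,k}$, so $k$ and $k'$ are independent. Undo $r_{c,k}$, $s_{c,k}$, $s_{c,k'}$, $\mathsf{nwc}$ from $\ad{\Tar}$; before undoing $s_{c,k}$, one \brname{LEq} step moves $(k,0)$ to the left end of the current part. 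This yields $\hatIni$ with replay buffer $(k,0)(k',1)$. Replaying forward then gives $\hatIni\genLStar{\lambda}\ad{\Tar}$ for $\lambda=\mathsf{nwc}\,s_{c,k'}\,s_{c,k}\,r_{c,k}$, with one \frname{LEq} at the receive. With $\rho=\varepsilon$ all hypotheses hold, and $\lambda$ is vacuously FIFO-consistent since the only received key is $k$. Yet in the basic semantics the current part after $s_{c,k'}\,s_{c,k}$ is $(k,0)(k',1)$, so $r_{c,k}$ cannot fire.

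The lemma, and your argument, therefore need a stronger notion. For example, add: if $s_{c,k'}$ precedes $s_{c,k}$ and $r_{c,k}$ occurs in $\lambda$, then $r_{c,k'}$ occurs before $r_{c,k}$. Lemma~\ref{lem:fifo} still holds for this notion, because \rrname{QRcv} always consumes the earliest-sent pending message. With it your enabling argument goes through unchanged, and the rest of your induction is sound.

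You have, however, put the remaining obligation in the wrong place. The history parts stay related simply because $\qeq{\hat{\Pi}}$ is a congruence and both sides prepend $(k,v)$. What actually needs checking is that removing $(k,v)$ from the right of the two related current parts keeps them related. That follows from the characterisation you cite (same messages, dependent pairs in the same order, keys unique), with no further appeal to FIFO-consistency or $\mhb$. The replay parts need no argument at all: $\rmlog{\cdot}$ ignores them, so you can copy them from $\hat{\Sys}^{m+1}$.
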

    \begin{proof}
        Since $\lambda$ is FIFO-consistent, $\hat{\Sys}_1$ is reached by the forward replayable semantics
        without using \frname{LEq}.
    \end{proof}

    We need to find a FIFO-consistent permutation of the sequence.
    In addition, the permutation must not violate dependencies.
    We use ${\prec}$ for a transitive relation over actions.
    \begin{definition}\label{def:compatibility}
        \og{Given $\lambda =\alpha_1,\cdots,\alpha_n \in \fAct^*$ and a transitive relation ${\prec} \subseteq \fAct^2$,}
        we say that $\lambda$ is compatible with ${\prec}$ if $\alpha_i \prec \alpha_j$ implies $i<j$.
    \end{definition}

    Note that the definition is given for a transitive relation since it is used in the proof below.
    Given a $\genLStar{}\,$-derivation labelled by $\lambda_1\lambda_2\,\red{\alpha}$, we need to find a sequence
    $\lambda_3$ such that $\lambda_3$ is a permutation of $\lambda_1\lambda_2$ and $\lambda_3\,\red{\alpha}$ is FIFO-consistent
    and compatible with $\dep{\hat{\Tar}}^+$, where $\hat{\Tar}$ is a breakpoint in the replayable system.

    \subsection{Propagation of Reordering Constraints}
    To reorder communication actions while preserving the property of being compatible with ${\prec}$, it may also be necessary to reorder other communication actions.
    For example, consider the following case, in which reordering the key pair $(k,k')$ affects the ordering of another key pair $(l,l')$:
    \[
        \lambda = \lambda_0\,\red{s_{c,k}}\,\lambda_1\,\red{s'_{c_1,l}}\,\lambda_2\,\red{s''_{c_1,l'}}\,\lambda_3\,\red{s'''_{c,k'}}\,\lambda_4
    \]
    where $s_{c,k} \prec s'_{c_1,l}$ and $s''_{c_1,l'} \prec s'''_{c,k'}$.
    Then any permutation of $\lambda$ that reverses the order of $k$ and $k'$ and remains compatible with ${\prec}$ must also reverse the order of $l$ and $l'$.
    Moreover, reversing the order of $l$ and $l'$ may in turn require reversing the order of another pair of keys.
    The same reasoning applies to receive actions.
    We formalise this propagation of reordering requirements as follows.

    \begin{definition}\label{def:propagation}
        Given a transitive relation ${\prec} \subseteq \fAct^2$, we define $\prpgStar{{\prec}}$ as the
        smallest reflexive and transitive relation over pairs of keys such
        that $(k,k') \prpgStar{{\prec}} (l,l')$ whenever one of the following
        holds:

        \begin{itemize}
            \item
            $s_{c,k} \preceq s'_{c_1,l}$ and $s''_{c_1,l'} \preceq s'''_{c,k'}$;
            \item
            $s_{c,k} \preceq r'_{c_1,l}$ and $r''_{c_1,l'} \preceq s'''_{c,k'}$;
            \item
            $r_{c,k} \preceq s'_{c_1,l}$ and $s''_{c_1,l'} \preceq r'''_{c,k'}$;
            \item
            $r_{c,k} \preceq r'_{c_1,l}$ and $r''_{c_1,l'} \preceq r'''_{c,k'}$.
        \end{itemize}
        Here, for actions $\alpha$ and $\alpha'$, $\alpha \preceq \alpha'$ if and only if
        $\alpha \prec \alpha'$ or $\alpha = \alpha'$.
    \end{definition}

    The four conditions distinguish the types of actions involved in one
    step of propagation. The first is the send-to-send case illustrated
    above: reordering the send actions for $(k,k')$ requires reordering the
    send actions for $(l,l')$. The second propagates the requirement
    from the send actions for $(k,k')$ to the receive actions for
    $(l,l')$. The third propagates it from the receive actions for
    $(k,k')$ to the send actions for $(l,l')$, and the fourth propagates
    it between the receive actions for the two pairs. Reflexivity includes
    the original pair itself, while transitivity captures the further
    propagation of the reordering requirement through other pairs of keys.

    The following property is important for ensuring that the order of the keys to be reordered is consistent.
    \begin{lemma}\label{lem:propagation}
        Suppose that \og{$\lambda \in \fAct^*$} is FIFO-consistent and compatible with ${\prec}$, and that $k$ and $k'$ appear in $\lambda$.
        Then $(k,k')\prpgStar{{\prec}}(l,l')$ implies $(k,k')\nprpgStar{{\prec}}(l',l)$.
    \end{lemma}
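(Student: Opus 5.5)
The plan is to show that every pair reachable from $(k,k')$ under $\prpgStar{{\prec}}$ is ``oriented'' the same way in $\lambda$ as $(k,k')$ itself. Throughout, assume $k$ and $k'$ are keys of messages on the same channel $c$ (this is the situation in which propagation is used). The key invariant is:
\begin{quote}
if $(k,k')\prpgStar{{\prec}}(l,l')$, then the actions for $l$ and $l'$ occur in $\lambda$ in the same relative order as those for $k$ and $k'$.
\end{quote}
By FIFO-consistency of $\lambda$, for a pair of keys on a common channel it is irrelevant whether we compare the positions of their sends or of their receives (when both receives occur). So we may speak of ``$l$ before $l'$ in $\lambda$'', meaning that $s_{c_1,l}$ precedes $s_{c_1,l'}$. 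Without loss of generality, assume $k$ occurs before $k'$ in $\lambda$.

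We prove the invariant by induction on the length of a derivation of $(k,k')\prpgStar{{\prec}}(l,l')$. The reflexive case is immediate. For the inductive step, we are given a pair $(m,m')$ on a channel $c$ with $m$ before $m'$, and one propagation step of Definition~\ref{def:propagation} leading to $(l,l')$ on channel $c_1$. Take, for example, the send--send clause: $s_{c,m}\preceq s'_{c_1,l}$ and $s''_{c_1,l'}\preceq s'''_{c,m'}$. Compatibility of $\lambda$ with ${\prec}$ turns each $\preceq$ into ``occurs no later than'' in $\lambda$. Hence
\[
\mathrm{pos}(s_{c_1,l'})\le \mathrm{pos}(s_{c,m'}) \quad\text{and}\quad \mathrm{pos}(s_{c,m})\le \mathrm{pos}(s_{c_1,l}).
\]
Suppose $l'$ came before $l$. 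Then we would get
\[
\mathrm{pos}(s_{c_1,l'})<\mathrm{pos}(s_{c_1,l}).
\]
This alone is not yet a contradiction, so we must argue in the other direction: from $m$ before $m'$ we have $\mathrm{pos}(s_{c,m})<\mathrm{pos}(s_{c,m'})$. I expect this to be the main obstacle. The chain reads $s_{c,m}\le s_l$ and $s_{l'}\le s_{m'}$, so we need to rule out $s_{l'}<s_l$ using $s_m<s_{m'}$, and the chain alone does not exclude it.

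Rereading Definition~\ref{def:propagation}, its intent is that reversing $(m,m')$, i.e.\ placing $m'$ before $m$, forces $l'$ before $l$. So the invariant should be stated relative to the intended reordering, not to $\lambda$'s current order. The correct claim to prove is therefore: the map from a pair to ``its order in $\lambda$'' satisfies that if $(m,m')\to(l,l')$ is one propagation step and $l$ is before $l'$, then $m$ is before $m'$. This is the contrapositive of what I first tried, and it holds: $s_m\le s_l<s_{l'}\le s_{m'}$. Equivalently, $l'$ before $l$ implies $m'$ before $m$. The other three clauses follow identically, using FIFO-consistency to switch between send and receive positions on each channel.

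Now suppose, for contradiction, that we have both $(k,k')\prpgStar{{\prec}}(l,l')$ and $(k,k')\prpgStar{{\prec}}(l',l)$. Note that $l\ne l'$, since the endpoints of each chain occupy distinct positions. Exactly one of $l$ before $l'$ or $l'$ before $l$ holds in $\lambda$, so some chain from $(k,k')$ ends in a pair that is in $\lambda$-order. Propagating the one-step fact backwards along that chain, we conclude that $(k,k')$ is in $\lambda$-order. Applying the same argument to the other chain, whose endpoint is out of $\lambda$-order, does not directly give information, since the implication only transfers ``in order'' backwards. So the final step must instead use the strict inequalities: each one-step inequality chain through $\prec$ strictly separates positions. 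I would therefore strengthen the invariant to a two-sided statement, asserting the equivalence that $m$ is before $m'$ iff $l$ is before $l'$. I would attempt this by showing the reverse direction through the same clause with the roles of the endpoints swapped, exploiting that compatibility gives non-strict inequalities while distinct keys give distinct positions. Settling this two-sided transfer is the step I expect to require the most care.
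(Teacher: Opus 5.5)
Your one-step transfer is correct. Write $x$ for the action type (send or receive) a clause uses on the source pair and $y$ for the type it uses on the target pair. If $(m,m')$ propagates to $(n,n')$, compatibility gives $\mathrm{pos}(x_m)\le\mathrm{pos}(y_n)$ and $\mathrm{pos}(y_{n'})\le\mathrm{pos}(x_{m'})$. So if $n$ precedes $n'$, then $m$ precedes $m'$. In its strict contrapositive form, $\mathrm{pos}(y_{n'})\le\mathrm{pos}(x_{m'})<\mathrm{pos}(x_m)\le\mathrm{pos}(y_n)$, and iterated along chains with FIFO-consistency, this already proves the lemma when $k'$ precedes $k$ in $\lambda$. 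In that case every pair reachable from $(k,k')$ is strictly reversed, so $(l,l')$ and $(l',l)$ cannot both be reached. Your ``without loss of generality'' therefore hides a genuine asymmetry.

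The gap is the case $k$ before $k'$, and your proposed repair cannot work there. The two inequalities are consistent with $\mathrm{pos}(x_m)<\mathrm{pos}(y_{n'})<\mathrm{pos}(y_n)<\mathrm{pos}(x_{m'})$, so the two-sided transfer is false. In fact, under the stated hypotheses alone the lemma itself fails in this case:
\begin{itemize}
\item Take $\lambda=s_{c,k}\,s'_{c,k'}$ with $s_{c,k}\prec s'_{c,k'}$. This arises, e.g., from one process sending twice on $c$, which is a Sequential $\mhb$-edge and hence lies in $\dep{\hat{\Tar}}^{+}$.
\item $\lambda$ is FIFO-consistent and compatible with ${\prec}$.
\item The first clause of Definition~\ref{def:propagation} with $c_1=c$ and $(l,l')=(k',k)$ holds, so $(k,k')\prpgStar{{\prec}}(k',k)$. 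Meanwhile $(k,k')\prpgStar{{\prec}}(k,k')$ holds by reflexivity.
\item The same clause with $l=l'=k$ shows that degenerate pairs are reachable, so your remark that $l\neq l'$ is also unfounded.
\end{itemize}

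For comparison, the paper argues by induction on the derivation and uses an observation missing from your plan. Every generating clause is invariant under reversing both pairs: $(m,m')$ propagates to $(n,n')$ iff $(n',n)$ propagates to $(m',m)$. The first and fourth clauses map to themselves, and the second and third map to each other. Together with transitivity, this reduces the whole lemma to the single claim $(k,k')\nprpgStar{{\prec}}(k',k)$, which is the paper's base case. The paper justifies that base case only by FIFO-consistency and compatibility, and the example above shows that this justification does not go through either. So the obstacle you hit is genuine, not an artefact of your route.

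A correct proof needs an extra hypothesis that excludes a propagation chain from $(k,k')$ back to $(k',k)$, and must derive $(k,k')\nprpgStar{{\prec}}(k',k)$ from it. Positional reasoning in $\lambda$ alone cannot do this. The natural source is the setting of Lemmas~\ref{lem:partition} and~\ref{lem:reorder}, where $k$ and $k'$ are assumed independent. There $k$ does precede $k'$, which is exactly the case you left open.
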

    \begin{proof}
        We prove by induction on the derivation of $(k,k')\prpgStar{{\prec}}(l,l')$ that $(k,k')\prpgStar{{\prec}}(l,l')$ implies $(k,k')\nprpgStar{{\prec}}(l',l)$.

        \paragraph{Base case}
        Since $\lambda$ is FIFO-consistent and compatible with ${\prec}$, we have
        $(k,k')\nprpgStar{{\prec}}(k',k)$.

        \paragraph{Induction step}
        Suppose that $(k,k')\prpgStar{{\prec}}(l_1,l'_1)$ and that
        one of the four generating conditions in the definition of
        $\prpgStar{{\prec}}$ propagates $(l_1,l'_1)$ to $(l,l')$.
        By the induction hypothesis, $(k,k')\nprpgStar{{\prec}}(l'_1,l_1)$.
        Suppose, towards a contradiction, that
        $(k,k')\prpgStar{{\prec}}(l',l)$.
        By symmetry of each generating condition, reversing both pairs
        propagates $(l',l)$ to $(l'_1,l_1)$.
        Hence, by transitivity, $(k,k')\prpgStar{{\prec}}(l'_1,l_1)$,
        contradicting $(k,k')\nprpgStar{{\prec}}(l'_1,l_1)$.
        Therefore, $(k,k')\nprpgStar{{\prec}}(l',l)$.
    \end{proof}

    \og{By Lemma~\ref{lem:propagation}, the propagated reordering requirements do not conflict with one another.}
    
    \og{If another communication action on the same channel occurs between the two communication actions to be reordered, the reordering requirement also propagates to that action.}
    For example, consider the following sequence:
    \[
        \lambda
        =
        \lambda_0\,
        \red{s_{c,k}}\,
        \lambda_1\,
        \red{s''_{c,k''}}\,
        \lambda_2\,
        \red{s'_{c,k'}}\,
        \lambda_3.
    \]
    We aim to reorder $\lambda$ so that \og{the send action for $k$ is the last on channel $c$}.
    Thus, \og{$\red{s_{c,k}}$ must be reordered not only with $\red{s'_{c,k'}}$ but also with the intervening $\red{s''_{c,k''}}$}.
    To preserve compatibility with ${\prec}$, \og{a related reordering constraint arises for} a key pair $(l,l')$ satisfying $(k,k')\prpgStar{{\prec}}(l,l')$,
    as illustrated by the sequence:
    \[
        \lambda=\lambda'_0\,\red{s_{c_1,l}}\,\lambda'_1\,\red{s''_{c_1,l''}}\,\lambda'_2\,\red{s'_{c_1,l'}}\,\lambda'_3
    \]
    \og{In this case, it suffices to reorder the send actions so that $\red{s'_{c_1,l'}}$ precedes $\red{s_{c_1,l}}$.}
    \og{The action $\red{s_{c_1,l}}$ need not be the last on channel $c_1$.}

    To define an algorithm for handling these reordering constraints, we first introduce two functions.
    For $\lambda$ and keys $l,l'$ appearing in $\lambda$, we define
    $\mathsf{itm}(\lambda,l,l')$ as the set of keys $l''$ such that either
    $s_{c,l}, s_{c,l'}, s_{c,l''} \in \lambda$ and $s_{c,l''}$ occurs between $s_{c,l}$ and $s_{c,l'}$ in $\lambda$,
    or $r_{c,l}, r_{c,l'}, r_{c,l''} \in \lambda$ and $r_{c,l''}$ occurs between $r_{c,l}$ and $r_{c,l'}$ in $\lambda$.
    For $\lambda$ and keys $l_1,l_2$ appearing in $\lambda$, we define $\mathsf{ord}(\lambda,l_1,l_2)$ as follows:
    \[
        \mathsf{ord}(\lambda,l_1,l_2)
        =
        \{(s_{c,l_1},s'_{c,l_2})\}
        \cup
        \{(r_{c,l_1},r'_{c,l_2})
        \mid r_{c,l_1},r'_{c,l_2}\in\lambda
        \text{ are receive actions}\}.
    \]
    Then we define the algorithm $\textproc{Partition}$ \og{to extend a given transitive relation with these auxiliary ordering constraints} as follows:

    \begin{algorithm}[H]
        \caption{\textproc{Partition}}
        \label{alg:partition}
        \begin{algorithmic}[1]
            \Require \og{$\lambda\in\fAct^*$; distinct keys $k,k'$ appearing in $\lambda$; a transitive relation ${\prec}_0\subseteq\fAct^2$}
            \Ensure \og{a transitive relation ${\prec}\subseteq\fAct^2$ such that ${\prec}_0\subseteq{\prec}$}
            \State ${\prec} \gets \og{{\prec}_0}$
            \ForAll{$k''\in\mathsf{itm}(\lambda,k,k')$}
                \State ${\prec}\gets({\prec}\cup\og{\mathsf{ord}(\lambda,k'',k')})^+$
            \EndFor
            \While{there are keys $l,l',l''$ such that $(k,k')\prpgStar{{\prec}}(l,l')$, $l''\in\mathsf{itm}(\lambda,l,l')$, $(k,k')\nprpgStar{{\prec}}(l,l'')$, and $(k,k')\nprpgStar{{\prec}}(l'',l')$}
                \State \og{Nondeterministically choose such keys $l,l',l''$ .}
                \If{a key $l_1$ appears in $\lambda$, $\exists l'_1.\,(k,k')\prpgStar{{\prec}}(l_1,l'_1)$, and $s_{c_1,l_1}\prec s'_{c_2,l''}$}
                    \State ${\prec}\gets({\prec}\cup\mathsf{ord}(\lambda,l,l''))^+$
                \Else
                    \State ${\prec}\gets({\prec}\cup\mathsf{ord}(\lambda,l'',l'))^+$
                \EndIf
            \EndWhile
            \State \Return ${\prec}$
        \end{algorithmic}
    \end{algorithm}

    \og{The \textbf{for} loop adds constraints to the transitive relation so that every communication action
    for a key $k''$ occurring between those for $k$ and $k'$ is reordered before the communication action for $k$.}
    \og{The \textbf{while} loop considers each propagated pair $(l,l')$ such that $(k,k')\prpgStar{{\prec}}(l,l')$ and
    ensures that every unconstrained key $l''\in\mathsf{itm}(\lambda,l,l')$ is explicitly ordered on one side or the other.}
    \og{If a key $l_1$ appears in $\lambda$ and there exists a key $l'_1$ such that $(k,k')\prpgStar{{\prec}}(l_1,l'_1)$ and $s_{c_1,l_1}\prec s'_{c_2,l''}$, the loop imposes the order $(l,l'')$; otherwise, it imposes the order $(l'',l')$.}

    We write $\mathsf{valid}_{\prec}(k,k')$ if, for every $(l,l')$ such that $(k,k')\prpgStar{{\prec}}(l,l')$, neither $s_{c,l}\prec s'_{c,l'}$ nor $r_{c,l}\prec r'_{c,l'}$ holds.
    \og{$\mathsf{valid}_{\prec}(k,k')$ means that, for every pair $(l,l')$ propagated from $(k,k')$,
    ${\prec}$ does not require a send or receive action associated with $l$ to precede the corresponding action associated with $l'$.}
    The following lemma shows that $\textproc{Partition}$ terminates and guarantees transitivity, compatibility, validity,
    and the assignment of every key to one of the two sides.

    \begin{lemma}\label{lem:partition}
        Given $\program$, suppose that $\lambda \in \fAct^*$, \og{$\rho \in (\fAct \cup \bAct)^*$,}
        $\Tar$ is reachable, $\ad{\Tar}=\hat{\Tar}=(\hat{\Pi},\hat{Q})$,
        $\hat{\Tar}\genLStar{\rho}\hat{\Sys}$, $\rmlog{\hat{\Sys}}=\Sys$,
        $\Ini^{\program}\genBStar{\lambda}\Sys$, $k$ and $k'$ appear in $\lambda$, and
        $k\centernot\bowtie_{\hat{\Pi}}k'$, and
        $k\centernot\bowtie_{\hat{\Pi}}k''$ for every
        $k''\in\mathsf{itm}(\lambda,k,k')$.
        Then $\textproc{Partition}(\lambda,k,k',\og{\dep{\hat{\Tar}}^+})$ terminates and returns
        a transitive relation ${\prec} \supseteq \og{{\dep{\hat{\Tar}}^+}}$ satisfying:
        \begin{itemize}
            \item $\lambda$ is compatible with ${\prec}$;
            \item $\mathsf{valid}_{\prec}(k,k')$; and
            \item if $(k,k')\prpgStar{{\prec}}(l,l')$ and $l''\in\mathsf{itm}(\lambda,l,l')$, then either $(k,k')\prpgStar{{\prec}}(l,l'')$ or $(k,k')\prpgStar{{\prec}}(l'',l')$.
        \end{itemize}
    \end{lemma}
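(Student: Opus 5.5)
The plan is to prove the four claims in the order \emph{transitivity and inclusion}, \emph{compatibility}, \emph{termination together with the partition property}, and finally \emph{validity}, which reduces to Lemma~\ref{lem:propagation}.

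\textbf{Transitivity and inclusion.} These are immediate. We start from ${\prec}_0=\dep{\hat{\Tar}}^+$, which is transitive. Every update has the form $({\prec}\cup\mathsf{ord}(\cdots))^+$, so the relation only grows and is transitive after each step.

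\textbf{Compatibility of $\lambda$ itself.} We argue by invariant over the algorithm.
\begin{itemize}
\item \emph{Initially.} $\lambda$ labels a derivation from $\Ini^{\program}$. Each $\mhb$ edge, and each same-channel $\dashrightarrow$ edge computed at $\hat{\Tar}$, relates actions in the order in which they occur in $\lambda$. This uses Lemma~\ref{lem:fifo} for the queue-order edges, so $\lambda$ is compatible with $\dep{\hat{\Tar}}^+$.
\item \emph{For loop.} A key $k''\in\mathsf{itm}(\lambda,k,k')$ has its send (resp.\ receive) strictly between those of $k$ and $k'$ in $\lambda$. Hence every pair in $\mathsf{ord}(\lambda,k'',k')$ is already ordered as in $\lambda$.
\item \emph{While loop.} The same argument applies to $\mathsf{ord}(\lambda,l,l'')$ and $\mathsf{ord}(\lambda,l'',l')$. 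It needs the auxiliary invariant that every propagated pair $(l,l')$ with $(k,k')\prpgStar{{\prec}}(l,l')$ has the $l$-action before the $l'$-action in $\lambda$. Then $l''\in\mathsf{itm}(\lambda,l,l')$ lies between them.
\end{itemize}
A transitive closure of a set of pairs that are all forward in $\lambda$ stays forward in $\lambda$, so the invariant is preserved. I expect to prove the orientation invariant by induction on $\prpgStar{{\prec}}$, using the hypotheses $k\centernot\bowtie_{\hat{\Pi}}k'$ and $k\centernot\bowtie_{\hat{\Pi}}k''$. This is one delicate point.

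\textbf{Termination and the partition property.} The set of keys in $\lambda$ is finite, and ${\prec}$ and $\prpgStar{{\prec}}$ are monotone in the updates. The measure is the number of triples $(l,l',l'')$ satisfying the loop guard. It suffices to show that each iteration falsifies the guard for the chosen triple, i.e.\ afterwards $(k,k')\prpgStar{{\prec}}(l,l'')$ or $(k,k')\prpgStar{{\prec}}(l'',l')$. Monotonicity then ensures the triple never re-enters the guard, so the loop terminates. On exit the negated guard is exactly the third claimed property.

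This is the step I expect to be the main obstacle. One must check that adding $\mathsf{ord}(\lambda,l,l'')$ in the \textbf{if} branch, or $\mathsf{ord}(\lambda,l'',l')$ in the \textbf{else} branch, produces a generating instance of Definition~\ref{def:propagation}.
\begin{itemize}
\item \emph{If branch.} Its condition provides an $l_1$ with $(k,k')\prpgStar{{\prec}}(l_1,l'_1)$ and $s_{c_1,l_1}\prec s'_{c_2,l''}$. Combined with the new edge, this should give the first generating condition from $(l_1,l'_1)$ or $(l,l')$ to $(l,l'')$.
\item \emph{Else branch.} I would use the edge $s_{l'}\preceq\cdots$ inherited from $(k,k')\prpgStar{{\prec}}(l,l')$. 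Together with the new edge $s_{l''}\prec s_{l'}$, it should yield a generating step to $(l'',l')$.
\end{itemize}
I would first try a case split on which of the four generating conditions produced the last step to $(l,l')$.

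\textbf{Validity.} The key observation is a one-step self-propagation. Suppose $(k,k')\prpgStar{{\prec}}(l,l')$ and $s_{c,l}\prec s'_{c,l'}$. Then the first generating condition, instantiated with the pair $(l,l')$ itself, gives $(l,l')\prpgStar{{\prec}}(l',l)$, since it only needs $s_{c,l}\preceq s'_{c,l'}$ twice. The same holds for receives via the fourth condition. By transitivity we obtain $(k,k')\prpgStar{{\prec}}(l',l)$.

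The compatibility already shown lets us apply Lemma~\ref{lem:propagation} to the final ${\prec}$. Its FIFO-consistency hypothesis holds by Lemma~\ref{lem:fifo}. The lemma forbids $(k,k')\prpgStar{{\prec}}(l',l)$, which is a contradiction. Hence $\mathsf{valid}_{\prec}(k,k')$ holds.
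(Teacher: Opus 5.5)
\paragraph{The validity step has a genuine gap.} You obtain $\mathsf{valid}_{\prec}(k,k')$ by applying Lemma~\ref{lem:propagation} to the final ${\prec}$, but that lemma cannot be used as a black box here. Its inductive step is a symmetry argument, so all of its content sits in the base case $(k,k')\nprpgStar{{\prec}}(k',k)$. That base case does not follow from FIFO-consistency and compatibility, and your own self-propagation observation refutes it whenever $s_{c,k}\prec s_{c,k'}$. This happens, e.g., when one process sends $k$ and then $k'$ on $c$: then $s_{c,k}\mhb s_{c,k'}$ is in $\dep{\hat{\Tar}}^+$, and $\lambda$ is trivially compatible with it. Moreover, by your own argument that base case already implies $\mathsf{valid}_{\prec}(k,k')$, so you are assuming something at least as strong as the conclusion.

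A clear symptom is that your validity proof never uses $k\centernot\bowtie_{\hat{\Pi}}k'$ or $k\centernot\bowtie_{\hat{\Pi}}k''$, yet validity fails without them (in the example above, already at the reflexive pair $(k,k')$). The missing idea is the paper's: prove validity directly as an invariant of $\textproc{Partition}$.
\begin{itemize}
\item After the \textbf{for} loop, validity holds by the independence hypotheses: there is no dependency path between $k$ and $k'$, or between $k$ and any $k''\in\mathsf{itm}(\lambda,k,k')$.
\item Each \textbf{while} iteration preserves it. A new violation $s_{c,l_1}\prec s_{c,l'_1}$ (or its receive analogue) for a propagated pair must pass through a freshly added $\mathsf{ord}$ edge, e.g.\ $s_{c,l_1}\preceq s_{c,l}\prec s_{c,l''}\preceq s_{c,l'_1}$. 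That is a generating step $(l_1,l'_1)\prpgStar{{\prec}}(l,l'')$, which contradicts the guard $(k,k')\nprpgStar{{\prec}}(l,l'')$. The other branch is symmetric.
\end{itemize}
In the paper, Lemma~\ref{lem:propagation} is used only afterwards, in Lemma~\ref{lem:reorder}, once validity is available. Your proof inverts this dependency.

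\paragraph{Two smaller points.} First, the orientation invariant you plan for compatibility is false. Suppose one process performs $s_{c_1,b'}$ then $s_{c,k'}$, another performs $s_{c,k}$ then $s_{c_1,b}$, and $\lambda$ orders them as $s_{c_1,b'}\,s_{c,k}\,s_{c,k'}\,s_{c_1,b}$. Then $k\centernot\bowtie_{\hat{\Pi}}k'$, and the first generating condition gives $(k,k')\prpgStar{{\prec}}(b,b')$, although $b'$ precedes $b$. You do not need this invariant. Read $l''\in\mathsf{itm}(\lambda,l,l')$ directionally: the $l''$-action comes after the $l$-action and before the $l'$-action. Your own \textbf{for}-loop argument, and the paper's remark that the added pairs follow the order of $\lambda$, implicitly use this reading. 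With it, $\mathsf{ord}(\lambda,l,l'')$ and $\mathsf{ord}(\lambda,l'',l')$ are forward in $\lambda$ (for receives, by FIFO-consistency). The directional reading is in fact forced: under the symmetric one, a third process sending on $c_1$ between $s_{c_1,b'}$ and $s_{c_1,b}$ makes the \textbf{while} loop add an edge against $\lambda$.

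Second, your termination argument is sound and more explicit than the paper's count of newly added pairs: each triple is selected at most once, since afterwards one of the two propagations holds and persists by monotonicity. However, your branches are swapped, and no case split is needed.
\begin{itemize}
\item Adding $s_{c,l}\prec s_{c,l''}$ (the \textbf{if} branch) yields $(l,l')\prpgStar{{\prec}}(l'',l')$, by the first condition with $s_{c,l}\preceq s_{c,l''}$ and $s_{c,l'}\preceq s_{c,l'}$.
\item Adding $s_{c,l''}\prec s_{c,l'}$ (the \textbf{else} branch) yields $(l,l')\prpgStar{{\prec}}(l,l'')$.
\end{itemize}
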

    \begin{proof}
        By Lemma~\ref{lem:fifo} and $\Ini^{\program}\genBStar{\lambda}\Sys$, $\lambda$ is FIFO-consistent.
        \og{Let ${\prec}_0=\dep{\hat{\Tar}}^+$ and ${\prec}_1$ be the relation obtained after the first \textbf{for} loop of $\textproc{Partition}$.}
        The derivation $\Ini^{\program}\genBStar{\lambda}\Sys$ implies that $\lambda$ is compatible with ${\prec}_0$.
        Since the pairs added in the first \textbf{for} loop follow the order of $\lambda$, $\lambda$ is compatible with ${\prec}_1$.
        \og{By the independence assumptions, there is no dependency path in either
        direction between $k$ and $k'$, or between $k$ and any
        $k''\in\mathsf{itm}(\lambda,k,k')$.}
        \og{Therefore, for every $(l,l')$ such that
        $(k,k')\prpgStar{{\prec}_1}(l,l')$, neither
        $s_{c,l}\mathrel{{\prec}_1}s'_{c,l'}$ nor
        $r_{c,l}\mathrel{{\prec}_1}r'_{c,l'}$ holds.}
        \og{Hence, $\mathsf{valid}_{{\prec}_1}(k,k')$ holds.}

        \og{For $i\geq1$,} let ${\prec}_i$ be the current relation and ${\prec}_{i+1}$ the relation obtained after one iteration of the \textbf{while} loop.
        By the definition of $\textproc{Partition}$, ${\prec}_i$ is transitive.
        Since $\lambda$ is compatible with ${\prec}_i$, and the pairs added to construct ${\prec}_{i+1}$ follow the order of $\lambda$, $\lambda$ is compatible with ${\prec}_{i+1}$.
        Suppose that $\mathsf{valid}_{{\prec}_i}(k,k')$ and, towards a contradiction,
        that $\mathsf{valid}_{{\prec}_{i+1}}(k,k')$ does not hold because there exist
        $(l_1,l_1')$ such that
        $(k,k')\prpgStar{{\prec}_{i+1}}(l_1,l_1')$ and
        $s_{c,l_1}\mathrel{{\prec}_{i+1}}s'_{c,l'_1}$.
        Let $l,l',l''$ be the keys selected in this iteration of the
        \textbf{while} loop. Thus,
        $(k,k')\prpgStar{{\prec}_i}(l,l')$,
        $l''\in\mathsf{itm}(\lambda,l,l')$,
        $(k,k')\nprpgStar{{\prec}_i}(l,l'')$, and
        $(k,k')\nprpgStar{{\prec}_i}(l'',l')$.
        Since $\mathsf{valid}_{{\prec}_i}(k,k')$ holds, the
        ${\prec}_{i+1}$-ordering from $s_{c,l_1}$ to $s'_{c,l'_1}$ must use a
        pair of send actions added in this iteration. If
        $\mathsf{ord}(\lambda,l,l'')$ is added, the added pair is the one from
        $s_{c,l}$ to $s'_{c,l''}$. The parts of the ordering before and after
        this pair use ${\prec}_i$; hence, by the definition of propagation,
        $(l_1,l'_1)\prpgStar{{\prec}_i}(l,l'')$. Together with the propagation
        of $(l_1,l'_1)$ from $(k,k')$, this contradicts
        $(k,k')\nprpgStar{{\prec}_i}(l,l'')$.
        If $\mathsf{ord}(\lambda,l'',l')$ is added, the same argument gives
        $(l_1,l'_1)\prpgStar{{\prec}_i}(l'',l')$, contradicting
        $(k,k')\nprpgStar{{\prec}_i}(l'',l')$.
        The case where
        $r_{c,l_1}\mathrel{{\prec}_{i+1}}r'_{c,l'_1}$ is analogous.
        Therefore, $\mathsf{valid}_{{\prec}_{i+1}}(k,k')$ holds.

        Since the set of keys occurring in $\lambda$ is finite and each iteration monotonically extends the current relation by at least one new pair of actions, the \textbf{while} loop terminates.
        When the loop terminates, every key $l''$ of a propagated pair $(l,l')$ has been assigned to one of its two sides, so either $(l,l'')$ or $(l'',l')$ is propagated from $(k,k')$.
    \end{proof}

    \og{By Lemma~\ref{lem:partition}, $\lambda$ remains compatible with the transitive relation ${\prec}$ returned by $\textproc{Partition}$.}
    \og{For every pair $(l,l')$ propagated from $(k,k')$, ${\prec}$ does not require the send or receive action for $l$ to precede the corresponding action for $l'$,
        and for each $l'' \in \mathsf{itm}(l,l')$, either $(l,l'')$ or $(l'',l')$ is propagated from $(k,k')$ by $\prpgStar{}_{\prec}$.}

    \subsection{Reordering of Action Sequences}

    We define an operation that reorders two communication actions in a sequence while preserving compatibility with ${\prec}$.
    \begin{definition}\label{def:swap}
        Given a transitive relation \og{${\prec} \subseteq \fAct^2$}, \og{$\lambda \in \fAct^*$}, and \og{communication actions} $\alpha,\alpha' \in \lambda$ such that
        $\alpha$ precedes $\alpha'$ in $\lambda$, $\lambda$ is compatible with ${\prec}$, and $\alpha \mathrel{\not\prec} \alpha'$, we define $\mathsf{swp}_{{\prec}}(\lambda,\alpha,\alpha')\og{\in \fAct^*}$ as follows.
        Let $\lambda = \lambda' \alpha \lambda''' \alpha' \lambda''$, and divide $\lambda'''$ into three subsequences:
        \begin{itemize}
            \item $\lambda'''_1$, containing exactly those actions $\alpha_1$ such that $\alpha \prec \alpha_1$;
            \item $\lambda'''_2$, containing exactly those actions $\alpha'_1$ such that $\alpha'_1 \prec \alpha'$; and
            \item $\lambda'''_3$, containing all the remaining actions.
        \end{itemize}
        Then
        \[
            \mathsf{swp}_{{\prec}}(\lambda,\alpha,\alpha')
            =
            \lambda' \lambda'''_3 \lambda'''_2 \alpha' \alpha \lambda'''_1 \lambda''.
        \]
    \end{definition}

    We state that a sequence compatible with ${\prec}$ remains compatible with ${\prec}$ after applying $\mathsf{swp}$.
    \begin{lemma}\label{lem:compatibility}
        Suppose that \og{$\lambda \in \fAct^*$} is compatible with ${\prec}$, $\alpha,\alpha' \in \lambda$, $\alpha$ precedes $\alpha'$ in $\lambda$, and $\alpha \not\prec \alpha'$.
        Then $\mathsf{swp}_{{\prec}}(\lambda,\alpha,\alpha')$ is compatible with ${\prec}$.
    \end{lemma}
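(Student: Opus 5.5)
We need to show that for every pair of positions $i<j$ in $\lambda_s = \mathsf{swp}_{\prec}(\lambda,\alpha,\alpha')$ with the action at $i$ written $\gamma$ and the one at $j$ written $\gamma'$, we never have $\gamma' \prec \gamma$. Equivalently, every pair of actions whose relative order in $\lambda_s$ differs from their relative order in $\lambda$ must be $\prec$-incomparable in the reversed direction. Write $\lambda = \lambda'\alpha\lambda'''\alpha'\lambda''$ and $\lambda_s = \lambda'\lambda'''_3\lambda'''_2\alpha'\alpha\lambda'''_1\lambda''$. Within each of the blocks $\lambda'$, $\lambda'''_1$, $\lambda'''_2$, $\lambda'''_3$ and $\lambda''$ the relative order is that of $\lambda$, because the subsequences keep the original order. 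Pairs with one element in $\lambda'$ or $\lambda''$ also keep their relative order. So it suffices to check the pairs whose order is inverted.

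First we check that the partition is well defined, i.e.\ $\lambda'''_1$ and $\lambda'''_2$ are disjoint. If some $\alpha_1$ satisfied both $\alpha\prec\alpha_1$ and $\alpha_1\prec\alpha'$, transitivity would give $\alpha\prec\alpha'$, contradicting the hypothesis. The inverted pairs are then: $\alpha$ versus $\alpha'$; $\alpha$ versus actions of $\lambda'''_2\cup\lambda'''_3$; $\alpha'$ versus actions of $\lambda'''_1\cup\lambda'''_3$; actions of $\lambda'''_1$ versus actions of $\lambda'''_2$, $\lambda'''_3$ and $\alpha'$; and actions of $\lambda'''_2$ versus actions of $\lambda'''_3$.

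For each inverted pair we must exclude the $\prec$-relation that would be violated in $\lambda_s$.

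\begin{itemize}
\item \textbf{$\alpha'$ placed before $\alpha$.} We need $\alpha\not\prec\alpha'$, which is a hypothesis.
\item \textbf{$x\in\lambda'''_2\cup\lambda'''_3$ placed before $\alpha$.} We need $\alpha\not\prec x$. This holds because any $x$ with $\alpha\prec x$ lies in $\lambda'''_1$.
\item \textbf{$\alpha'$ placed before $y\in\lambda'''_1\cup\lambda'''_3$.} We need $y\not\prec\alpha'$. This holds because any $y$ with $y\prec\alpha'$ lies in $\lambda'''_2$, and $\lambda'''_1$ is disjoint from $\lambda'''_2$.
\item \textbf{$x\in\lambda'''_2\cup\lambda'''_3$ placed before $y\in\lambda'''_1$.} We need $y\not\prec x$. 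Suppose $y\prec x$. From $\alpha\prec y$ and transitivity we get $\alpha\prec x$, so $x\in\lambda'''_1$, a contradiction.
\item \textbf{$x\in\lambda'''_3$ placed before $y\in\lambda'''_2$.} We need $y\not\prec x$. Suppose $y\prec x$. Then $x\prec\alpha'$ would be needed to conclude anything, and we do not have it; so argue in the other direction instead. We need $x$ to precede $y$ safely, i.e.\ $y\not\prec x$ is not the right condition to check here. Let us redo it: in $\lambda$ the order is $y$ before $x$ or $x$ before $y$ depending on their original positions. The case that is inverted relative to $\lambda$ is when $y$ originally preceded $x$, and then compatibility of $\lambda_s$ requires $y\not\prec x$. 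Suppose $y\prec x$. Since $y\prec\alpha'$ does not yield anything about $x$, we instead use $x\prec\alpha'$? We do not have it. So the correct check is the converse situation: $\lambda_s$ places $x$ before $y$, so we need $y\not\prec x$. This is where the argument must be completed, and transitivity gives it only if the condition is phrased as follows: if $x\prec y$, then $x\prec y\prec\alpha'$ gives $x\prec\alpha'$, so $x\in\lambda'''_2$. That is consistent with $\lambda_s$ placing $x$ before $y$ and is therefore harmless.
\end{itemize}

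The remaining risk is exactly in the last case, the $\lambda'''_2$/$\lambda'''_3$ inversion. An action $y\in\lambda'''_2$ that precedes $x\in\lambda'''_3$ in $\lambda$ may have $y\prec x$, and the block order $\lambda'''_3\lambda'''_2$ would then violate compatibility. For this case I would either argue that compatibility forces such $x$ into $\lambda'''_2$ as well, or else reorder the blocks as $\lambda'''_2\lambda'''_3$. The second option is still safe, because elements of $\lambda'''_3$ are unrelated to $\alpha'$ in the direction that would matter (if $x\prec\alpha'$, then $x\in\lambda'''_2$). I expect this block-ordering issue to be the main obstacle: under the definition as literally stated the lemma may need this correction, and I would check it against a small counterexample before finalising.
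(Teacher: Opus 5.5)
Your case analysis is correct everywhere except the $\lambda'''_2$/$\lambda'''_3$ case, and that case cannot be closed. With the block order $\lambda'''_3\lambda'''_2$ of Definition~\ref{def:swap} the lemma is false, so your first option (showing that compatibility forces such an $x$ into $\lambda'''_2$) has to be abandoned. Take $\lambda=\alpha\,y\,x\,\alpha'$ and the transitive relation ${\prec}=\{(y,\alpha'),(y,x)\}$. Then $\lambda$ is compatible with $\prec$ and $\alpha\not\prec\alpha'$. Here $\lambda'''_1$ is empty, $\lambda'''_2=y$ and $\lambda'''_3=x$, so $\mathsf{swp}_{\prec}(\lambda,\alpha,\alpha')=x\,y\,\alpha'\,\alpha$. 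This places $x$ before $y$ although $y\prec x$.

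This is not an artefact of an exotic $\prec$. Let $y$ be a send on some channel $d$ performed by the process of $\alpha'$ just before $\alpha'$, and let $x$ be the matching receive in a third process, otherwise unrelated to $\alpha,\alpha'$. Then both $y\prec\alpha'$ and $y\prec x$ come from must-happen-before, and $\mathsf{swp}$ schedules a receive before its send.

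Your last bullet goes wrong because it mixes the two orientations. The inverted pairs between these blocks are exactly those with $y\in\lambda'''_2$ originally before $x\in\lambda'''_3$. Compatibility then needs $y\not\prec x$, and nothing supplies it. Your observation that $x\prec y$ would force $x\in\lambda'''_2$ concerns the other orientation. Also, $\alpha'$ versus $\lambda'''_3$ is not an inverted pair, since $\lambda'''_3$ still precedes $\alpha'$; that check is harmless but unnecessary.

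Your second option is the right repair, and with it your argument goes through. In $\lambda'\lambda'''_2\lambda'''_3\alpha'\alpha\lambda'''_1\lambda''$, the only pairs inverted between the middle blocks are $x\in\lambda'''_3$ originally before $y\in\lambda'''_2$. Here $x\prec y$ would give $x\prec\alpha'$ by transitivity, i.e.\ $x\in\lambda'''_2$, a contradiction. All other inverted pairs are settled by your remaining bullets, using transitivity and $\lambda'''_1\cap\lambda'''_2=\emptyset$. These are $\alpha$ against $\alpha'$ and against $\lambda'''_2\cup\lambda'''_3$, and $\lambda'''_1$ against $\alpha'$ and against $\lambda'''_2\cup\lambda'''_3$.

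Cleaner still is to keep all actions of $\lambda'''$ outside $\lambda'''_1$ in their original relative order, followed by $\alpha'\alpha\lambda'''_1$. This inverts only a subset of the pairs inverted by either block order, which can only help when $\mathsf{swp}$ is used in Lemma~\ref{lem:swap}.

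For comparison, the paper's proof merely asserts that compatibility follows from the definition of $\mathsf{swp}$ and never examines this case. The same correction is therefore needed there before the lemma can be used in the proof of Lemma~\ref{lem:reorder}.
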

    \begin{proof}
        Since $\alpha,\alpha' \in \lambda$, $\alpha$ precedes $\alpha'$ in $\lambda$, and $\alpha \not\prec \alpha'$, $\mathsf{swp}_{{\prec}}(\lambda,\alpha,\alpha')$ is defined.
        By the definition of $\mathsf{swp}$, $\mathsf{swp}_{{\prec}}(\lambda,\alpha,\alpha')$ is compatible with ${\prec}$ if $\lambda$ is compatible with ${\prec}$.
    \end{proof}

    The propagation relation identifies pairs of keys whose order must be reversed together.
    For the reordered sequence to be FIFO-consistent, the send actions and, when present, the receive actions for each such pair must be ordered in the same direction.
    We define this local condition next.
    \begin{definition}\label{def:ordering}
        Given $\lambda=\alpha_1\cdots\alpha_n\in\fAct^*$ and keys $l$ and $l'$ that appear in $\lambda$,
        we say that $\lambda$ is $(l,l')$-ordered if
        \begin{itemize}
            \item $\alpha_i=s_{c,l}$ and $\alpha_j=s'_{c,l'}$ imply $i<j$; and
            \item $\alpha_{i'}=r_{c,l}$ and $\alpha_{j'}=r'_{c,l'}$ imply $i'<j'$.
        \end{itemize}
    \end{definition}

    The following lemma shows that if this condition holds in one direction or the other for every pair of keys on the same channel, then $\lambda$ as a whole is FIFO-consistent.
    \begin{lemma}\label{lem:ordering}
        Suppose that \og{$\lambda \in \fAct^*$ and} for any distinct keys $l$ and $l'$ on the same channel that appear in $\lambda$,
        either $\lambda$ is $(l,l')$-ordered or $\lambda$ is $(l',l)$-ordered.
        Then $\lambda$ is FIFO-consistent.
    \end{lemma}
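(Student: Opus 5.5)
The plan is to unfold Definition~\ref{def:fifo} and check, for an arbitrary quadruple of positions, that the required biconditional holds. Fix a channel $c$ and two keys $k,k'$, and let $\alpha_i=s_{c,k}$, $\alpha_j=s_{c,k'}$, $\alpha_{i'}=r_{c,k}$ and $\alpha_{j'}=r_{c,k'}$ be actions of $\lambda$. We must show that $i<j$ holds exactly when $i'<j'$ holds.

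First dispose of the degenerate case $k=k'$. Here I would use that each key labels at most one send and at most one receive action in $\lambda$. This is implicit in the freshness of keys in \rrname{QSnd}. It is also implicit in the intended use of $\lambda$ as a forward action sequence, whose actions carry distinct (pid, index) pairs; I would state this as a standing assumption if needed. Under it, $i=j$ and $i'=j'$, so neither $i<j$ nor $i'<j'$ holds, and the biconditional is trivially true.

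For distinct $k,k'$, both keys appear in $\lambda$ on the same channel $c$, so the hypothesis applies. Either $\lambda$ is $(k,k')$-ordered or it is $(k',k)$-ordered.
\begin{itemize}
\item Suppose $\lambda$ is $(k,k')$-ordered. The first clause of Definition~\ref{def:ordering}, applied to $\alpha_i$ and $\alpha_j$, gives $i<j$. The second clause, applied to $\alpha_{i'}$ and $\alpha_{j'}$, gives $i'<j'$. Both sides of the biconditional are true.
\item Suppose $\lambda$ is $(k',k)$-ordered. Symmetrically, $j<i$ and $j'<i'$. Since positions are distinct, $i<j$ and $i'<j'$ are both false, and the biconditional again holds.
\end{itemize}
Since $c$, $k$, $k'$ and the four positions were arbitrary, $\lambda$ is FIFO-consistent.

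The only delicate point is the uniqueness of send and receive actions per key. This matters both for the case $k=k'$ and for ensuring that ``the'' send for a key in Definition~\ref{def:ordering} refers to the same action as $\alpha_i$. Everything else is a direct case split.
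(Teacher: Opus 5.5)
Your proof is correct and takes essentially the same approach as the paper: unfold the definitions and note that $(l,l')$- or $(l',l)$-orderedness forces the send order and the receive order of two keys on the same channel to agree. Your explicit handling of the degenerate case $k=k'$ via uniqueness of keys is a detail the paper leaves implicit, and it is genuinely needed since Definition~\ref{def:fifo} does not require $k\neq k'$; your concern about ``the'' send in Definition~\ref{def:ordering} is unnecessary, because that definition already quantifies over all positions.
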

    \begin{proof}
        By definition, for any distinct keys $l$ and $l'$ on the same channel that appear in $\lambda$,
        if either $\lambda$ is $(l,l')$-ordered or $\lambda$ is $(l',l)$-ordered,
        then the order of their send actions agrees with the order of their receive actions.
        Therefore, $\lambda$ is FIFO-consistent.
    \end{proof}

    We state a property about the order of keys unaffected by $\mathsf{swp}$.
    \begin{lemma}\label{lem:swap}
        Suppose that \og{$\lambda \in \fAct^*$} is compatible with ${\prec}$, that $k$ and $k'$ appear in $\lambda$, and that
        for any $l,l',l''$ such that $l''\in\mathsf{itm}(\lambda,l,l')$ and
        $(k,k')\prpgStar{{\prec}}(l,l')$,
        either $(k,k')\prpgStar{{\prec}}(l,l'')$ or
        $(k,k')\prpgStar{{\prec}}(l'',l')$ holds.
        For any $(l_1,l'_1)$ such that $(k,k')\prpgStar{{\prec}}(l_1,l'_1)$:
        \begin{itemize}
            \item $\mathsf{swp}_{{\prec}}(\lambda,s_{c,l_1},s'_{c,l'_1})$ is $(l,l')$-ordered if
            $\lambda$ is $(l,l')$-ordered and $(k,k')\nprpgStar{{\prec}}(l,l')$.
            \item $\mathsf{swp}_{{\prec}}(\lambda,r_{c,l_1},r'_{c,l'_1})$ is $(l,l')$-ordered if
            $\lambda$ is $(l,l')$-ordered and $(k,k')\nprpgStar{{\prec}}(l,l')$.
        \end{itemize}
    \end{lemma}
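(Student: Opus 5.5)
We argue directly from Definition~\ref{def:swap}, treating the send case (the receive case is symmetric). Fix $(l_1,l'_1)$ with $(k,k')\prpgStar{{\prec}}(l_1,l'_1)$, write $\lambda=\lambda'\,s_{c,l_1}\,\lambda'''\,s'_{c,l'_1}\,\lambda''$, and let $\lambda^\ast=\mathsf{swp}_{{\prec}}(\lambda,s_{c,l_1},s'_{c,l'_1})=\lambda'\lambda'''_3\lambda'''_2\,s'_{c,l'_1}\,s_{c,l_1}\,\lambda'''_1\lambda''$. The key observation is that $\mathsf{swp}$ only permutes actions inside the window $s_{c,l_1}\lambda'''s'_{c,l'_1}$. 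It keeps the relative order within each block $\lambda'''_1,\lambda'''_2,\lambda'''_3$. The only relative orders it reverses are therefore of three kinds: (i) $s_{c,l_1}$ against $s'_{c,l'_1}$ and against actions of $\lambda'''_2\cup\lambda'''_3$; (ii) $s'_{c,l'_1}$ against actions of $\lambda'''_1\cup\lambda'''_3$; (iii) an action of $\lambda'''_1$ against a later action of $\lambda'''_2$ or $\lambda'''_3$, or an action of $\lambda'''_3$ placed before an earlier action of $\lambda'''_2$. Hence, if $\lambda$ is $(l,l')$-ordered but $\lambda^\ast$ is not, there are two actions $a_l,a_{l'}$ on a common channel $c_1$, with keys $l,l'$ and of the same kind (both sends or both receives), lying in the window in the order $a_l$ before $a_{l'}$ in $\lambda$, whose order is reversed. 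We derive $(k,k')\prpgStar{{\prec}}(l,l')$ in each of (i)--(iii), contradicting the hypothesis.

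\textbf{Case (i)}, $a_l=s_{c,l_1}$ so $l=l_1$. Then $a_{l'}$ is not in $\lambda'''_1$. If $a_{l'}=s'_{c,l'_1}$, then $(l,l')=(l_1,l'_1)$ and we are done. Otherwise $a_{l'}$ is a send on $c$ lying between the two, so $l'\in\mathsf{itm}(\lambda,l_1,l'_1)$. The closure hypothesis gives $(k,k')\prpgStar{{\prec}}(l_1,l')$ or $(k,k')\prpgStar{{\prec}}(l',l'_1)$. We want the first. If instead only the second held, we would use that $a_{l'}\in\lambda'''_2\cup\lambda'''_3$ is placed before $s_{c,l_1}$ but not constrained against it, and we would need to rule this out. \textbf{Case (ii)} is symmetric.

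\textbf{Case (iii).} Here $a_l$ and $a_{l'}$ lie in different blocks. Say $a_l\in\lambda'''_1$, so $s_{c,l_1}\prec a_l$, and $a_{l'}\in\lambda'''_2$, so $a_{l'}\prec s'_{c,l'_1}$. Then $s_{c,l_1}\preceq a_l$ and $a_{l'}\preceq s'_{c,l'_1}$, which is exactly a generating clause of Definition~\ref{def:propagation}, giving $(l_1,l'_1)\prpgStar{{\prec}}(l,l')$. By transitivity, $(k,k')\prpgStar{{\prec}}(l,l')$. The mixed subcases, with one action in $\lambda'''_3$, are subtler, since membership in $\lambda'''_3$ gives no ${\prec}$-link. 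For these I would rely on the intended invariant behind \textproc{Partition}, namely that every intervening communication key is pushed to a definite side. This means that same-channel same-kind actions in $\lambda'''_3$ either do not occur between propagated pairs, or are covered by the closure hypothesis through $\mathsf{itm}$.

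The main obstacle is the $\lambda'''_3$ subcases and the leftover subcase of (i). Plain sequence manipulation does not produce a propagation chain there. My plan is to first strengthen the argument with the closure hypothesis applied to $(l_1,l'_1)$. Where that still fails, I would add a side condition (as provided by Lemma~\ref{lem:partition}) that intervening same-channel keys are ${\prec}$-related to an endpoint, which forces them into $\lambda'''_1$ or $\lambda'''_2$ rather than $\lambda'''_3$.
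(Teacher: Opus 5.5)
Your decomposition of the reversals caused by $\mathsf{swp}$ is the right one, and the cases you close are correct. Moreover, the $\lambda'''_2$ part of (i) and the $\lambda'''_1$ part of (ii) close directly by the first clause of Definition~\ref{def:propagation}, with $s_{c,l_1}\preceq s_{c,l_1}$, resp.\ $s'_{c,l'_1}\preceq s'_{c,l'_1}$; no closure hypothesis is needed there. (Minor slip: $\lambda'''_3$ stays in front of $s'_{c,l'_1}$, so (ii) only concerns $\lambda'''_1$.)

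The genuine gap is every subcase involving $\lambda'''_3$. It cannot be filled, because the statement fails there.

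\emph{Your leftover subcase of (i).} Let $k,k'$ be keys on a channel $d$ and $l_1,m,l'_1$ keys on $c$. Take $\lambda=s_{d,k}\,s_{c,l_1}\,s_{c,m}\,s_{c,l'_1}\,s_{d,k'}$ and ${\prec}=\{(s_{d,k},s_{c,l_1}),(s_{d,k},s_{c,m}),(s_{c,l'_1},s_{d,k'})\}$. This relation is transitive and $\lambda$ is compatible with it. The pairs propagated from $(k,k')$ are exactly $(k,k')$, $(l_1,l'_1)$ and $(m,l'_1)$. The closure hypothesis holds: the only nonempty set is $\mathsf{itm}(\lambda,l_1,l'_1)=\{m\}$, and $(m,l'_1)$ is propagated. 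Yet $s_{c,m}\in\lambda'''_3$, so $\mathsf{swp}_{{\prec}}(\lambda,s_{c,l_1},s_{c,l'_1})=s_{d,k}\,s_{c,m}\,s_{c,l'_1}\,s_{c,l_1}\,s_{d,k'}$. This is not $(l_1,m)$-ordered, although $\lambda$ is and $(k,k')\nprpgStar{{\prec}}(l_1,m)$.

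\emph{The mixed subcases of (iii).} Take $(k,k')=(l_1,l'_1)$ on $c$, keys $a,b$ on another channel $e$, $\lambda=s_{c,l_1}\,s_{e,a}\,s_{e,b}\,s_{c,l'_1}$, and ${\prec}=\{(s_{c,l_1},s_{e,a})\}$ (resp.\ ${\prec}=\{(s_{e,a},s_{c,l'_1})\}$). Only $(l_1,l'_1)$ is propagated and $\mathsf{itm}(\lambda,l_1,l'_1)=\varnothing$. The swap yields $s_{e,b}\,s_{c,l'_1}\,s_{c,l_1}\,s_{e,a}$ (resp.\ $s_{e,b}\,s_{e,a}\,s_{c,l'_1}\,s_{c,l_1}$), which is not $(a,b)$-ordered.

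The paper's proof disposes of all cases in one sentence. It claims that, by the closure hypothesis, any change of order between $l$ and $l'$ forces $(k,k')\prpgStar{{\prec}}(l,l')$. Your analysis isolates exactly where that claim is unsupported, so your suspicion is justified. But the repair you sketch does not close the gap:

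\textbf{It is not supplied by \textsc{Partition}.} Both examples satisfy every conclusion of Lemma~\ref{lem:partition}: compatibility, $\mathsf{valid}_{\prec}(k,k')$, and the closure property. They are also returned unchanged by \textsc{Partition}. Its \textbf{for} loop only treats $\mathsf{itm}(\lambda,k,k')$, which is empty here. Its \textbf{while} loop adds $\mathsf{ord}$ constraints only when closure fails. So an intervening key that satisfies closure through propagation can stay ${\prec}$-unrelated to both endpoints.

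\textbf{It does not reach the cross-channel case.} The second example has no intervening key at all, so no condition phrased through $\mathsf{itm}$ can exclude it.

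What your case analysis needs is a genuinely new hypothesis or a different $\mathsf{swp}$. For example, require that no action of $\lambda'''_3$ has the same kind and channel as $s_{c,l_1}$ or as an action of $\lambda'''_1\cup\lambda'''_2$. Alternatively, place the $\lambda'''_3$-actions according to $\prpgStar{{\prec}}$ rather than by ${\prec}$ alone. Under such a condition the only reversed same-kind, same-channel pairs are $(l_1,l'_1)$, $s_{c,l_1}$ against $\lambda'''_2$, $s'_{c,l'_1}$ against $\lambda'''_1$, and $\lambda'''_1$ against $\lambda'''_2$. You handle all of these. That property would then have to be established for the relation fed to $\mathsf{swp}$ in Lemma~\ref{lem:reorder}.
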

    \begin{proof}
        For a contradiction, suppose that $\lambda$ is $(l,l')$-ordered and
        $(k,k')\nprpgStar{{\prec}}(l,l')$, but that
        $(k,k')\prpgStar{{\prec}}(l_1,l'_1)$ and
        $\mathsf{swp}_{{\prec}}(\lambda,s_{c,l_1},s'_{c,l'_1})$ is not $(l,l')$-ordered.
        By the assumption that, for any $l,l',l''$ such that
        $l''\in\mathsf{itm}(\lambda,l,l')$ and $(k,k')\prpgStar{{\prec}}(l,l')$,
        either $(k,k')\prpgStar{{\prec}}(l,l'')$ or
        $(k,k')\prpgStar{{\prec}}(l'',l')$ holds,
        changing the order of the send or receive actions of $l$ and $l'$ in
        $\mathsf{swp}_{{\prec}}(\lambda,s_{c,l_1},s'_{c,l'_1})$ would imply
        $(k,k')\prpgStar{{\prec}}(l,l')$, a contradiction.
        The argument for $\mathsf{swp}_{{\prec}}(\lambda,r_{c,l_1},r'_{c,l'_1})$ is analogous.
    \end{proof}

    To prove soundness by induction on the backward action sequence,
    we use the following lemma.
    \begin{lemma}\label{lem:reorder}
        Given $\program$, suppose that \og{$\lambda \in \fAct^*$,} \og{$\rho \in (\fAct \cup \bAct)^*$,} $\rmlog{\hat{\Sys}} = \Sys$, $\ad{\Tar} = \hat{\Tar} \genLStar{\rho} \hat{\Sys}$,
        $\Ini^{\program} \genBStar{\lambda} \Sys$, and $\hat{\Sys} \genL{\overline{\alpha}} \hat{\Sys}'$.
        Then there exist $\lambda',\Sys'_1$, and $\hat{\Sys}'_1$ such that
        $\lambda' \alpha$ is a permutation of $\lambda$, $\hat{\Sys}' \syseq \hat{\Sys}'_1$, $\Sys'_1 = \rmlog{\hat{\Sys}'_1}$,
        $\Ini^{\program} \genBStar{\lambda'} \Sys'_1$, and $\hatIni \genLStar{\lambda'} \hat{\Sys}'_1$.
    \end{lemma}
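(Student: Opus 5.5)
Since $\hat{\Sys} \genL{\overline{\alpha}} \hat{\Sys}'$, the action $\alpha$ is among the done actions of $\hat{\Sys}$, so it occurs in $\lambda$ (its history item lies in the process history of $\Sys=\rmlog{\hat{\Sys}}$, and $\lambda$ builds that history). Write $\lambda=\lambda_1\,\alpha\,\lambda_2$. The plan is to construct a permutation $\lambda'\alpha$ of $\lambda$ that is compatible with $\dep{\hat{\Tar}}^+$ and FIFO-consistent. We then obtain a replayable derivation of $\lambda'\alpha$ from $\hatIni$ and, from it, a basic derivation of $\lambda'$. Lemma~\ref{lem:lifting} gives $\hatIni\genLStar{\lambda}\hat{\Sys}$. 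Applying causal safety and liveness (Theorem~\ref{thm:csl-rp}) in the replayable system, any permutation compatible with the action dependency is also executable and reaches a $\syseq$-equivalent configuration. Taking the last step of $\lambda'\alpha$ backward then yields $\hatIni\genLStar{\lambda'}\hat{\Sys}'_1$ with $\hat{\Sys}'_1\syseq\hat{\Sys}'$. Finally, Lemma~\ref{lem:fifo-reachability}, applied to the FIFO-consistent prefix $\lambda'$, gives the basic derivation $\Ini^{\program}\genBStar{\lambda'}\Sys'_1$ with $\Sys'_1=\rmlog{\hat{\Sys}'_1}$, after possibly replacing $\hat{\Sys}'_1$ by a further $\syseq$-equivalent configuration.

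To build $\lambda'$, first treat the case where $\alpha$ is internal or a channel creation. Every action in $\lambda_2$ is independent of $\alpha$: a backward step of $\alpha$ is enabled in $\hat{\Sys}$, so by BTI/safety nothing done after $\alpha$ depends on it. Hence moving $\alpha$ to the end preserves compatibility with $\dep{\hat{\Tar}}^+$, and FIFO-consistency is untouched. The interesting case is $\alpha=s_{c,k}$ or $\alpha=r_{c,k}$. Then $\lambda_2$ may contain later sends (resp.\ receives) $s_{c,k'}$ on the same channel, and pushing $\alpha$ behind them reverses a queue order. Enabledness of $\overline{\alpha}$ up to \brname{LEq} means $k$ is independent in $\hat{\Pi}$ from every key $k'$ on $c$ occurring after it, and from every intervening key $k''\in\mathsf{itm}(\lambda,k,k')$. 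For the last such $k'$ these are exactly the hypotheses of Lemma~\ref{lem:partition}. We run $\textproc{Partition}(\lambda,k,k',\dep{\hat{\Tar}}^+)$ and obtain a transitive ${\prec}$ with $\lambda$ compatible, $\mathsf{valid}_{\prec}(k,k')$, and the side-assignment property.

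Next, iterate $\mathsf{swp}_{\prec}$ over every propagated pair $(l,l')$ with $(k,k')\prpgStar{{\prec}}(l,l')$, swapping both their sends and their receives. Validity guarantees that the side condition $\alpha\not\prec\alpha'$ of Definition~\ref{def:swap} holds, so each step is defined. Lemma~\ref{lem:compatibility} keeps the sequence compatible with ${\prec}\supseteq\dep{\hat{\Tar}}^+$. Lemma~\ref{lem:propagation} ensures no pair is required in both orders. Lemma~\ref{lem:swap} ensures that non-propagated pairs keep their original order. Every same-channel pair therefore ends up $(l,l')$- or $(l',l)$-ordered, and Lemma~\ref{lem:ordering} gives FIFO-consistency. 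Because the for-loop of \textproc{Partition} placed every $k''$ before $k$, the action for $k$ ends last among communications on $c$. The non-channel tail after it is independent of $\alpha$, so we finish by moving $\alpha$ to the very end.

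The main obstacle is the bookkeeping in the iterated swap. We must show that performing all propagated swaps, in some order, terminates in a sequence where every propagated pair is actually reversed, including receive pairs whose sends were already reordered. My plan is to induct on the number of propagated pairs not yet reversed, using Lemma~\ref{lem:swap} as the invariant that already-fixed pairs stay fixed. A second delicate point is justifying that the final message orders agree with $\hat{\Sys}'$ only up to $\qeq{\hat{\Pi}}$. I expect to argue this by noting that every reversed pair consists of $\hat{\Pi}$-independent keys, since dependent keys are $\prec$-ordered and validity forbids reversing them.
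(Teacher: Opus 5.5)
Your proposal is correct and follows essentially the paper's own proof. You run \textproc{Partition} on $(\lambda,k,k',\dep{\hat{\Tar}}^+)$ with $k'$ the key of the last send (resp.\ receive) on $c$, apply $\mathsf{swp}_{\prec}$ to every propagated send and receive pair (the paper's \textproc{Reorder}), use Lemmas~\ref{lem:propagation}, \ref{lem:compatibility}, \ref{lem:swap} and~\ref{lem:ordering} for compatibility and FIFO-consistency, move $\alpha$ to the end, and close with the axiomatic approach and Lemma~\ref{lem:fifo-reachability}. Your extra remarks only make explicit what the paper leaves implicit: causal safety for the non-communication case, $\mathsf{valid}_{\prec}(k,k')$ ensuring each $\mathsf{swp}_{\prec}$ step is defined, and $\alpha$ ending last on $c$. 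That last point holds because the for-loop's constraints turn every $(k,k'')$ with $k''\in\mathsf{itm}(\lambda,k,k')$ into a propagated pair, not because the for-loop orders $k''$ before $k$ directly.
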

    \begin{proof}
        By Lemma~\ref{lem:lifting}, there exists $\hatIni$ such that $\rmlog{\hatIni}=\Ini^{\program}$.
        Letting $\lambda'$ be the sequence obtained from $\lambda$ by removing $\alpha$, we have $\hatIni\genLStar{\lambda'}\hat{\Sys}'$ and $\Ini^{\program}\genBStar{\lambda'}\rmlog{\hat{\Sys}'}$.
        The essential cases are the following:
        \begin{itemize}
            \item $\alpha=r_{c,k}$, and the last receive on $c$ in $\lambda$ is $r'_{c,k'}$ with $k\neq k'$;
            \item $\alpha=s_{c,k}$, and the last send on $c$ in $\lambda$ is $s'_{c,k'}$ with $k\neq k'$.
        \end{itemize}
        We treat the two essential cases uniformly, using $c$, $k$, and $k'$ as specified above.

        By Lemma~\ref{lem:fifo} and $\Ini^{\program} \genBStar{\lambda} \Sys$, $\lambda$ is FIFO-consistent.
        Let $\hat{\Tar}=(\hat{\Pi},\hat{Q})$.
        Since $\hat{\Sys}\genL{\overline{\alpha}}\hat{\Sys}'$,
        $k\centernot\bowtie_{\hat{\Pi}}k'$ and
        $k\centernot\bowtie_{\hat{\Pi}}k''$ for every
        $k''\in\mathsf{itm}(\lambda,k,k')$.
        \og{Let ${\prec}_0=\dep{\hat{\Tar}}^+$, and let
        ${\prec}=\textproc{Partition}(\lambda,k,k',{\prec}_0)$.}
        By Lemma~\ref{lem:partition}, $\lambda$ is compatible with ${\prec}$,
        $\mathsf{valid}_{\prec}(k,k')$ holds, and every key
        of a propagated pair is assigned to one of its two sides.
        By Lemma~\ref{lem:propagation},
        $(k,k')\prpgStar{{\prec}}(l,l')$ implies
        $(k,k')\nprpgStar{{\prec}}(l',l)$.

        The following algorithm, $\textproc{Reorder}$, \og{constructs a permutation $\lambda'$ of $\lambda$ by applying $\mathsf{swp}_{\prec}$, when necessary, to the send and receive action pairs corresponding to each key pair propagated from $(k,k')$ under ${\prec}$}.
        \begin{algorithm}[H]
            \caption{\textproc{Reorder}}
            \label{alg:reorder}
            \begin{algorithmic}[1]
                \Ensure \og{$\lambda'\in\fAct^*$ such that $\lambda'$ is a permutation of $\lambda$}
                \State $\lambda' \gets \lambda$
                \ForAll{$(l,l')$ such that $(k,k')\prpgStar{{\prec}}(l,l')$}
                    \If{there are send actions $s_{c,l},s'_{c,l'}\in\lambda'$ such that $s_{c,l}$ precedes $s'_{c,l'}$}
                        \State $\lambda' \gets
                        \mathsf{swp}_{{\prec}}(\lambda',s_{c,l},s'_{c,l'})$
                    \EndIf
                    \If{there are receive actions $r_{c,l},r'_{c,l'}\in\lambda'$ such that $r_{c,l}$ precedes $r'_{c,l'}$}
                        \State $\lambda' \gets
                        \mathsf{swp}_{{\prec}}(\lambda',r_{c,l},r'_{c,l'})$
                    \EndIf
                \EndFor
                \State \Return $\lambda'$
            \end{algorithmic}
        \end{algorithm}


        \og{Let $\lambda'$ be the output of \textproc{Reorder}.}
        We list the set of pairs $\{(l,l')\mid(k,k')\prpgStar{{\prec}}(l,l')\}$ as
        $(l_1,l'_1),\ldots,(l_n,l'_n)$ in the order in which they are processed by $\textproc{Reorder}$.
        Let $\lambda_0=\lambda$, and for $1\leq i\leq n$, let $\lambda'_i$ be the sequence after processing the send actions in iteration $i$, and let $\lambda_i$ be the sequence after processing the receive actions.
        By Lemma~\ref{lem:compatibility}, each of $\lambda_i$ and $\lambda'_i$ is compatible with ${\prec}$.
        We show by induction that $\lambda_i$ is $(l'_j,l_j)$-ordered for all $j\leq i$.

        \paragraph{Base case}
        For $i=0$, there is no $j$ such that $1\leq j\leq i$, so the claim holds vacuously.

        \paragraph{Induction step}
        Assume that $\lambda_i$ is $(l'_j,l_j)$-ordered for all $j\leq i$.
        Since $(k,k')\prpgStar{{\prec}}(l_j,l'_j)$ implies $(k,k')\nprpgStar{{\prec}}(l'_j,l_j)$ for all $j\leq i$, Lemma~\ref{lem:swap} implies that
        $\lambda_{i+1}$ is $(l'_j,l_j)$-ordered for all $j\leq i$.
        Moreover, in $\lambda'_{i+1}$, we have the order
        $s'_{c,l'_{i+1}},s_{c,l_{i+1}}$, while in $\lambda_{i+1}$, we have the order
        $r'_{c,l'_{i+1}},r_{c,l_{i+1}}$.
        Suppose, for a contradiction, that $s_{c,l_{i+1}}$ precedes
        $s'_{c,l'_{i+1}}$ in $\lambda_{i+1}$.
        Then in $\lambda'_{i+1}$ we would have
        $r_{c,l_{i+1}},s_{c,l_{i+1}},s'_{c,l'_{i+1}},r'_{c,l'_{i+1}}$,
        contradicting the fact that $\lambda'_{i+1}$ is compatible with
        ${\prec}$ and that $\dep{\hat{\Tar}}\subseteq{\prec}$.
        Hence, $\lambda_{i+1}$ is $(l'_{i+1},l_{i+1})$-ordered.

        Therefore, $\lambda_n$ is $(l'_j,l_j)$-ordered for all $j\leq n$.

        If $\lambda$ is $(l,l')$-ordered and $(k,k')\nprpgStar{{\prec}}(l,l')$,
        then Lemma~\ref{lem:swap} implies that $\lambda_n$ is $(l,l')$-ordered.
        Therefore, for any distinct keys $l$ and $l'$ on the same channel that appear in $\lambda_n$,
        either $\lambda_n$ is $(l,l')$-ordered or $\lambda_n$ is $(l',l)$-ordered.
        Lemma~\ref{lem:ordering} yields that $\lambda_n$ is FIFO-consistent.

        Since $\lambda$ contains only finitely many keys, the \textbf{for} loop terminates.
        Thus, we obtain $\lambda_n$ such that $\lambda_n$ is a permutation of $\lambda$, and $\lambda_n$ is FIFO-consistent and compatible with ${\prec}$.
        It follows that $\lambda_n$ is compatible with $\dep{\hat{\Tar}}$, since $\dep{\hat{\Tar}}\subseteq{\prec}$.
        Finally, let $\lambda'\alpha$ be the permutation of $\lambda_n$ obtained by moving $\alpha$ to the end.
        Since $\dep{\hat{\Tar}}\subseteq{\prec}$, $\lambda'$ is FIFO-consistent and compatible with $\dep{\hat{\Tar}}$.
        Applying the axiomatic approach~\cite{10.1145/3648474} to the permutation $\lambda'\alpha$ yields a replayable system configuration $\hat{\Sys}'_1$ such that
        $\hat{\Sys}'\syseq\hat{\Sys}'_1$ and $\hatIni\genLStar{\lambda'}\hat{\Sys}'_1$.
        By Lemma~\ref{lem:fifo-reachability}, there exists $\Sys'_1$ such that
        $\Sys'_1=\rmlog{\hat{\Sys}'_1}$ and $\Ini^{\program}\genBStar{\lambda'}\Sys'_1$.

    \end{proof}

    \subsection{Proof of Soundness}

    We show the soundness of the replayable system.
    In the induction step, we use Lemma~\ref{lem:reorder} to preserve the induction invariant after each backward transition.
    \setcounter{theorem}{\numexpr\getrefnumber{thm:soundness}-1\relax}
    \begin{theorem}[Soundness]
        \label{thm:soundness-app}
        Given $\program$, suppose that $\Tar$ is reachable from $\program$, $\rho \in (\fAct\cup\bAct)^*$, and $\ad{\Tar} \genLStar{\rho} \hat{\Sys}$.
        Then there exists $\hat{\Sys}'$ such that $\hat{\Sys}' \syseq \hat{\Sys}$ and $\rmlog{\hat{\Sys}'}$ is reachable from $\program$.
    \end{theorem}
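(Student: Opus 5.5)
The plan is to combine Lemma~\ref{lem:normalization} with an induction on the length of a purely backward derivation, using Lemma~\ref{lem:reorder} as the induction step.

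\emph{Step 1: normalisation.} By Lemma~\ref{lem:normalization}, there is $\rho'=\overline{\alpha_1}\cdots\overline{\alpha_n}\in\bAct^*$ with
\[
\ad{\Tar}=\hat{\Sys}_0\genL{\overline{\alpha_1}}\hat{\Sys}_1\cdots\genL{\overline{\alpha_n}}\hat{\Sys}_n=\hat{\Sys}.
\]
Since $\Tar$ is reachable from $\program$, fix $\lambda\in\fAct^*$ with $\Ini^{\program}\genBStar{\lambda}\Tar$. Lemma~\ref{lem:lifting} then yields $\hatIni$ with $\rmlog{\hatIni}=\Ini^{\program}$ and $\hatIni\genLStar{\lambda}\ad{\Tar}$.

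\emph{Step 2: the invariant.} I will prove by induction on $i\le n$ that there exist $\lambda_i$, $\Sys'_i$ and $\hat{\Sys}'_i$ such that
\begin{itemize}
\item $\lambda_i\alpha_i\cdots\alpha_1$ is a permutation of $\lambda$;
\item $\Ini^{\program}\genBStar{\lambda_i}\Sys'_i$ and $\hatIni\genLStar{\lambda_i}\hat{\Sys}'_i$;
\item $\hat{\Sys}'_i\syseq\hat{\Sys}_i$ and $\rmlog{\hat{\Sys}'_i}=\Sys'_i$.
\end{itemize}
The base case takes $\lambda_0=\lambda$, $\Sys'_0=\Tar$ and $\hat{\Sys}'_0=\ad{\Tar}$; the last condition holds because $\rmlog{\ad{\Tar}}=\Tar$.

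\emph{Step 3: the inductive step.} We have $\hat{\Sys}_i\genL{\overline{\alpha_{i+1}}}\hat{\Sys}_{i+1}$, but Lemma~\ref{lem:reorder} needs a backward step out of $\hat{\Sys}'_i$ itself. Since $\hat{\Sys}'_i\syseq\hat{\Sys}_i$, I will obtain $\hat{\Sys}'_i\genL{\overline{\alpha_{i+1}}}\hat{\Sys}''$ with $\hat{\Sys}''\syseq\hat{\Sys}_{i+1}$ by using \brname{LEq}, which absorbs $\syseq$ on either side of a transition. Note that $\syseq$ relates configurations with the same $\hat{\Pi}$, and $\act{}$ is invariant under transitions, so the independence relation used in $\qeq{}$ does not change along the derivation. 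Lemma~\ref{lem:reorder} is stated for configurations reachable from $\hat{\Tar}$, so I also need $\hat{\Sys}'_i$ to be reachable from $\ad{\Tar}$. I get this by reversing the \brname{LEq}-using steps: $\hat{\Sys}_i$ is reachable, and an \brname{LEq} step relates it to $\hat{\Sys}'_i$.

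I then apply Lemma~\ref{lem:reorder} with $\lambda_i$, $\hat{\Sys}'_i$ and $\alpha_{i+1}$. This gives $\lambda_{i+1}$ such that $\lambda_{i+1}\alpha_{i+1}$ is a permutation of $\lambda_i$, together with $\hat{\Sys}'_{i+1}\syseq\hat{\Sys}''$ and $\Sys'_{i+1}=\rmlog{\hat{\Sys}'_{i+1}}$, where $\Ini^{\program}\genBStar{\lambda_{i+1}}\Sys'_{i+1}$ and $\hatIni\genLStar{\lambda_{i+1}}\hat{\Sys}'_{i+1}$. Transitivity of $\syseq$ gives $\hat{\Sys}'_{i+1}\syseq\hat{\Sys}_{i+1}$, and the permutation property extends, which closes the induction.

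\emph{Step 4: conclusion.} At $i=n$, take $\hat{\Sys}'=\hat{\Sys}'_n$. Then $\hat{\Sys}'\syseq\hat{\Sys}$, and $\rmlog{\hat{\Sys}'}=\Sys'_n$ is reachable from $\program$ via $\lambda_n$.

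The main obstacle is the transfer in Step 3. The backward step is given from $\hat{\Sys}_i$, but the reachability witness is attached to the equivalent configuration $\hat{\Sys}'_i$. This requires checking that backward enabledness, and in particular the applicability of \brname{LEq}, commutes with $\syseq$ up to $\syseq$. If this commutation fails directly, the alternative is to strengthen the invariant so that $\hat{\Sys}'_i$ is itself reachable from $\ad{\Tar}$ by a derivation. I would establish that using the Loop lemma (Lemma~\ref{lem:loop-rp}) together with the fact that \frname{LEq} and \brname{LEq} are symmetric.
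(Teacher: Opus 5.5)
Your proposal is correct and follows the paper's proof essentially step for step. Like the paper, you normalise to a purely backward derivation via Lemma~\ref{lem:normalization}, lift the forward witness $\lambda$ with Lemma~\ref{lem:lifting}, and carry the same invariant through an induction whose step transfers the backward move to $\hat{\Sys}'_i$ by \brname{LEq} and then applies Lemma~\ref{lem:reorder}. Your explicit check that $\hat{\Sys}'_i$ is itself reachable from $\ad{\Tar}$, a hypothesis of Lemma~\ref{lem:reorder}, is a detail the paper leaves implicit, and your \brname{LEq}-based justification is the natural way to supply it.
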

    \begin{proof}
        By Lemma~\ref{lem:normalization}, there exists $\rho' \in \bAct^*$ such that $\ad{\Tar} \genLStar{\rho'} \hat{\Sys}$.
        Since $\Tar$ is reachable from $\program$, there is $\lambda$ such that $\Ini^{\program} \genBStar{\lambda} \Tar$.
        By Lemma~\ref{lem:lifting},
        there exists $\hatIni$ such that $\rmlog{\hatIni} = \Ini^{\program}$ and $\hatIni \genLStar{\lambda} \ad{\Tar}$.
        Let $\rho' = \overline{\alpha}_1 \cdots \overline{\alpha}_n$ and
        $\ad{\Tar} = \hat{\Sys}_0 \genL{\overline{\alpha_1}} \hat{\Sys}_1 \genL{\overline{\alpha_2}} \cdots \genL{\overline{\alpha_n}} \hat{\Sys}_n = \hat{\Sys}$.
        We show by induction that there are $\lambda_i$, $\hat{\Sys}'_i$, and $\Sys'_i$
        such that $\lambda_i \alpha_i\cdots\alpha_1$ is a permutation of $\lambda$,
        $\hat{\Sys}'_i \syseq \hat{\Sys}_i$, $\rmlog{\hat{\Sys}'_i} = \Sys'_i$,
        $\hatIni \genLStar{\lambda_i} \hat{\Sys}'_i$, and $\Ini^{\program} \genBStar{\lambda_i} \Sys'_i$.

        \paragraph{Base case}
        Let $\lambda_0 = \lambda$, $\hat{\Sys}'_0 = \ad{\Tar}$, and $\Sys'_0 = \Tar$.
        Then, $\lambda_0$ is a permutation of $\lambda$, $\hat{\Sys}'_0 \syseq \hat{\Sys}_0$, $\rmlog{\hat{\Sys}'_0} = \Sys'_0$,
        $\hatIni \genLStar{\lambda_0} \hat{\Sys}'_0$, and $\Ini^{\program} \genBStar{\lambda_0} \Sys'_0$.

        \paragraph{Induction step}
        Suppose that $\hat{\Sys}_i \genL{\overline{\alpha}_{i+1}} \hat{\Sys}_{i+1}$ and
        that there are $\lambda_i$, $\hat{\Sys}'_i$, and $\Sys'_i$ such that
        $\lambda_i \alpha_i\cdots\alpha_1$ is a permutation of $\lambda$,
        $\hat{\Sys}'_i \syseq \hat{\Sys}_i$, $\rmlog{\hat{\Sys}'_i} = \Sys'_i$,
        $\hatIni \genLStar{\lambda_i} \hat{\Sys}'_i$, and $\Ini^{\program} \genBStar{\lambda_i} \Sys'_i$.
        Then $\hat{\Sys}'_i \genL{\overline{\alpha}_{i+1}} \hat{\Sys}_{i+1}$ follows by \brname{LEq}.
        By Lemma~\ref{lem:reorder},
        there are $\lambda_{i+1}$, $\hat{\Sys}'_{i+1}$, and $\Sys'_{i+1}$ such that
        $\lambda_{i+1} \alpha_{i+1} \alpha_i \cdots \alpha_1$ is a permutation of $\lambda$,
        $\hat{\Sys}'_{i+1} \syseq \hat{\Sys}_{i+1}$, $\rmlog{\hat{\Sys}'_{i+1}} = \Sys'_{i+1}$,
        $\hatIni \genLStar{\lambda_{i+1}} \hat{\Sys}'_{i+1}$, and $\Ini^{\program} \genBStar{\lambda_{i+1}} \Sys'_{i+1}$.

        Therefore, there are $\lambda_n$, $\hat{\Sys}'_n$, and $\Sys'_n$ such that
        $\lambda_n \alpha_n\cdots\alpha_1$ is a permutation of $\lambda$,
        $\hat{\Sys}'_n \syseq \hat{\Sys}$, $\rmlog{\hat{\Sys}'_n} = \Sys'_n$,
        $\hatIni \genLStar{\lambda_n} \hat{\Sys}'_n$, and $\Ini^{\program} \genBStar{\lambda_n} \Sys'_n$.
        Then, $\rmlog{\hat{\Sys}'_n}$ is reachable from $\program$.
    \end{proof}

\end{document}